\providecommand{\U}[1]{\protect\rule{.1in}{.1in}}
\newtheorem{theorem}{Theorem}
\newtheorem{definition}{Definition}
\newtheorem{example}{Example}
\newtheorem{proposition}{Proposition}
\newtheorem{remark}{Remark}
\newenvironment{proof}[1][Proof]{\textbf{#1.} }{\ \rule{0.5em}{0.5em}}
\begin{document}

\author{Jos\'e Manuel Corcuera\thanks{Universitat de Barcelona, Gran Via de les Corts
Catalanes, 585, E-08007 Barcelona, Spain. \texttt{E-mail: } jmcorcuera@ub.edu.
The work of J. M. Corcuera is supported by the Spanish grant PID
118339GB-I00.} \hspace{1mm} and \hspace{1mm} Giulia Di Nunno\thanks{University
of Oslo, Department of Mathematics, P.O. Box 1053 Blindern NO-0316 Oslo, and
NHH, Department of Business and Management Science, Helleveien 30, 5045
Bergen, Norway. \texttt{E-mail: }giulian@math.uio.no. This work has received
support from the Research Council of Norway via the project \textit{STORM:
Stochastics for Time-Space Risk Models} (nr. 274410).} }
\title{Path-dependent Kyle equilibrium model.}
\date{July 26, 2024}
\maketitle

\begin{abstract}
\noindent We consider an auction type equilibrium model with an insider in
line with the one originally introduced by Kyle in 1985 and then extended to
the continuous time setting by Back in 1992. The novelty introduced with this
paper is that we deal with a general price functional depending on the
\textit{whole} past of the aggregate demand, i.e. we work with
path-dependency. By using the functional It\^o calculus, we provide necessary
and sufficient conditions for the existence of an equilibrium. Furthermore, we
consider both the cases of a risk-neutral and a risk-averse insider.

\vspace{1mm} \noindent\textbf{Key words}: Kyle model, market microstructure,
equilibrium, insider trading, stochastic control, semimartingales, functional
It\^o calculus.

\noindent\textbf{JEL-Classification}{\ C61$\cdot$ D43$\cdot$ D44$\cdot$
D53$\cdot$ G11$\cdot$ G12$\cdot$ G14}

\noindent\textbf{MS-Classification 2020}: 60G35, 62M20, 91B50, 93E03

\end{abstract}

\vspace{-7mm}

\section{Introduction}

It is well known that insider information and informational asymmetries are
everywhere in the real economy. In his pioneering work, \citeA{kyle85}
constructed a model in a discrete time setting with market makers, uninformed
traders and one insider, who knows the fundamental value of an asset at a
certain fixed released time. Also, the model included a price functional
relating market prices and the total demand. \citeA{back92}, extends Kyle's
model to the continuous time case. Since these worked appeared, several
generalisations and extensions have been produced. To mention some,
\citeA{bacped98}, who consider a \emph{dynamic} fundamental price and Gaussian
noises with time varying volatility; \citeA{cho03}\ who considers pricing
functions depending on the path of the demand process and also studies the
case when the informed trader is risk-averse; \citeA{lasserre04}, who
considers a multivariate setting; \citeA{bacbar04}, where the market depth
(i.e. the marginal effect on price of the volume traded) depends on the market
price of the stock; \citeA{aabjok12a,aabjok12b}, who put emphasis on filtering
techniques to solve the equilibrium problem; \citeA{camcet07}, who consider a
defaultable bond instead of a stock as in the Kyle-Back model and also
consider the knowledge of the default time as the insider's privileged
information; \citeA{Danilova10}, who deals with non-regular pricing rules;
\citeA{calsta10} who take a random release time into account;
\citeA{caluda13}, who consider again a defaultable bond, but this time they
consider the privileged information to be represented by some dynamic signal
related with the default time; and \citeA{colfos16} where the market depth
depends on the (random) volatility of the noise in the market. In
\citeA{cordin18}, the authors propose a general framework to include all the
particular extensions mentioned above and study the general characteristics of
the equilibria. Recently \citeA{codifa19} have also considered the same
general situation, but with a random price pressure and a random release time
of information.

In this paper we propose a step even further and we consider a price
functional that depends on the whole path of the aggregate demand. We study
the properties of the equilibrium and sufficient and necessary conditions for
the existence of equilibria. Also we study both the case of a risk-neutral and
risk-averse insider. With this work we extend substantially the present
frontiers of the literature on this theme. We also note that the analysis of
price functional of these type will involve the recently introduced functional
calculus, see e.g. \citeA{confou13}, and this work represents a good venture
to see these new mathematical techniques applied in economics and finance.

The paper is structured as follows. In the next section we describe the model
and we define the equilibrium for the admissible strategies. Section 3
presents some needed background material from functional It\^o calculus. In
the Section 4, we also suggest some general results that allow to reduce the
set of admissible strategies on which the insider can find its optimal
performance and hence describe the equilibrium. Also, we give necessary and
sufficient conditions to obtain an equilibrium under very general classes of
pricing rules. In this section, we consider the two cases of a risk-neutral
and a risk adverse insider. Section 5 is devoted to the study of necessary
conditions for an equilibrium, without fixing a priori, up to smoothness
conditions, the set of pricing rules. We observe that these latter results
motivate and justify those restrictions imposed on the classes of pricing
rules considered in the study of Section 4. The last two sections are
dedicated to examples of classes of pricing rules and examples of equilibrium
models, correspondingly.

\section{The model and equilibrium}

We consider a market with two assets, a stock and a bank account with interest
rate $r$ equal to zero for the sake of simplicity. The trading is continuous
in time over the period $[0,\infty)$ and it is order driven. There is a
(possibly random) release time of information $\ \tau<\infty$ a.s., when the
fundamental value of the stock is revealed. The \textit{fundamental value
process} represents the actual value of the asset, which would be the same as
the \textit{market price} of the asset only if \emph{all} the information was
public. We could say, with \citeA{malkiel11}, that the fundamental value is
the intrinsic value of a stock, via an analysis of the balance sheet, the
expected future dividends, and the growth prospects of a company. The
fundamental value process is denoted by $V$.

We shall denote the market price of the stock at time $t$ by $P_{t}.$ This
represents the market evaluation of the asset. Just after the revelation time
$\tau,$ the price of the stock coincides with the fundamental value. Then we
consider $P_{t}$ \ defined only on $t\leq\tau$. Obviously, it is possible that
$P_{t}\neq V_{t}$ for $t\leq\tau$.

We assume that all the random variables and processes mentioned are defined in
the same {complete} filtered probability space $(\Omega,\mathcal{F}%
,\mathbb{H},\mathbb{P})$ where the filtration $\mathbb{H}$ and any other
filtration considered in this present work are {complete and right-continuous
}by taking, when necessary, the usual augmentation.

There are three kinds of traders. A number of \textit{liquidity traders}, who
trade for liquidity or hedging reasons, an \textit{informed trader} or
\textit{insider}, who has privileged information about the firm and can deduce
its fundamental value, and the \textit{market makers}, who set the market
price and clear the market.

\subsection{The agents and the equilibrium\label{Agents}}

At time $t$, the insider's information is the full information $\mathcal{H}%
_{t}$ and her flow of information is represented by the filtration
$\mathbb{H}=\left(  \mathcal{H}_{t}\right)  _{t\geq0}$. Since this is also the
filtration with respect to which all the processes considered in the present
work are adapted, we shall omit to write it in the notation. A random release
time of information $\tau$ is considered from insider's perspective to be of
one of these types:

\begin{itemize}
\item $\tau$ it is bounded and predictable,

\item $\tau$ it is not a predictable stopping time, but it is independent of
the observable variables.
\end{itemize}

We assume that the fundamental value $V$ is a continuous martingale such that
$\sigma_{V}^{2}(t):=\frac{\mathrm{d}[V,V]_{t}}{\mathrm{d}t}$ is well defined.

Hereafter we describe in detail the three types of agents involved in this
market model, namely their role, their demand process, and their information.
Let $Z$ be the \emph{aggregate} demand process of the liquidity traders. We
recall that there are a large number of traders motivated by liquidity or
hedging reasons. They are perceived by the insider as constituting noise in
the market, thus also called \emph{noise} traders. It is assumed that $Z$ is a
\emph{continuous} martingale, starting at zero, independent of $V$, and such
that $\sigma_{Z}^{2}(t):=\frac{\mathrm{d}[Z,Z]_{t}}{\mathrm{d}t}$ is well
defined. As it is shown in \citeA{corcuera2010kyle}, if $Z$ had jumps, an
equilibrium would not be possible.

\begin{remark}
In this equilibrium model, the time $\tau$ and the processes $V$ and $Z$ are
exogenously given.
\end{remark}

Market makers clear the market giving the market prices. They rely on the
information given by the total aggregate demand $Y$, which they observe, and
the release time $\tau$, that is a stopping time for them. Hence, their
information flow is: $\mathbb{F}=\left(  \mathcal{F}_{t}\right)  _{t\geq0}$,
where $\mathcal{F}_{t}=\bar{\sigma}(Y_{s},\tau\wedge s,0\leq s\leq t)$. \ Here
$\bar{\sigma}$ denotes the $\sigma$-field corresponding to the usual
augmentation of the natural filtration.

The total aggregate demand is defined as $Y:=X+Z$, where $X$ denotes the
insider demand process, which is naturally assumed to be a predictable process
and also a càdlàg semimartingale:
\begin{align*}
\text{\textbf{(A1)}}\qquad &  X_{t}=M_{t}+A_{t}+\int_{0}^{t}\theta
_{s}\mathrm{d}s,\quad t\geq0,\\
&  \text{where }M\text{ is a continuous martingale with }M_{0}=0\text{, }\\
&  A\text{ a bounded variation}\text{ predictable process with }A_{t}%
=\sum_{0<s\leq t}\left(  X_{s}-X_{s-}\right)  \text{ and }A_{0}=0,\\
&  \theta\text{ is a càdlàg adapted process.}%
\end{align*}
Strategies $X$ satisfying \textbf{(A.1)} are called \emph{admissible}. Market
makers provide liquidity and fix the market prices $P_{t}$, for all $t$, based
on the total demand $Y$, resulting in the functionals:
\[
P_{t}=P_{t}(Y_{s},0\leq s\leq t),\quad0\leq t\leq\tau.
\]
It is natural to assume that prices are strictly increasing with the total
demand $Y$. We shall precise this condition in the next section.

From the economic point of view, due to the competition among market makers,
the market prices are \emph{competitive}, in the sense that
\begin{equation}
\label{competitive}P_{t}=\mathbb{E}(V_{t}|\mathcal{F}_{t}),\quad0\leq
t\leq\tau.
\end{equation}
Therefore $\left(  P_{t}\right)  _{0\leq t\leq\tau}$ is an $\mathbb{F}$-martingale.

\begin{definition}
The couple $(P,X)$ is an \emph{equilibrium} if market prices admit a pricing
rule (i.e. a functional of $Y$), that we shall name \emph{equilibrium pricing
rule},
\[
P_{t}=P_{t}(Y_{s},0\leq s\leq t),\quad0\leq t\leq\tau.
\]
such that, at the same time, the market prices $P$ are competitive given $X$,
i.e.
\[
P_{t}=\mathbb{E}(V_{t}|\mathcal{F}_{t}),\quad0\leq t\leq\tau,
\]
and the strategy $X$ is optimal for the insider given the prices $P$.
\end{definition}

Now we have to make precise what an optimal strategy for the insider is. The
informed trader aims at maximizing the expected final utility of her wealth.
Let $W$ be the wealth process corresponding to the insider's portfolio $X$. To
obtain the formula for the insider's wealth assume that trades occur at times
$0\leq t_{1}\leq t_{2}\leq...\leq t_{N}=\tau.$ If at time $t_{i-1}$ there is
an order to buy $X_{t_{i}}-X_{t_{i-1}}$ shares, its \emph{cost} will be
$P_{t_{i}}\times(X_{t_{i}}-X_{t_{i-1}})$, so there is a change in the
insider's bank account given by
\[
-P_{t_{i}}\times(X_{t_{i}}-X_{t_{i-1}})=-P_{t_{i-1}}\times(X_{t_{i}%
}-X_{t_{i-1}})-\left(  P_{t_{i}}-P_{t_{i-1}}\right)  \times(X_{t_{i}%
}-X_{t_{i-1}}),
\]
where the second term in the right-hand side accounts for the impact of the
demand on the current price. Due to the fact that the price of the asset
equals its fundamental value at the release time $\tau$, there is, in
addition, the extra income $X_{\tau}V_{\tau}$. Then the total wealth at $\tau$
is given by
\[
W_{\tau}=-\sum_{i=1}^{N}P_{t_{i-1}}\times(X_{t_{i}}-X_{t_{i-1}})-\sum
_{i=1}^{N}\left(  P_{t_{i}}-P_{t_{i-1}}\right)  \times(X_{t_{i}}-X_{t_{i-1}%
})+X_{\tau}V_{\tau},
\]
so taking the limit with the time between trades going to zero, we have
\begin{equation}
W_{\tau}=-\int_{0}^{\tau}P_{t-}\mathrm{d}X_{t}-[P,X]_{\tau}+X_{\tau}V_{\tau
}\nonumber
\end{equation}
where (here and throughout the whole article) $P_{t-}:=\lim_{s\uparrow t}%
P_{s}$ a.s.

Then the informed trader aims at maximizing
\begin{equation}
\mathbb{E}(\left.  U\left(  W_{\tau}\right)  \right\vert \mathcal{H}%
_{0})=\mathbb{E}\left(  \left.  U\left(  -\int_{0}^{\tau}P_{t-}\mathrm{d}%
X_{t}-[P,X]_{\tau}+X_{\tau}V_{\tau}\right)  \right\vert \mathcal{H}%
_{0}\right)  \label{Wnew}%
\end{equation}
for a given utility function $U$, that is, a strictly increasing and concave
function satisfying the Inada conditions. The case when $U$ is the identity
function corresponds to the so called \emph{risk-neutral} case. The insider's
strategy $X$ of type \textbf{(A.1)} providing the maximum is called
\emph{optimal}.

\section{Regularity of the functionals. The functional It\^o formula}

Trading is developed in the context of \emph{imperfect competition}, in the
sense that prices are affected by the demand, that is $P_{t}=P_{t}(Y_{s},0\leq
s\leq t).$ Here and in the sequel, we shall write $Y_{\cdot t}$ to indicate
the path of the process $Y$ from zero to $t$:%
\[
Y_{\cdot t}(s):=Y_{s},\quad0\leq s\leq t.
\]
Notice that we also can look at $Y_{\cdot t}$ as the process $Y$ stopped at
$t$, in such a way that
\[
Y_{\cdot t}(s):=\left\{
\begin{array}
[c]{cc}%
Y_{s} & \text{for }0\leq s\leq t\\
Y_{t} & \text{for }t\leq s\leq\tau
\end{array}
\right.
\]
Therefore we can write, alternatively, $P_{t}=P_{t}(Y_{\cdot t})$ and to
consider $P_{t}$ as a functional of the process $Y$ stopped at $t$. We shall
also add some regularity on the functionals we are going to consider when
needed. We shall also consider the following perturbation of a process $Y$.
For $h\in\mathbb{R}$, we define
\[
Y_{\cdot t}^{h}(s):=\left\{
\begin{array}
[c]{cc}%
Y_{s} & \text{for }0\leq s<t\\
Y_{t}+h & \text{for }t\leq s\leq\tau
\end{array}
\right.  .
\]

\bigskip We are going to consider \emph{non-anticipative} functionals
$F:\Lambda\subseteq\lbrack0,T]\times D\left(  [0,T],\mathbb{R}\right)
\rightarrow\mathbb{R}$ where%
\[
\Lambda=\left\{  \left(  t,y_{.t}\right)  ,t\in\lbrack0,T],y_{\cdot
t}(s)=y_{s\wedge t},s\in\lbrack0,T]\right\}
\]
in such a way that if $F(t,\cdot)\left(  \equiv F_{t}\left(  \cdot\right)
\right)  $ is a measurable map (with the canonical filtration in $D\left(
[0,T],\mathbb{R}\right)  $) then $F_{t}\left(  Y_{\cdot t}\right)  $ is an
adapted process.

In this context we can define the \emph{horizontal} derivative
(\citeA{dupire2009functional})
\[
\mathcal{D}_{t}F_{t}\left(  y_{\cdot t}\right)  =\lim_{\Delta t\downarrow
0}\frac{F_{t+\Delta t}(y_{\cdot t})-F_{t}(y_{\cdot t})}{\Delta t}.
\]

\begin{example}
Consider $G,F$ smooth \textbf{functions} and $h,f$ integrable
\textbf{functions} and $Y$ \ an Itô process
\begin{align*}
\mathcal{D}_{t}G(Y_{t})  &  =0,\mathcal{D}_{t}F(t,Y_{t})=\partial_{t}%
F(t,Y_{t})\\
\mathcal{D}_{t}\int_{0}^{t}h(Y_{s})\mathrm{d}s  &  =h(Y_{t}),\text{
}\mathcal{D}_{t}\left(  \int_{0}^{t}f(s,Y_{s})\mathrm{d}Y_{s}\right)  =?.
\end{align*}
$\mathcal{D}_{t}\left(  \int_{0}^{t}f(s,Y_{s})\mathrm{d}Y_{s}\right)  =?$ Let
$F(t,y)\in C^{1,2}$, with $f(t,y)=\partial_{y}F(t,y),$ by the classical Itô
formula,%
\begin{align*}
\int_{0}^{t}f(s,Y_{s})\mathrm{d}Y_{s}  &  =F(t,Y_{t})-F(0,Y_{0})-\int_{0}%
^{t}\partial_{s}F(s,Y_{s})\mathrm{d}s\\
&  -\frac{1}{2}\int_{0}^{t}\partial_{y}f(s,Y_{s})\mathrm{d}[Y,Y]_{s}.
\end{align*}
Assume that $\mathrm{d}[Y,Y]_{t}=A_{t}\mathrm{d}t,$then we are tempted to
write
\[
\mathcal{D}_{t}\left(  \int_{0}^{t}f(s,Y_{s})\mathrm{d}X_{s}\right)
=-\frac{1}{2}\partial_{y}f(t,Y_{t})A_{t}.
\]
However
\[
\int_{0}^{t+\Delta t}f(s,Y_{s\wedge t})\mathrm{d}Y_{s\wedge t}=\int_{0}%
^{t}f(s,Y_{s})\mathrm{d}Y_{s},
\]
so\ $\mathcal{D}_{t}\left(  \int_{0}^{t}f(s,Y_{s})\mathrm{d}Y_{s}\right)  =0$
if we apply the previous definition!
\end{example}

We can also define the \emph{vertical} derivative
\[
\mathcal{\nabla}_{x}F_{t}\left(  y_{\cdot t}\right)  :=\lim_{h\rightarrow
0}\frac{F_{t}\left(  y_{\cdot t}^{h}\right)  -F_{t}\left(  y_{\cdot t}\right)
}{h},
\]

\begin{example}%
\begin{align*}
\mathcal{\nabla}_{x}G(Y_{t})  &  =\partial_{y}G(Y_{t}),\mathcal{\nabla}%
_{x}F(t,Y_{t})=\partial_{y}F(t,Y_{t})\\
\mathcal{\nabla}_{y}\int_{0}^{t}h(Y_{s})\mathrm{d}s  &  =0,\text{
}\mathcal{\nabla}_{y}\left(  \int_{0}^{t}f(s,Y_{s})\mathrm{d}Y_{s}\right)  =?.
\end{align*}
$\mathcal{\nabla}_{y}\left(  \int_{0}^{t}f(s,Y_{s})\mathrm{d}Y_{s}\right)  =?$
As above,
\begin{align*}
\int_{0}^{t}f(s,Y_{s})\mathrm{d}Y_{s}  &  =F(t,Y_{t})-F(0,Y_{0})-\int_{0}%
^{t}\partial_{s}F(s,Y_{s})\mathrm{d}s\\
&  -\frac{1}{2}\int_{0}^{t}\partial_{y}f(s,Y_{s})A_{s}\mathrm{d}s,
\end{align*}
then
\[
\mathcal{\nabla}_{y}\left(  \int_{0}^{t}f(s,Y_{s})\mathrm{d}Y_{s}\right)
=\partial_{y}F(t,Y_{t})=f(t,Y_{t}).
\]

\end{example}

These derivatives satisfy the usual properties: linearity, product rule and
chain rule. However, in general, they do not commute. Set%
\[
\mathcal{L}_{t}:=\mathcal{D}_{t}\bigtriangledown_{y}-\bigtriangledown
_{y}\mathcal{D}_{t},
\]

\begin{example}%
\begin{align*}
\mathcal{L}_{t}\left(  \int_{0}^{t}f(s,Y_{s})\mathrm{d}s\right)   &
=\mathcal{D}_{t}\bigtriangledown_{y}\left(  \int_{0}^{t}f(s,Y_{s}%
)\mathrm{d}s\right) \\
&  -\bigtriangledown_{y}\mathcal{D}_{t}\left(  \int_{0}^{t}f(s,Y_{s}%
)\mathrm{d}s\right) \\
&  =-\bigtriangledown_{y}f(t,Y_{t})
\end{align*}%
\begin{align*}
\mathcal{L}_{t}\left(  \int_{0}^{t}f(s,Y_{s})\mathrm{d}Y_{s}\right)   &
=\left(  \mathcal{D}_{t}\bigtriangledown_{y}-\bigtriangledown_{y}%
\mathcal{D}_{t}\right)  \left(  \int_{0}^{t}f(s,Y_{s})\mathrm{d}Y_{s}\right)
\\
&  =\mathcal{D}_{t}\bigtriangledown_{y}\left(  \int_{0}^{t}f(s,Y_{s}%
)\mathrm{d}Y_{s}\right) \\
&  -\bigtriangledown_{y}\mathcal{D}_{t}\left(  \int_{0}^{t}f(s,Y_{s}%
)\mathrm{d}Y_{s}\right) \\
&  =\mathcal{D}_{t}f(t,Y_{t})+\frac{1}{2}\bigtriangledown_{y}^{2}%
f(t,Y_{t})A_{t}\neq0
\end{align*}
if $\left(  f(t,Y_{t})\right)  _{t\geq0}$ is not a local martingale.\bigskip
\end{example}

Given two stopped processes $Y_{.t},Z_{.t^{\prime}}$ we consider the distance
defined by
\[
d_{\infty}(Y_{\cdot t},Z_{\cdot t^{\prime}})=\left\Vert Y_{\cdot t}-Z_{\cdot
t^{\prime}}\right\Vert _{\infty}+|t-t^{\prime}|.
\]
where $\left\Vert \cdot\right\Vert _{\infty}$ is the sup-norm$.$

\begin{definition}
A non-anticipative functional $P$ is said to be \emph{left-continuous} at $t$
if for all $\varepsilon>0$ there exists $\eta>0$ such that for all $0\leq
t^{\prime}\leq t\leq\tau$%
\[
d_{\infty}(Y_{\cdot t},Z_{\cdot t^{\prime}})<\eta\Longrightarrow\left\vert
P_{t}(Y_{\cdot t})-P_{t^{\prime}}(Z_{\cdot t^{\prime}})\right\vert
<\varepsilon
\]

\end{definition}

$P$ is said to be \emph{left-continuous} if it is left-continuous at any
$\left(  t,Y_{\cdot t}\right)  $. \emph{Right-continuity} is defined
analogously. \emph{Continuity} means that left- and right-continuity occur at
the same time. If, in the previous definition, we consider only times
$t^{\prime}=$ $t$ then we say that the functional is said to be
\emph{continuous at fixed times}.

Since the space of c\`adl\`ag functions is not separable under the sup-norm,
we need the following additional regularity, even for the continuous
functionals defined above.

\begin{definition}
A functional $P$ is said to be \emph{boundedness preserving} if for every
constant $K\ $and $t_{0}\leq T$ there exists a constant $C_{K,t_{0}}$ such
that for all $t\leq t_{0}\leq T,$with $\left\Vert Y_{\cdot t}\right\Vert
_{\infty}<K$%
\[
\quad\left\vert P_{t}(Y_{\cdot t})\right\vert <C_{K,t_{0}}.
\]

\end{definition}

\begin{definition}
We say that a left-continuous functional belongs to $\mathbb{C}_{b}^{j,k}$ if
it is $j$-times horizontally differentiable with derivatives continuous at
fixed points and boundedness preserving, and it is $k$-times vertically
differentiable with left-continuous and boundedness preserving derivatives.
\end{definition}

\begin{theorem}
\label{FIT copy(1)}\emph{(Functional Itô's formula (Cont-Fournié))}. If $Y$ is
a càdllàg semimartingale and $P$ $\in\mathbb{C}_{b}^{1,2}$ then%
\begin{align*}
&  P_{t}(Y_{t})\\
&  =P_{0}(Y_{0})+\int_{0}^{t}\mathcal{D}_{s}P_{s}(Y_{\cdot s})\mathrm{d}%
s+\int_{0}^{t}\mathcal{\nabla}_{Y}P_{s}(Y_{\cdot s})\mathrm{d}Y_{s}\\
&  +\frac{1}{2}\int_{0}^{t}\mathcal{\nabla}_{Y}^{2}P_{s}(Y_{\cdot
s})\mathrm{d}[Y^{c},Y^{c}]_{s}\\
&  +\sum_{\substack{0<u<t\\\Delta Y\neq0}}\left(  P_{u}(Y_{\cdot u}%
)-P_{u}(Y_{\cdot u-})-\mathcal{\nabla}_{Y}P_{u-}\Delta Y_{u}\right)  .
\end{align*}
\bigskip
\end{theorem}

A first implication of these results is the following proposition.

\begin{proposition}
Assume that the functional $P$ is left-continuous, boundedness preserving,
belongs to $\mathbb{C}_{b}^{1,1}$ and strictly increasing, that is $\nabla
_{Y}P_{t}>0,$ then admissible strategies $X$ with a continuous martingale part
or jumps are suboptimal in the class of all admissible strategies.
\end{proposition}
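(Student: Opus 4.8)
The plan is to show that any admissible strategy $X$ whose continuous martingale part $M$ or jump part $A$ is nontrivial can be strictly improved upon, and thus cannot be optimal. The key idea is to compute the insider's expected wealth $\mathbb{E}(W_{\tau}\mid\mathcal{H}_0)$ using the wealth formula $W_{\tau}=-\int_0^{\tau}P_{t-}\mathrm{d}X_t-[P,X]_{\tau}+X_{\tau}V_{\tau}$, and to isolate the contributions coming from $M$ and from $A$, showing each of them lowers the expected utility. I will treat the risk-neutral case first (where $U$ is the identity), since there the argument is most transparent, and then indicate the modification for the risk-averse case.

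First I would expand the wealth using the decomposition $X_t=M_t+A_t+\int_0^t\theta_s\,\mathrm{d}s$ from \textbf{(A1)} and apply the functional It\^o formula (Theorem \ref{FIT}) to $P_t(Y_{\cdot t},A_{\cdot t})$, where $Y=X+Z$. This expresses $P$ as the sum of a $\mathrm{d}s$-drift term (horizontal plus second-order vertical derivative), and a martingale term $\int_0^t\nabla_Y P_s\,\mathrm{d}Y_s$. Since $\nabla_Y P>0$ by the strict-increasing assumption, the price reacts positively to the insider's own trades. I would then compute the covariation $[P,X]_{\tau}$: the continuous part contributes $\int_0^{\tau}\nabla_Y P_s\,\mathrm{d}[Y,X]^c_s=\int_0^{\tau}\nabla_Y P_s\,\mathrm{d}[M,M]_s$ (using independence of $Z$ and $X$ and $M_0=0$), which is strictly positive whenever $M$ is nontrivial, while the jump part contributes $\sum_{0<s\le\tau}(P_s-P_{s-})(X_s-X_{s-})$. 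Because $P$ is strictly increasing, a jump of $X$ in a given direction moves $P$ the same direction, so each jump term $(P_s-P_{s-})(X_s-X_{s-})$ is strictly positive as well.

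The crux is to show these two sources of covariation are pure costs that the insider can avoid without changing the distribution of the terminal position $X_{\tau}$ or the ``informative'' content of her trades. Concretely, I would fix the drift process $\theta$ and compare $X$ with the modified strategy $\widetilde X_t=\int_0^t\theta_s\,\mathrm{d}s$ having the same absolutely continuous part but no martingale part and no jumps. One shows that $\mathbb{E}(-\int_0^{\tau}P_{t-}\,\mathrm{d}X_t+X_{\tau}V_{\tau}\mid\mathcal H_0)$ is unaffected (or improved) by removing $M$ and $A$ --- here I would use that $P$ is competitive, so $(P_t)$ is an $\mathbb{F}$-martingale and the extra fluctuations integrate to zero in expectation --- while the term $-[P,X]_{\tau}$ is strictly negative and vanishes under $\widetilde X$. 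Hence $\mathbb{E}(W_{\tau})$ strictly increases, proving suboptimality. For the risk-averse case with exponential utility, I expect the main obstacle to lie here: removing $M$ and $A$ not only shifts the mean but also changes the conditional variance of $W_{\tau}$, so I would argue that the smoothing reduces (or does not increase) the risk penalty as well, using Jensen's inequality together with the fact that the removed terms are, conditionally, mean-reducing fluctuations.

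The step I expect to be most delicate is the rigorous justification that stripping the martingale part $M$ and the jumps $A$ leaves the remaining optimization well-posed within the admissible class \textbf{(A1)} --- in particular that $\widetilde X$ is still predictable and c\`adl\`ag with the same $\theta$ --- and that the covariation $[P,X]^c$ and jump terms are genuinely strictly positive under the hypothesis $\nabla_Y P_t>0$, rather than merely nonnegative. This requires the \emph{strict} monotonicity of $P$ (Definition \ref{inc}), which guarantees that any nontrivial $M$ or $A$ incurs a strictly positive trading cost, and it is precisely this strictness that upgrades ``not better'' to ``strictly suboptimal.''
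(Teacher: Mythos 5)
Your identification of where the losses come from is in the right spirit (and matches the paper's accounting): the quadratic covariation of the continuous martingale part and the jump terms enter the wealth with a negative sign precisely because $\nabla_{Y}P_{t}>0$. However, the core comparison step of your proposal has a genuine gap. You compare $X$ with $\widetilde X_{t}=\int_{0}^{t}\theta_{s}\,\mathrm{d}s$ and claim that $\mathbb{E}\bigl(-\int_{0}^{\tau}P_{t-}\,\mathrm{d}X_{t}+X_{\tau}V_{\tau}\mid\mathcal{H}_{0}\bigr)$ is unaffected or improved by deleting $M$ and $A$. This does not go through as stated: $P$ is a \emph{functional of} $Y=X+Z$, so replacing $X$ by $\widetilde X$ changes the entire price path from $P(Y_{\cdot t})$ to $P(\widetilde Y_{\cdot t})$ with $\widetilde Y=\widetilde X+Z$; you are not dropping terms from a fixed integral but comparing integrals against two different price processes. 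Moreover the terminal position changes ($\widetilde X_{\tau}\neq X_{\tau}$ when $M_{\tau}\neq0$ or $A_{\tau}\neq0$), so $X_{\tau}V_{\tau}$ changes as well, contrary to your claim that the distribution of the terminal position is preserved. The appeal to competitiveness (that $P$ is an $\mathbb{F}$-martingale) is also not available here: the proposition concerns the insider's optimisation for an \emph{arbitrary} left-continuous, boundedness-preserving, strictly increasing pricing functional, prior to any equilibrium requirement. A further slip: $[X^{c},Z]=0$ does not follow from ``independence of $Z$ and $X$'' --- $X$ is chosen by the insider, who observes $Z$, so $M$ may well be correlated with $Z$. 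In the correct bookkeeping these cross terms cancel algebraically between the It\^o correction and $[P,X]$, and the net continuous penalty is $\tfrac12\int_{0}^{\tau}\nabla_{Y}P_{t}\,\mathrm{d}[X^{c},X^{c}]_{t}$, not $\int_{0}^{\tau}\nabla_{Y}P_{t}\,\mathrm{d}[M,M]_{t}$.

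The paper closes exactly this gap with a potential-functional device that your proposal lacks: take $F\in\mathbb{C}^{1,2}$ with $\nabla_{Y}F_{t}=P_{t}$ and apply the functional It\^o formula to $F$, not to $P$. The wealth then becomes
\[
W_{T}=\Bigl(-F(T,Y_{\cdot T})+F(0,Y_{0})+\int_{0}^{T}\partial_{t}F\,\mathrm{d}t+\tfrac12\int_{0}^{T}\nabla_{Y}P_{t}\,\mathrm{d}[Z,Z]_{t}+X_{T}V-\int_{0}^{T}P_{t}\,\mathrm{d}Z_{t}\Bigr)
-\tfrac12\int_{0}^{T}\nabla_{Y}P_{t}\,\mathrm{d}[X^{c},X^{c}]_{t}+\sum_{t:\Delta X\neq0}\bigl(\Delta F_{t}-P_{t}\Delta Y_{t}\bigr),
\]
where the first group is a continuous (left-continuous, boundedness-preserving) functional of the path of $Y$, and the remaining terms are strictly negative whenever $M$ or the jumps are nontrivial. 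The comparison strategy is then \emph{not} ``same $\theta$, no $M$, no $A$'', but an absolutely continuous approximation $\widetilde X$ of the whole path of $X$, so that $\widetilde Y$ stays uniformly close to $Y$: the continuous-functional group changes arbitrarily little, while the negative terms disappear. This comparison is path-wise, so it immediately covers any increasing utility $U$ --- no separate Jensen-type argument for the exponential case is needed, which is where your proposal was admittedly vague.
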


\begin{proof}
Let $F(t,Y_{\cdot t})$ be a smooth functional, $F\in\mathbb{C}_{b}^{1,2}$,
with $\mathcal{\nabla}_{Y}F_{t}=P_{t}$. Then, assume that $X$ has a continuous
martingale part and jumps. Take, by simplicity $\tau\equiv T$\ and
$V_{t}\equiv V$,%
\begin{align*}
F\left(  T,Y_{\cdot T}\right)  -F\left(  0,Y_{0}\right)   &  =\int_{0}%
^{T}P_{t-}\mathrm{d}Y_{t}+\int_{0}^{T}\partial_{t}F\mathrm{d}t+\frac{1}{2}%
\int_{0}^{T}\nabla_{Y}P_{t}\mathrm{d}[Y^{c},Y^{c}]_{t}\\
&  +\sum_{t:\Delta X\neq0}F(t,Y_{\cdot t})-F(t,Y_{\cdot t-})-\nabla_{Y}%
F_{t}(t,Y_{\cdot t-})\Delta Y_{t},
\end{align*}
now, since
\[
\mathrm{d}[Y^{c},Y^{c}]_{t}=\mathrm{d}[X^{c},X^{c}]_{t}+2\mathrm{d}%
[X^{c},Z]_{t}+\mathrm{d}[Z,Z]_{t}%
\]
and
\[
\lbrack P,X]=\int_{0}^{T}\nabla_{Y}P_{t}\mathrm{d}[X^{c},Z]_{t}+\int_{0}%
^{T}\nabla_{Y}P_{t}\mathrm{d}[X^{c},X^{c}]_{t}+\sum_{t:\Delta X\neq0}\Delta
P_{t}\Delta X_{t}%
\]
we have
\begin{align*}
W_{T}  &  =-\int_{0}^{T}P_{t-}\mathrm{d}X_{t}-[P,X]+X_{T}V\\
&  =-F\left(  T,Y_{\cdot T}\right)  +F\left(  0,Y_{0}\right)  +\int_{0}%
^{T}\partial_{t}F\mathrm{d}t\\
&  +\frac{1}{2}\int_{0}^{T}\nabla_{Y}P_{t}\mathrm{d}[X^{c},X^{c}]_{t}+\int%
_{0}^{T}\nabla_{Y}P_{t}\mathrm{d}[X^{c},Z]_{t}+\frac{1}{2}\int_{0}^{T}%
\nabla_{Y}P_{t}\mathrm{d}[Z,Z]_{t}\\
&  -\int_{0}^{T}\nabla_{Y}P_{t}\mathrm{d}[X^{c},Z]_{t}-\int_{0}^{T}\nabla
_{Y}P_{t}\mathrm{d}[X^{c},X^{c}]_{t}-\sum_{t:\Delta X\neq0}\Delta P_{t}\Delta
X_{t}+X_{T}V\\
&  \left.  +\int_{0}^{T}P_{t}\mathrm{d}Z_{t}+\sum_{t:\Delta X\neq0}\left(
F(t,Y_{\cdot t})-F(t,Y_{\cdot t-})-P_{t-}\Delta Y_{t}\right)  \right.
\end{align*}
Therefore%
\begin{align*}
&  W_{T}\\
&  =-F\left(  T,Y_{T\cdot}\right)  +F\left(  0,Y_{0}\right)  +\int_{0}%
^{T}\partial_{t}F\mathrm{d}t+\frac{1}{2}\int_{0}^{T}\nabla_{Y}P_{t}%
\mathrm{d}[Z,Z]_{t}\\
&  +X_{T}V+\int_{0}^{T}P_{t}\mathrm{d}Z_{t}-\frac{1}{2}\int_{0}^{T}\nabla
_{Y}P_{t}\mathrm{d}[X^{c},X^{c}]_{t}\\
&  +\sum_{t:\Delta X\neq0}\left(  F(t,Y_{\cdot t})-F(t,Y_{\cdot t-}%
)-P_{t}\Delta Y_{t}\right)  ,
\end{align*}
and the contribution of the last two terms is always negative because
$\nabla_{Y}^{2}F_{t}=\nabla_{Y}P_{t}>0$ and the other terms can be approximate
as far as we want by replacing $Y=Z+X$ by $\tilde{Y}=Z+\tilde{X}$ where
$\tilde{X}$ is an absolutely continuous approximation to $X.$\bigskip
\end{proof}

\begin{remark}
The previous result is not true if we consider, for instance, price processes
of the form
\[
P_{t}=e^{Y_{t}-\frac{1}{2}[Y]_{t}},t\geq0,
\]
where $Y$ is a continuous semimartingale. The reason is that we can not
approximate $[X]_{t}$ by $[\tilde{X}]_{t}$ $\ $when $\tilde{X}$ is an
absolutely continuous approximation to $X.$ In other words $[X]_{t}$ is not a
continuous functional of $X$ and we need to extend the regularity notion of
functionals in order to work with that kind of functionals.
\citeA{ccetin2021pricing} showed that in fact absolutely continuous strategies
are suboptimal and that the insider can get an unbounded wealth by using
strategies with a continuous martingale part.\bigskip
\end{remark}

\bigskip To include these functionals in the class of regular ones we shall
consider non-anticipative functionals as maps
\[
P:(t,Y_{\cdot t},A_{\cdot t})\rightarrow P(t,Y_{\cdot t},A_{\cdot t}%
)=P_{t}(Y_{\cdot t},A_{\cdot t})
\]
where $A_{t}:=\frac{\mathrm{d}[Y^{c},Y^{c}]_{t}}{\mathrm{d}t},$ and given two
stopped processes $Y_{.t},Z_{.t^{\prime}}$ we consider the distance defined
by
\[
d_{\infty}(\left(  Y_{\cdot t},A_{\cdot t}\right)  ,\left(  Z_{\cdot
t^{\prime}},B_{\cdot t^{\prime}}\right)  )=\left\Vert \left(  Y_{\cdot
t}-Z_{\cdot t^{\prime}},A_{\cdot t}-B_{\cdot t^{\prime}}\right)  \right\Vert
_{\infty}+|t-t^{\prime}|.
\]
where $B_{t}:=\frac{\mathrm{d}[Z^{c},Z^{c}]_{t}}{\mathrm{d}t}$ and $\left\Vert
\cdot\right\Vert _{\infty}$ is the sup-norm$.$ We have analogous definitions
to the above case where functionals depended only on $Y_{\cdot t}$.

\begin{definition}
A nonanticipative functional $P$ is said to be \emph{left-continuous} at if
for all $\varepsilon>0$ there exists $\eta>0$ such that for all $0\leq
t^{\prime}\leq t\leq\tau$%
\[
d_{\infty}(\left(  Y_{\cdot t},A_{\cdot t}\right)  ,\left(  Z_{\cdot
t^{\prime}},B_{\cdot t^{\prime}}\right)  )<\eta\Longrightarrow\left\vert
P_{t}(Y_{\cdot t},A_{\cdot t})-P_{t^{\prime}}(Z_{\cdot t^{\prime}},B_{\cdot
t^{\prime}})\right\vert <\varepsilon
\]

\end{definition}

$P$ is said to be \emph{left-continuous} if it is left-continuous at any
$\left(  t,Y_{\cdot t},A_{.t}\right)  $. \emph{Right-continuity} is defined
analogously. \emph{Continuity} means that left- and right-continuity occur at
the same time. If, in the previous definition, we consider only times
$t^{\prime}=$ $t$ then we say that the functional is said to be
\emph{continuous at fixed times}. It can be seen that continuity at fixed
times implies that the process $\left(  P_{t}(Y_{\cdot t},A_{\cdot t})\right)
_{0\leq t\leq T\text{ }}$ is adapted if $Y$ is adapted.

As above, since the space of c\`adl\`ag functions is not separable under the
sup-norm, we need the following additional regularity, even for the continuous
functionals defined.

\begin{definition}
A functional $P$ is said to be \emph{boundedness preserving} if for every
constants $K$, $R$ $\ $and $t_{0}\leq T$ there exists a constant
$C_{K,R,t_{0}}$ such that for all $t\leq t_{0}\leq T,$with $\left\Vert
Y_{\cdot t}\right\Vert _{\infty}<K$%
\[
\left\Vert A_{\cdot t}\right\Vert _{\infty}<R\quad\Rightarrow\quad\left\vert
P_{t}(Y_{\cdot t},A_{\cdot t})\right\vert <C_{K,R,t_{0}}.
\]

\end{definition}

\begin{definition}
We call \emph{horizontal derivative of the functional $P$ at $\left(
t,Y_{\cdot t},A_{\cdot t}\right)  $}, the limit given by
\[
\mathcal{D}_{t}P_{t}:=\lim_{\Delta t\downarrow0}\frac{P_{t+\Delta t}(Y_{\cdot
t},A_{\cdot t})-P_{t}(Y_{\cdot t},A_{\cdot t})}{\Delta t},
\]
provided it exists.
\end{definition}

\begin{definition}
We call \emph{vertical derivative of the functional $P$ at $\left(  t,Y_{\cdot
t}\right)  $} the limit, provided it exists, given by
\[
\mathcal{\nabla}_{Y}P_{t}:=\lim_{h\rightarrow0}\frac{P_{t}(Y_{\cdot t}%
^{h},A_{\cdot t})-P_{t}(Y_{\cdot t},A_{\cdot t})}{h}.
\]

\end{definition}

\begin{definition}
We say that a left-continuous functional belongs to $\mathbb{C}_{b}^{j,k}$ if
it is $j$-times horizontally differentiable with derivatives continuous at
fixed points and boundedness preserving, and it is $k$-times vertically
differentiable with left-continuous and boundedness preserving derivatives.
\end{definition}

\begin{theorem}
\label{FIT}\emph{(Functional It\^o's formula)}. If $Y$ is a continuous
semimartingale and $P$ $\in\mathbb{C}_{b}^{1,2}$with $P_{t}(Y_{\cdot
t},A_{\cdot t})=P_{t}(Y_{\cdot t},A_{\cdot t-})$, then%
\[
P_{t}(Y_{\cdot t},A_{\cdot t})=P_{0}(Y_{0},A_{0})+\int_{0}^{t}\mathcal{D}%
_{s}P_{s}(Y_{\cdot s},A_{\cdot s})\mathrm{d}s+\int_{0}^{t}\mathcal{\nabla}%
_{Y}P_{s}(Y_{\cdot s},A_{\cdot s})\mathrm{d}Y_{s}+\frac{1}{2}\int_{0}%
^{t}\mathcal{\nabla}_{Y}^{2}P_{s}(Y_{\cdot s},A_{\cdot s})\mathrm{d}%
[Y,Y]_{s},\text{ }\mathbb{P}\text{-a.s. }0\leq t\leq\tau.
\]

\end{theorem}

\begin{proof}
See Theorem 4.1 in \citeA{confou13}
\end{proof}

In the following we shall assume that $P$ is \emph{strictly increasing for all
}$\left\{  Y_{s},0\leq s\leq t\right\}  ,$ that is $\mathcal{\nabla}_{Y}%
P_{t}\left(  Y_{\cdot t}\right)  >0$ for all $\left\{  Y_{s},0\leq s\leq
t\right\}  $ and $t\leq\tau$.

\section{Necessary and sufficient conditions for an equilibrium}

In this section we present necessary and sufficient conditions for the
existence of an equilibrium when the release time $\tau$ and the pricing
functional satisfy some conditions. The nature of these conditions will be
further studied in the next section. In this analysis we shall consider both a
risk-neutral insider and a risk-averse insider with exponential utility function.

In the sequel we shall reduce the set of \emph{admissible insider's
strategies} \textbf{(A1)} to those strategies $X$ satisfying
\[
\mathbf{(\mathbf{A1}^{\prime})}\qquad X_{t}=\int_{0}^{t}\theta_{s}%
\mathrm{d}s,\text{ for all }t\geq0\text{, where }\theta\text{ is a càdlàg
adapted process. }%
\]
Furthermore, the goal of the insider becomes to maximise the performance
\begin{equation}
J(X):=\mathbb{E}\left(  \left.  U(W_{\tau})\right\vert \mathcal{H}_{0}\right)
=\mathbb{E}\left(  \left.  U\left(  \int_{0}^{\tau}(V_{\tau}-P_{t}%
)\mathrm{d}X_{t}\right)  \right\vert \mathcal{H}_{0}\right)  \label{2goal}%
\end{equation}
over the set of admissible strategies $X$ satisfying $\mathbf{(A1^{\prime})}$.

\begin{remark}
Observe that, in view of $\mathbf{(\mathbf{A1}^{\prime})}$, we have
$\mathrm{d}[Y,Y]_{t}=\sigma_{Z}^{2}(t)\mathrm{d}t.$
\end{remark}

We also have a general result in the case when $\tau$ is a predictable
stopping time for the insider. The same result is given in Corcuera et al. (2019).

\begin{proposition}
\label{Nefi} If $\tau$ is a predictable stopping time for the insider and $X$
is an optimal strategy then%
\[
V_{\tau}=P_{\tau}\text{ a.s.}%
\]

\end{proposition}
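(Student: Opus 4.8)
The plan is to argue by contradiction, exploiting that predictability of $\tau$ is exactly what makes a profitable ``last--minute'' deviation available to the insider. Suppose the claim fails. Under $(\mathbf{A1}^{\prime})$ both $X$ and $Z$ are continuous, so $Y=X+Z$ is continuous and hence $t\mapsto P_{t}(Y_{\cdot t})$ is continuous; in particular $P_{\tau}=P_{\tau-}$ and the set $\{V_{\tau}\neq P_{\tau}\}$ has positive probability. Splitting it by the sign of $V_{\tau}-P_{\tau}$ and intersecting with $\{|V_{\tau}-P_{\tau}|>2c\}$ for a small $c>0$, I fix an event $A$ with $\mathbb{P}(A)>0$ on which, say, $V_{\tau}-P_{\tau}>2c$; the opposite sign is symmetric.

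Since $\tau$ is predictable for the insider, I fix an announcing sequence $\tau_{n}\uparrow\tau$ with $\tau_{n}<\tau$. By continuity of $V$ and of $t\mapsto P_{t}$ at $\tau$ we have $V_{\tau_{n}}-P_{\tau_{n}}\to V_{\tau}-P_{\tau}$ a.s., so the $\mathcal{H}_{\tau_{n}}$--measurable events $B_{n}:=\{V_{\tau_{n}}-P_{\tau_{n}}>c\}$ satisfy $\mathbb{P}(B_{n}\cap A)\to\mathbb{P}(A)>0$. The crucial structural point is that the sign of the gap is \emph{observed} by the insider strictly before $\tau$: both $V_{\tau_{n}}$ (full information) and $P_{\tau_{n}}$ (the quoted price) are $\mathcal{H}_{\tau_{n}}$--measurable.

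I then perturb the optimal strategy by a trade supported on $[\tau_{n},\tau]$ and triggered only on $B_{n}$: for $\Delta>0$ set
\[
X_{t}^{\Delta}:=X_{t}+\Delta\,\mathbf{1}_{B_{n}}\,\phi_{n}(t),\qquad\phi_{n}(t):=\frac{(t-\tau_{n})^{+}\wedge(\tau-\tau_{n})}{\tau-\tau_{n}},
\]
which is again of type $(\mathbf{A1}^{\prime})$ and builds an extra position of size $\Delta\,\mathbf{1}_{B_{n}}$ over $[\tau_{n},\tau]$. Writing $P_{t}^{\Delta}:=P_{t}(Y_{\cdot t}^{\Delta})$ with $Y^{\Delta}=X^{\Delta}+Z$ and using $W_{\tau}=\int_{0}^{\tau}(V_{\tau}-P_{t})\,\mathrm{d}X_{t}$, the perturbation alters $Y_{\cdot t}$, and hence $P_{t}$, only for $t\in[\tau_{n},\tau]$, so on $B_{n}$
\[
W_{\tau}^{\Delta}-W_{\tau}=\Delta\int_{\tau_{n}}^{\tau}(V_{\tau}-P_{t})\,\mathrm{d}\phi_{n}(t)-\int_{\tau_{n}}^{\tau}(P_{t}^{\Delta}-P_{t})\,\mathrm{d}X_{t}^{\Delta}.
\]
The first term is the leading gain: by continuity $V_{\tau}-P_{t}\to V_{\tau}-P_{\tau_{n}}\ge c$ on $B_{n}$ as the interval shrinks, and $\int\mathrm{d}\phi_{n}=1$, so this contributes $\gtrsim c\,\Delta$, linear in $\Delta$. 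The second term is the self--impact cost; using the vertical regularity of $P$ one expects $|P_{t}^{\Delta}-P_{t}|=O(\Delta)$ uniformly on $[\tau_{n},\tau]$, so its $\mathrm{d}(\Delta\phi_{n})$--part is $O(\Delta^{2})$ while its $\mathrm{d}X_{t}$--part is bounded by $O(\Delta)\int_{\tau_{n}}^{\tau}|\theta_{t}|\,\mathrm{d}t\to0$ a.s. as $n\to\infty$.

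Choosing first $n$ large (so that the gap is detected on an event of probability close to $\mathbb{P}(A)$ and the $\mathrm{d}X$--cross term is negligible) and then $\Delta$ small (so the $O(\Delta^{2})$ impact is dominated), I get $W_{\tau}^{\Delta}-W_{\tau}\ge\tfrac{c}{3}\Delta>0$ on a set of positive probability and $W_{\tau}^{\Delta}=W_{\tau}$ elsewhere. As $U$ is strictly increasing, $U(W_{\tau}^{\Delta})\ge U(W_{\tau})$ with strict inequality there, hence $\mathbb{E}(U(W_{\tau}^{\Delta})\mid\mathcal{H}_{0})\ge\mathbb{E}(U(W_{\tau})\mid\mathcal{H}_{0})$ with strict inequality on a set of positive probability, contradicting the optimality of $X$. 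The hard part is precisely this last quantitative step: making the two limits ($n\to\infty$ and $\Delta\to0$) cooperate while controlling the self--impact term for the \emph{path--dependent} functional $P$, i.e.\ deducing $|P_{t}^{\Delta}-P_{t}|=O(\Delta)$ uniformly near $\tau$ from the regularity and boundedness--preserving properties of $P$, together with ensuring that the triggering set $B_{n}$ on which the insider deviates is genuinely $\mathcal{H}_{\tau_{n}}$--measurable.
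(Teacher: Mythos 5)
Your route is the same as the paper's: the paper's entire proof is the one-sentence remark that if $V_{\tau-}-P_{\tau-}\neq0$, the insider could approximate a jump at $\tau$ of the same sign by an absolutely continuous strategy and improve her wealth, and your announcing-sequence-plus-ramp deviation is exactly an implementation of that remark (indeed far more detailed than the paper's). However, two of the details you supply would fail as written. First, adaptedness: your ramp $\phi_{n}(t)=\bigl((t-\tau_{n})^{+}\wedge(\tau-\tau_{n})\bigr)/(\tau-\tau_{n})$ uses the value of $\tau-\tau_{n}$ on $[\tau_{n},\tau)$, i.e.\ it requires the insider to know $\tau$ strictly before it occurs. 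Predictability provides an announcing sequence, not advance knowledge of the value of $\tau$ (think of $\tau$ the first hitting time of a level by a Brownian motion, which is predictable but unknown at any earlier time), so $X^{\Delta}$ is not an admissible adapted strategy when $\tau$ is genuinely random. This is fixable: trade at a constant adapted rate on $(\tau_{n},\tau]$, accept the random accumulated position $\lambda(\tau-\tau_{n})$, and redo the estimates using that $\tau$ is bounded. (Incidentally, the measurability worry you do flag, that $B_{n}\in\mathcal{H}_{\tau_{n}}$, is unfounded: $V_{\tau_{n}}$ and $P_{\tau_{n}}$ are both observed by the insider.)

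Second, and more seriously, the final comparison. You claim $W_{\tau}^{\Delta}-W_{\tau}\ge c\Delta/3$ on a set of positive probability and $W_{\tau}^{\Delta}=W_{\tau}$ elsewhere. That is false: the deviation is triggered on all of $B_{n}$, and membership in $B_{n}$ only controls the gap at time $\tau_{n}$; on the part of $B_{n}$ where $P$ or $V$ moves adversely during $(\tau_{n},\tau]$ (note $V_{\tau}$ is not yet realized at time $\tau_{n}$), the extra trade loses money, so $U(W_{\tau}^{\Delta})<U(W_{\tau})$ there. Since $U$ is concave and these losses are not a priori bounded, monotonicity of $U$ alone does not give $\mathbb{E}(U(W_{\tau}^{\Delta})\mid\mathcal{H}_{0})\ge\mathbb{E}(U(W_{\tau})\mid\mathcal{H}_{0})$: you must control the expected loss on the bad subset of $B_{n}$ against the expected gain, e.g.\ by an adapted cut-off that halts the extra trading once the observed gap $V_{t}-P_{t}^{\Delta}$ falls below $c/2$, together with a truncation/uniform-integrability argument, and in the risk-averse case a control of the correlation between $U^{\prime}(W_{\tau})$ and $V_{\tau}-V_{t}$. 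This quantitative step, which you flag but do not carry out, is precisely the content hidden in the paper's one-line proof; likewise your $O(\Delta)$-versus-$O(\Delta^{2})$ bookkeeping for the self-impact rests on the asserted but unproven uniform bound $|P_{t}^{\Delta}-P_{t}|=O(\Delta)$. So the skeleton matches the paper, but the argument as written is not yet a proof.
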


\begin{proof}
If the insider's strategy is such that $V_{\tau-}-P_{\tau-}\neq0$ then it is
suboptimal since the insider could approximate a jump at $\tau$ with the same
sign of $V_{\tau-}-P_{\tau-}$ by an absolutely continuous strategy and
improving her wealth

.
\end{proof}

\begin{remark}
From the economic point of view, due to Bertrand's type competition among
market makers, in the equilibrium market prices are \emph{rational}, or
\emph{competitive}, in the sense that
%$P_{t}=\mathbb{E}(V_{t}|\mathcal{F}_{t}),\quad0\leq t\leq\tau.$
the competitive price is a price such that the expectation of the market
maker's profit equals zero. In fact, the total final wealth $W_{\tau}^{M}$ of
the market makers is given by
\[
W_{\tau}^{M}:=-Y_{\tau}\left(  V_{\tau}-P_{\tau}\right)  -\int_{0}^{\tau}%
Y_{t}\mathrm{d}P_{t},
\]
then, if $P_{t}=\mathbb{E}\left(  V_{t}|\mathcal{F}_{t}\right)  $, $0\leq
t\leq\tau$ see \eqref{competitive}, under the assumption that $\mathbb{E}%
\left(  \int_{0}^{\tau}Y_{t}^{2}\mathrm{d}[P,P]_{t}\right)  <\infty$, we have
that \textbf{ }$\mathbb{E}\left(  W_{\tau}^{M}\right)  =0.$
\end{remark}

\subsection{Main results}

First, we consider the risk-neutral case and reduce the admissibility
strategies to that fullfiling \
%In the following theorem we assumed that $\tau\equiv T$\textbf{ }and \textbf{ }that $\mathbb{E}\left(  \int_{0}^{T}\left(  P_{t}-V_{t}\right)  ^{2}\left(  \sigma_{Z}^{2}(t)+\sigma_{V}^{2}(t)\right)  \mathrm{d}t\right)  <\infty.$%
\[
\mathbf{(\mathbf{A1}^{\prime})}\text{, }\mathbb{E}\left(  \int_{0}^{T}\left(
P_{t}-V_{t}\right)  ^{2}\left(  \sigma_{Z}^{2}(t)+\sigma_{V}^{2}(t)\right)
\mathrm{d}t\right)  <\infty\text{ and }\mathbb{E}\left(  \int_{0}^{T}\left(
\bigtriangledown_{Y}P_{t}\right)  ^{2}\sigma_{Z}^{2}(t)\mathrm{d}t\right)
<\infty
\]

\begin{theorem}
\label{T1} Suppose that $\tau=T$ and that for all $t<T$, the price functional
$P$ is $\mathbb{C}_{b}^{1,3}$ and such that
\begin{equation}
\mathcal{D}_{t}P_{t}+\frac{1}{2}\bigtriangledown_{Y}^{2}P_{t}\sigma_{Z}%
^{2}(t)=0, \label{new}%
\end{equation}
with%
\begin{equation}
\mathcal{L}P_{t}:=\left[  \mathcal{D}_{t},\bigtriangledown_{Y}\right]  =0
\label{new2}%
\end{equation}
and
\begin{equation}
\bigtriangledown_{Y}P_{t}=G(t,P_{\cdot t}), \label{nas}%
\end{equation}
where $G(t,y_{\cdot t})\geq C>0$ is $\mathbb{C}_{b}^{1,2}$ and where we assume
that $\mathbb{E}\left(  \int_{0}^{T}G^{2}(t,P_{\cdot t})\sigma_{Z}%
^{2}(t)\mathrm{d}t\right)  <\infty$.

Then there is an equilibrium in the risk-neutral case if and only if
\[
(i)\text{ }P_{T}=V_{T},\qquad(ii)\text{ }Y\text{ is an }\mathbb{F}%
\text{-martingale, \ \ }(iii)\text{\ }G(t,y_{\cdot t})=g(t,y_{t})
\]
where $g(t,y)\geq C>0$ is $\mathcal{C}^{1,2}.$
\end{theorem}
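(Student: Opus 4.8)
The plan is to prove the two implications separately after an initial reduction common to both. The key preliminary observation is that for \emph{any} admissible strategy of type $\mathbf{(A1^{\prime})}$ one has $\mathrm{d}[Y,Y]_t=\sigma_Z^2(t)\mathrm{d}t$, so applying the functional It\^o formula (Theorem \ref{FIT}) to $P$ and using the structural identity \eqref{new} makes the finite-variation terms cancel, leaving, \emph{regardless} of the strategy,
\[
\mathrm{d}P_t=\nabla_Y P_t\,\mathrm{d}Y_t=G(t,P_{\cdot t})\,\mathrm{d}Y_t .
\]
Since $V$ is an $\mathbb{H}$-martingale and $\theta$ is $\mathbb{H}$-adapted, a conditioning argument based on $\mathbb{E}(V_T-V_t\mid\mathcal{H}_t)=0$ lets me rewrite the objective \eqref{2goal} in the risk-neutral case as $J(X)=\mathbb{E}\big(\int_0^T (V_t-P_t)\theta_t\,\mathrm{d}t\mid\mathcal{H}_0\big)$.

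\textbf{Sufficiency.} Assume (i)--(iii). Writing $\nabla_Y P_t=g(t,P_t)$ by (iii), the display becomes $\mathrm{d}P_t=g(t,P_t)\,\mathrm{d}Y_t$; since $Y$ is an $\mathbb{F}$-martingale by (ii) and the integrability hypothesis on $G^2\sigma_Z^2$ holds, $P$ is a genuine $\mathbb{F}$-martingale, and (i) together with the martingale property of $V$ and $\mathcal{F}_t\subseteq\mathcal{H}_t$ gives $P_t=\mathbb{E}(V_T\mid\mathcal{F}_t)=\mathbb{E}(V_t\mid\mathcal{F}_t)$, i.e.\ $P$ is competitive. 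For optimality I would introduce a potential $\Gamma(t,p,v)$ with $\partial_p\Gamma(t,p,v)=(v-p)/g(t,p)$ chosen to solve the backward equation $\partial_t\Gamma+\tfrac12 g^2\sigma_Z^2\,\partial_{pp}\Gamma+\tfrac12\sigma_V^2\,\partial_{vv}\Gamma=0$. Applying It\^o to $\Gamma(t,P_t,V_t)$ (so $[P,V]=0$, as $P$ is driven by $Z$, which is independent of $V$), the choice of $\partial_p\Gamma$ reproduces $(V_t-P_t)\theta_t\,\mathrm{d}t$ up to true-martingale increments, so that $J(X)=\mathbb{E}(\Gamma(T,P_T,V_T)\mid\mathcal{H}_0)-\Gamma(0,P_0,V_0)$. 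Because $g>0$, the map $p\mapsto\Gamma(T,p,v)$ is maximised at $p=v$, whence $J(X)\le\mathbb{E}(\Gamma(T,V_T,V_T)\mid\mathcal{H}_0)-\Gamma(0,P_0,V_0)$ with equality iff $P_T=V_T$ a.s. It then remains to exhibit a strategy attaining the bound: a Kyle--Back type bridge that steers $Y_T$ to $H^{-1}(T,V_T)$ (where $P_t=H(t,Y_t)$ with $\partial_y H=g(t,H)$) while keeping $Y$ an $\mathbb{F}$-martingale; such a strategy is admissible and optimal, so $(P,X)$ is an equilibrium.

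\textbf{Necessity.} Suppose $(P,X)$ is an equilibrium. Condition (i) is immediate from Proposition \ref{Nefi}, since $\tau=T$ is bounded and predictable, forcing $V_T=P_T$. For (ii), competitiveness makes $P$ an $\mathbb{F}$-martingale; dividing $\mathrm{d}P_t=G(t,P_{\cdot t})\,\mathrm{d}Y_t$ by $G>0$ exhibits $Y=\int_0^{\cdot}\mathrm{d}P_s/G(s,P_{\cdot s})$ as a stochastic integral of an $\mathbb{F}$-predictable integrand against the $\mathbb{F}$-martingale $P$, hence an $\mathbb{F}$-martingale under the stated integrability.

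\textbf{The main obstacle} is the necessity of (iii): one must show that in equilibrium the price-impact functional $G(t,P_{\cdot t})$ cannot genuinely depend on the past and reduces to $g(t,P_t)$. I would run the potential computation of the sufficiency part \emph{in reverse}: optimality forces $P_T=V_T$, and the profit decomposition closes into a state-wise optimisation that is ``maximised iff $P_T=V_T$'' only if the coefficient $(v-p)/G$ defining $\partial_p\Gamma$ is a function of $(t,p)$ alone; a path-dependent $G$ would make $\Gamma$ a genuine functional and destroy the pointwise optimality that existence of an optimum requires. The commutator hypothesis \eqref{new2}, $\mathcal{L}P=0$, is what makes this rigid: by the computation in the Remark on $\mathcal{L}_t\big(\int_0^t f(s,W_s)\,\mathrm{d}W_s\big)$ it forces $\nabla_Y P=G(t,P_{\cdot t})$ to be itself an $\mathbb{F}$-local martingale, and combined with the martingale property of $P$ this pins down the dependence of $G$ on the current price only, yielding (iii) with the lower bound $g\ge C>0$ inherited from $G$. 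Carefully verifying this rigidity---together with the requisite integrability and the admissibility of the bridge---is where the real work lies.
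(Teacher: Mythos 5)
Your sufficiency argument and your necessity arguments for (i) and (ii) are essentially the paper's own: your potential $\Gamma$ is, up to sign and an additive correction $c(t,v)$ solving $\partial_t c+\tfrac12\sigma_V^2\partial_{vv}c=\tfrac12 g(t,v)\sigma_Z^2(t)+\tfrac12\sigma_V^2(t)/g(t,v)$, exactly the paper's functional $I(t,y_{\cdot t},v)=\int_v^{y_t}\frac{z-v}{\nabla_Y P_t\left(y_{\cdot t}+(z-y_t)\delta_{\{t\}}\right)}\mathrm{d}z$, and the verification step, the identification $P_t=\mathbb{E}(V_t\mid\mathcal{F}_t)$, the use of Proposition \ref{Nefi} for (i), and the division of $\mathrm{d}P_t=G\,\mathrm{d}Y_t$ by $G\geq C>0$ for (ii) all match. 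Two small remarks: the solvability of your homogeneous backward equation with the prescribed $\partial_p\Gamma$ is available precisely because (iii) makes the residual source terms functions of $(t,v)$ alone, so you are using (iii) exactly where the paper does; and the Kyle--Back bridge you ask for is not required for this implication, since (i) and (ii) are conditions on the given pair $(P,X)$ --- the paper defers the construction of such strategies to its final section.

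The genuine gap is your mechanism for the necessity of (iii). You claim that \eqref{new2} forces $G(t,P_{\cdot t})=\nabla_Y P_t$ to be an $\mathbb{F}$-local martingale and that, ``combined with the martingale property of $P$, this pins down the dependence of $G$ on the current price only.'' The first half is correct --- \eqref{new} and \eqref{new2} give $\mathcal{D}_t(\nabla_Y P_t)+\tfrac12\nabla_Y^2(\nabla_Y P_t)\sigma_Z^2(t)=0$, so $\nabla_Y P$ is driftless along any martingale $Y$ --- but this holds by hypothesis for \emph{every} admissible demand and therefore cannot single out state dependence: a path-dependent functional such as $\int_0^t h_s(Y_{\cdot s})\,\mathrm{d}Y_s$ is a perfectly good local martingale. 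Indeed, if (iii) followed from \eqref{new2} and martingality, it would be implied by the standing assumptions of the theorem and could not figure as one of the three conditions characterizing equilibrium. The necessity of (iii) must invoke optimality, which is what the paper does: in the wealth identity the residual term $J_t(P_{\cdot t-},V_t)=\sigma_V^2(t)/\nabla_Y P_t\left(P_{\cdot t}^{V_t-P_t}\right)+\nabla_Y P_t\left(P_{\cdot t}^{V_t-P_t}\right)\sigma_Z^2(t)$ is added to the attainable wealth, and if it genuinely depends on $P_{\cdot t-}$ the insider can steer the price path locally (at negligible cost) to inflate it, so no strategy making $P$, equivalently $Y$, a martingale could be optimal --- contradicting equilibrium and forcing $\nabla_Y P_t$ not to depend on $P_{\cdot t-}$. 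Your opening clause (``a path-dependent $G$ would make $\Gamma$ a genuine functional and destroy the pointwise optimality'') gestures at this manipulation argument, but as written it is a slogan rather than a proof, and the local-martingale rigidity you offer in its place is false.
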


\begin{proof}
Firstly, we prove that (i) and (ii) and (iii) are sufficient conditions. Set
the functional
\[
I(t,y_{\cdot t},,v):=\int_{v}^{y_{t}}\frac{z-v}{\bigtriangledown_{Y}%
P_{t}\left(  y_{\cdot t}^{z-y_{t}}\right)  }\mathrm{d}z
\]
Where
\[
\bigtriangledown_{Y}P_{t}\left(  y_{\cdot t}\right)  :=G(t,y_{\cdot t})
\]
and $y_{\cdot t}=P_{\cdot t}\left(  \omega\right)  ,$ for $0\leq t\leq T$.
Then we have
\begin{equation}
\bigtriangledown_{P}I(t,P_{\cdot t},,V_{t})=\frac{P_{t}-V_{t}}%
{\bigtriangledown_{Y}P_{t}\left(  P_{\cdot t}\right)  } \label{first}%
\end{equation}
and by the chain rule%
\begin{equation}
\bigtriangledown_{P}^{2}I(t,P_{\cdot t},,V_{t})=\frac{\bigtriangledown
_{Y}P_{t}\left(  P_{\cdot t}\right)  -\left(  P_{t}-V_{t}\right)
\bigtriangledown_{P}\left(  \bigtriangledown_{Y}P_{t}\left(  P_{\cdot
t}\right)  \right)  }{\bigtriangledown_{Y}P_{t}\left(  P_{\cdot t}\right)
^{2}}. \label{secon}%
\end{equation}
Consequently
\[
\left.  \left(  \bigtriangledown_{Y}P_{t}\left(  P_{\cdot t}^{z-P_{t}}\right)
-(z-V_{t})\partial_{z}\bigtriangledown_{Y}P_{t}\left(  P_{\cdot t}^{z-P_{t}%
}\right)  \right)  \right\vert _{z=P_{t}}=\bigtriangledown_{P}^{2}I(t,P_{\cdot
t},,V_{t})\left(  \bigtriangledown_{Y}P_{t}\left(  P_{\cdot t}\right)
\right)  ^{2},
\]
where
\[
\bigtriangledown_{Y}P_{t}\left(  P_{\cdot t}^{z-P_{t}}\right)  :=\left.
\bigtriangledown_{Y}P_{t}\left(  y_{\cdot t}^{z-y_{t}}\right)  \right\vert
_{y_{\cdot t}=P_{\cdot t}}%
\]
Then, we can write
\[
\bigtriangledown_{P}^{2}I(t,P_{\cdot t},,V_{t})\left(  \bigtriangledown
_{Y}P_{t}\left(  P_{\cdot t}\right)  \right)  ^{2}=-\int_{V_{t}}^{P_{t}%
}(z-V_{t})\partial_{z}^{2}\left(  \bigtriangledown_{Y}P_{t}\left(  P_{\cdot
t}^{z-P_{t}}\right)  \right)  \mathrm{d}z+\bigtriangledown_{Y}P_{t}\left(
P_{\cdot t}^{V_{t}-P_{t}}\right)  .
\]
On the other hand
\begin{equation}
\mathcal{D}_{t}I(t,P_{\cdot t},,V_{t})=-\int_{V_{t}}^{P_{t}}\frac{z-V_{t}%
}{\left(  \bigtriangledown_{Y}P_{t}\left(  P_{\cdot t}^{z-P_{t}}\right)
\right)  ^{2}}\mathcal{D}_{t}\left(  \bigtriangledown_{Y}P_{t}\left(  P_{\cdot
t}^{z-P_{t}}\right)  \right)  \mathrm{d}z \label{horP}%
\end{equation}
in such a way that,
\begin{align*}
&  \mathcal{D}_{t}I(t,P_{\cdot t},,V_{t})+\frac{1}{2}\bigtriangledown_{Y}%
^{2}I(t,P_{\cdot t},,V_{t})\left(  \bigtriangledown_{Y}P_{t}\left(  P_{\cdot
t}\right)  \right)  ^{2}\sigma_{Z}^{2}(t)\\
&  =-\int_{V_{t}}^{P_{t}}\left(  z-V_{t}\right)  \left(  \frac{\mathcal{D}%
_{t}\left(  \bigtriangledown_{Y}P_{t}\left(  P_{\cdot t}^{z-P_{t}}\right)
\right)  }{\left(  \bigtriangledown_{Y}P_{t}\left(  P_{\cdot t}^{z-P_{t}%
}\right)  \right)  ^{2}}+\frac{1}{2}\partial_{z}^{2}\left(  \bigtriangledown
_{Y}P_{t}\left(  P_{\cdot t}^{z-P_{t}}\right)  \right)  \sigma_{Z}%
^{2}(t)\right)  \mathrm{d}z\\
&  +\frac{1}{2}\bigtriangledown_{Y}P_{t}\left(  P_{\cdot t}^{V_{t}-P_{t}%
}\right)  \sigma^{2}(t).\\
&  =\int_{V_{t}}^{P_{t}}\left(  z-V_{t}\right)  \left(  \frac{\mathcal{D}%
_{t}\left(  \bigtriangledown_{Y}P_{t}\left(  P_{\cdot t}^{z-P_{t}}\right)
\right)  }{\left(  \bigtriangledown_{Y}P_{t}\left(  P_{\cdot t}^{z-P_{t}%
}\right)  \right)  ^{2}}+\frac{1}{2}\bigtriangledown_{P}^{2}\left(
\bigtriangledown_{Y}P_{t}\left(  P_{\cdot t}^{z-P_{t}}\right)  \right)
\sigma_{Z}^{2}(t)\right)  \mathrm{d}z\\
&  +\frac{1}{2}\bigtriangledown_{Y}P_{t}\left(  P_{\cdot t}^{V_{t}-P_{t}%
}\right)  \sigma^{2}(t).
\end{align*}
Notice that
\[
\partial_{z}\left(  \bigtriangledown_{Y}P_{t}\left(  P_{\cdot t}^{z-P_{t}%
}\right)  \right)  =\left.  \bigtriangledown_{P}\left(  \bigtriangledown
_{Y}P_{t}\left(  y_{\cdot t}\right)  \right)  \right\vert _{y_{\cdot
t}=P_{\cdot t}^{z-P_{t}}}%
\]
Now, conditions (\ref{new}) and (\ref{new2}) imply that
\[
\mathcal{D}_{t}\left(  \bigtriangledown_{Y}P\right)  +\frac{1}{2}\nabla
_{Y}^{2}\left(  \bigtriangledown_{Y}P_{t}\right)  \sigma_{Z}^{2}(t)=0.
\]
In this expression we are considering $\bigtriangledown_{Y}P$ as a functional
of $Y$, so to indicate that $\mathcal{D}_{t}$ is taken in this situation we
shall write $\mathcal{D}_{t}^{Y}$ and we use $\mathcal{D}_{t}$ to indicate
that $\mathcal{D}_{t}$ is taken \textit{freezing }$P.$ So we have
\[
\mathcal{D}_{t}^{Y}\left(  \bigtriangledown_{Y}P\right)  +\frac{1}{2}%
\nabla_{Y}^{2}\left(  \bigtriangledown_{Y}P_{t}\right)  \sigma_{Z}^{2}(t)=0
\]
Now if we consider $\bigtriangledown_{Y}P$ as a functional of $P,$ that is at
the same time a functional of $Y$, by applying the chain rule we have%
\begin{align*}
\nabla_{Y}^{2}\left(  \bigtriangledown_{Y}P_{t}\right)   &  =\nabla_{P}%
^{2}\left(  \bigtriangledown_{Y}P_{t}\right)  \left(  \bigtriangledown
_{Y}P_{t}\right)  ^{2}+\left(  \bigtriangledown_{P}\left(  \bigtriangledown
_{Y}P_{t}\right)  \right)  ^{2}\bigtriangledown_{Y}P_{t}\\
&  =\nabla_{P}^{2}\left(  \bigtriangledown_{Y}P_{t}\right)  \left(
\bigtriangledown_{Y}P_{t}\right)  ^{2}+\frac{\left(  \bigtriangledown_{Y}%
^{2}P_{t}\right)  ^{2}}{\bigtriangledown_{Y}P_{t}}%
\end{align*}%
\begin{align*}
\mathcal{D}_{t}^{Y}\left(  \bigtriangledown_{Y}P\right)   &  =\mathcal{D}%
_{t}\left(  \bigtriangledown_{Y}P\right)  +\bigtriangledown_{P}\left(
\bigtriangledown_{Y}P\right)  \mathcal{D}_{t}^{Y}P\\
&  =\mathcal{D}_{t}\left(  \bigtriangledown_{Y}P\right)  -\frac{1}{2}%
\frac{\bigtriangledown_{Y}^{2}P}{\bigtriangledown_{Y}P_{t}}\bigtriangledown
_{Y}^{2}P_{t}\sigma_{Z}^{2}(t),
\end{align*}
where in this last equality we use (\ref{new}). So we obtain that
\[
\mathcal{D}_{t}\left(  \bigtriangledown_{Y}P\right)  +\frac{1}{2}\nabla
_{P}^{2}\left(  \bigtriangledown_{Y}P_{t}\right)  \left(  \bigtriangledown
_{Y}P_{t}\right)  ^{2}\sigma_{Z}^{2}(t)=0.
\]
Note that we are considering functionals satisfying this condition for any
continuous trajectory with quadratic variation $\sigma_{Z}^{2}$, so by
continuity of the functionals we obtain that
\begin{align*}
&  \frac{\mathcal{D}_{t}\left(  \bigtriangledown_{Y}P_{t}\left(  P_{\cdot
t}^{z-P_{t}}\right)  \right)  }{\left(  \bigtriangledown_{Y}P_{t}\left(
P_{\cdot t}^{z-P_{t}}\right)  \right)  ^{2}}+\frac{1}{2}\sigma_{Z}%
^{2}(t)\bigtriangledown_{P}^{2}\left(  \bigtriangledown_{Y}P_{t}\left(
P_{\cdot t}^{z-P_{t}}\right)  \right) \\
&  =0.
\end{align*}
and
\begin{align}
&  \mathcal{D}_{t}I(t,P_{\cdot t},,V_{t})+\frac{1}{2}\bigtriangledown_{P}%
^{2}I(t,P_{\cdot t},,V_{t})\left(  \bigtriangledown_{Y}P_{t}\left(  P_{\cdot
t}\right)  \right)  ^{2}\sigma_{Z}^{2}(t)\nonumber\\
&  =\frac{1}{2}\bigtriangledown_{Y}P_{t}\left(  P_{\cdot t}^{V_{t}-P_{t}%
}\right)  \sigma_{Z}^{2}(t). \label{total}%
\end{align}
Finally if we apply the functional It\^o formula we have, since $V$ and $Z$ are
independent and we consider only absolutely continuous strategies, see
$\mathbf{(\mathbf{A1}^{\prime}),}$
\begin{align*}
I(T,P_{\cdot T},V_{T})  &  =I(0,P_{0},V_{0})+\int_{0}^{T}\mathcal{D}%
_{t}I(t,P_{\cdot t},,V_{t})\mathrm{d}t+\int_{0}^{T}\bigtriangledown
_{P}I(t,P_{\cdot t},,V_{t})\mathrm{d}P_{t}+\int_{0}^{T}\bigtriangledown
_{V}I(t,P_{\cdot t},,V_{t})\mathrm{d}V_{t}\\
&  +\frac{1}{2}\int_{0}^{T}\bigtriangledown_{P}^{2}I(t,P_{\cdot t}%
,,V_{t})\left(  \bigtriangledown_{Y}P_{t}\left(  P_{\cdot t}\right)  \right)
^{2}\sigma_{Z}^{2}(t)\mathrm{d}t+\frac{1}{2}\int_{0}^{T}\bigtriangledown
_{V}^{2}I(t,P_{\cdot t},,V_{t})\sigma_{V}^{2}(t)\mathrm{d}t,
\end{align*}
Also, we have use the fact that
\[
\mathrm{d}[P,P]_{t}=\left(  \bigtriangledown_{Y}P_{t}\left(  P_{\cdot
t}\right)  \right)  ^{2}\mathrm{d}[Y,Y]_{t}=\left(  \bigtriangledown_{Y}%
P_{t}\left(  P_{\cdot t}\right)  \right)  ^{2}\sigma_{Z}^{2}(t)\mathrm{d}t.
\]
Then by \ (\ref{new}) (\ref{first}) and (\ref{total})
\begin{align*}
I(T,P_{\cdot T},V_{T})  &  =I(0,P_{0},V_{0})+\int_{0}^{T}(P_{t}-V_{t}%
)\mathrm{d}Y_{t}+\int_{0}^{T}\bigtriangledown_{V}I(t,P_{\cdot t}%
,,V_{t})\mathrm{d}V_{t}\\
&  +\frac{1}{2}\int_{0}^{T}\bigtriangledown_{Y}P_{t}\left(  P_{\cdot t}%
^{V_{t}-P_{t}}\right)  \sigma^{2}(t)\mathrm{d}t+\frac{1}{2}\int_{0}%
^{T}\bigtriangledown_{V}^{2}I(t,P_{\cdot t},,V_{t})\sigma_{V}^{2}%
(t)\mathrm{d}t,
\end{align*}
and
\begin{align}
&  \int_{0}^{T}\left(  V_{t}-P_{t}\right)  \mathrm{d}X_{t}-\left(
I(0,P_{0},V_{0})+\frac{1}{2}\int_{0}^{T}\bigtriangledown_{V}^{2}I(t,P_{\cdot
t},,V_{t})\sigma_{V}^{2}(t)\mathrm{d}t+\frac{1}{2}\int_{0}^{T}\bigtriangledown
_{Y}P_{t}\left(  P_{\cdot t}^{V_{t}-P_{t}}\right)  \sigma_{Z}^{2}%
(t)\mathrm{d}t\right) \nonumber\\
&  =-I(T,P_{\cdot T},V_{T})+\int_{0}^{T}\left(  P_{t}-V_{t}\right)
\mathrm{d}Z_{t}+\int_{0}^{T}\bigtriangledown_{V}I(t,P_{\cdot t},,V_{t}%
)\mathrm{d}V_{t}. \label{maxwealth}%
\end{align}
We have that
\[
\left\vert \bigtriangledown_{V}I(t,P_{\cdot t},,V_{t})\right\vert =\left\vert
\int_{V_{t}}^{P_{t}}\frac{-1}{\bigtriangledown_{Y}P_{t}\left(  P_{\cdot
t}^{z-P_{t}}\right)  }\mathrm{d}z\right\vert <\frac{\left\vert P_{t}%
-V_{t}\right\vert }{C},
\]
therefore%
\[
\mathbb{E}\left(  \int_{0}^{T}\left(  \bigtriangledown_{V}I(t,P_{\cdot
t},,V_{t})\right)  ^{2}\sigma_{V}^{2}(t)\mathrm{d}t\right)  <\frac{1}{C^{2}%
}\mathbb{E}\left(  \int_{0}^{T}\left(  P_{t}-V_{t}\right)  ^{2}\sigma_{V}%
^{2}(t)\mathrm{d}t\right)  <\infty,
\]
and consequently $\mathbb{E}\left(  \int_{0}^{T}\bigtriangledown
_{V}I(t,P_{\cdot t},,V_{t})\mathrm{d}V_{t}\right)  =0$. Also $\mathbb{E}%
\left(  \int_{0}^{T}\left(  P_{t}-V_{t}\right)  \mathrm{d}Z_{t}\right)  =0$
since $\mathbb{E}\left(  \int_{0}^{T}\left(  P_{t}-V_{t}\right)  ^{2}%
\sigma_{Z}^{2}(t)\mathrm{d}t\right)  <\infty$. If $(iii)$
\[
G(t,y_{\cdot t})=g(t,y_{t})
\]%
\[
\bigtriangledown_{Y}P_{t}\left(  P_{\cdot t}^{V_{t}-P_{t}}\right)
=g(t,V_{t})
\]
Finally,
\[
\bigtriangledown_{V}^{2}I(t,P_{\cdot t},,V_{t})=\frac{1}{\bigtriangledown
_{Y}P_{t}\left(  P_{\cdot t}^{V_{t}-P_{t}}\right)  }=\frac{1}{g(t,V_{t}%
)}<\frac{1}{C}%
\]
in a way that
\[
\frac{1}{2}\int_{0}^{T}\bigtriangledown_{V}^{2}I(t,P_{\cdot t},,V_{t}%
)\sigma_{V}^{2}(t)\mathrm{d}t,\text{ and }\frac{1}{2}\int_{0}^{T}%
\bigtriangledown_{Y}P_{t}\left(  P_{\cdot t}^{V_{t}-P_{t}}\right)  \sigma
_{Z}^{2}(t)\mathrm{d}t
\]
only depend on $V.$ From this, it is easy to see that
\[
I(t,P_{\cdot t},,V_{t})=I(t,P_{t},V_{t})
\]
By (\ref{first}) and $(i)$, we have
\[
\partial_{2}I(T,P_{T},V_{T})=\frac{P_{T}-V_{T}}{g(T,P_{T})}=0
\]
and, by (\ref{secon}) and $(i)$, we obtain
\[
\partial_{22}I(T,P_{T},V_{T})=\frac{1}{g(T,P_{T})}-\frac{\left(  P_{T}%
-V_{T}\right)  \partial_{2}g(T,P_{T})}{g^{2}(T,P_{T})}=\frac{1}{g(T,P_{T}%
)}>0.
\]
So we have a maximum of $-\mathbb{E}\left(  I(T,P_{T},V_{T})\right)  .$ We
also have that $P$ is an $\mathbb{F}$-martingale from (\ref{new}), the
integrability condition and $(ii)$. Therefore by $(i)$ and since $V$ is an
$\mathbb{H}$-martingale, we obtain that
\[
P_{t}=\mathbb{E}\left(  P_{T}|\mathcal{F}_{t}\right)  =\mathbb{E}\left(
V_{T}|\mathcal{F}_{t}\right)  =\mathbb{E}\left(  \mathbb{E}\left(
V_{T}|\mathcal{H}_{t}\right)  |\mathcal{F}_{t}\right)  =\mathbb{E}\left(
V_{t}|\mathcal{F}_{t}\right)  .
\]
Now we show that $(i)$ and $(ii)$ are necessary conditions. In fact $(i)$ is
necessary by Proposition \ref{Nefi}. By (\ref{new}) and the functional Itô's
formula, we have
\[
\mathrm{d}P_{t}=\bigtriangledown_{Y}P_{t}\mathrm{d}Y_{t},
\]
then the result follows from the fact that $P$ is an $\mathbb{F}$-martingale
and $\bigtriangledown_{Y}P_{t}\geq C>0$.

Condition $(iii)$ is also necessary if we want to maximize $\mathbb{E}\left(
\left.  \int_{0}^{T}\left(  V_{t}-P_{t}\right)  \mathrm{d}X_{t}\right\vert
\mathcal{H}_{0}\right)  $. In fact, by (\ref{maxwealth})
\begin{align}
&  \int_{0}^{T}\left(  V_{t}-P_{t}\right)  \mathrm{d}X_{t}-I(0,P_{0}%
,V_{0})\nonumber\\
&  =-I(T,P_{\cdot T},V_{T})+\frac{1}{2}\int_{0}^{T}J_{t}(t,P_{\cdot t}%
,,V_{t})\mathrm{d}t+\int_{0}^{T}\left(  P_{t}-V_{t}\right)  \mathrm{d}%
Z_{t}+\int_{0}^{T}\bigtriangledown_{V}I(t,P_{\cdot t},,V_{t})\mathrm{d}V_{t}.
\end{align}
with
\begin{align*}
J_{t}(t,P_{\cdot t},,V_{t})  &  :=\bigtriangledown_{V}^{2}I(t,P_{\cdot
t},,V_{t})\sigma_{V}^{2}(t)+\bigtriangledown_{Y}P_{t}\left(  P_{\cdot
t}^{V_{t}-P_{t}}\right)  \sigma_{Z}^{2}(t)\\
&  =\frac{1}{\bigtriangledown_{Y}P_{t}\left(  P_{\cdot t}^{V_{t}-P_{t}%
}\right)  }\sigma_{V}^{2}(t)+\bigtriangledown_{Y}P_{t}\left(  P_{\cdot
t}^{V_{t}-P_{t}}\right)  \sigma_{Z}^{2}(t).
\end{align*}
then, for fixed $\omega,$ $J_{t}$ is an unbounded convex function of
$\bigtriangledown_{Y}P_{t}\left(  P_{\cdot t}^{V_{t}-P_{t}}\right)  .$
Therefore we can modify the strategy in order to get $\bigtriangledown
_{Y}P_{t}\left(  P_{\cdot t}^{V_{t}-P_{t}}\right)  $ as large (or small if
$\sigma_{V}^{2}(t)\neq0$) as we want and at the same time keeping $P_{T}%
=V_{T}.$ That shows that the optimal wealth is not bounded and there is not
equilibrium in such a situation except if $\bigtriangledown_{Y}P_{t}(y_{\cdot
t})=g(t,y_{t})$, that it is a function of the spot value. In other words if
\[
G(t,y_{\cdot t})=g(t,y_{t})
\]
where $g(t,y)\geq C>0$ is $\mathcal{C}^{1,2}.$
\end{proof}

We can obtain an analogous result to Theorem \ref{T1} for the non-risk-neutral
case when the utility function is $U(x)=\gamma e^{\gamma x},\gamma<0$ and when
$V_{t}\equiv V.$
%We also assume that $\tau\equiv T$ and $\mathbb{E}\left(  \exp\left\{  \frac{1}{2}\gamma^{2}\int_{0}^{T}\left(P_{t}-V\right)  ^{2}\sigma_{Z}^{2}(t)\mathrm{d}t\right\}  \right)  <\infty$.

We reduce the admissibility strategies to that fullfilling
\[
\mathbf{(\mathbf{A1}^{\prime})}\text{, }\mathbb{E}\left(  \exp\left\{
\frac{1}{2}\gamma^{2}\int_{0}^{T}\left(  P_{t}-V\right)  ^{2}\sigma_{Z}%
^{2}(t)\mathrm{d}t\right\}  \right)  \text{ and }\mathbb{E}\left(  \int%
_{0}^{T}G^{2}(t,P_{t})\sigma_{Z}^{2}(t)\mathrm{d}t\right)  <\infty
\]

\begin{theorem}
Let $V_{t}\equiv V$ and $\tau=T.$ For any $t<T$, let $P\in\mathbb{C}_{b}%
^{1,3}$ be a price functional such that
\begin{equation}
\mathcal{D}_{t}P_{t}+\frac{1}{2}\bigtriangledown_{Y}^{2}P_{t}\sigma_{Z}%
^{2}(t)=0, \label{NRN1}%
\end{equation}
with%
\begin{equation}
\mathcal{L}P_{t}:=\mathcal{D}_{t}\bigtriangledown_{Y}P_{t}-\bigtriangledown
_{Y}\mathcal{D}_{t}P_{t}=\gamma\sigma_{Z}^{2}(t)\left(  \bigtriangledown
_{Y}P_{t}\right)  ^{2} \label{NRN2}%
\end{equation}
and
\[
\bigtriangledown_{Y}P_{t}=G(t,P_{\cdot t}),
\]
where where $G(t,y_{\cdot t})\geq C>0$ is $\mathbb{C}^{1,2}$.

Then there is an equilibrium in the non-risk-neutral case, with utility
function $U(x)=\gamma e^{\gamma x}$, if and only if
\[
(i)\text{ }P_{T}=V,\qquad(ii)\text{ }Y\text{ is an }\mathbb{F}%
\text{-martingale \ \ \ \ }(iii)\text{ }G(t,y_{\cdot t})=g(t,y_{t})
\]
where $g(t,y)\geq C>0$ is $\mathcal{C}^{1,2}.$
\end{theorem}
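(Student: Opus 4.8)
The plan is to reproduce the architecture of the proof of Theorem \ref{T1}, now applied to the exponential transform of the wealth, and to exploit the fact that the right-hand side of the modified commutator condition \eqref{NRN2} is calibrated precisely to absorb the It\^o correction generated by the exponential utility. I would keep the \emph{same} auxiliary functional $I(t,P_{\cdot t},V)$ used there, the one characterised by $\nabla_{P}I=(P_{t}-V)/\nabla_{Y}P_{t}(P_{\cdot t})$ as in \eqref{first}, and study the process $\Phi_{t}:=\exp\{\gamma(W_{t}+I_{t})\}$ with $I_{t}:=I(t,P_{\cdot t},V)$ and $W_{t}=\int_{0}^{t}(V-P_{s})\,\mathrm{d}X_{s}$ for the absolutely continuous strategies to which the problem has been reduced. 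The target is to show that, once a purely deterministic drift is stripped off, $\Phi$ is a martingale whose terminal expectation is independent of the strategy, so that the insider can influence $\mathbb{E}(\gamma e^{\gamma W_{T}}\mid\mathcal{H}_{0})$ only through the terminal term $I_{T}$.

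For the sufficiency direction I would first differentiate $\Phi$ by the functional It\^o formula (Theorem \ref{FIT}), using $\mathrm{d}P_{t}=\nabla_{Y}P_{t}\,\mathrm{d}Y_{t}$ and $\mathrm{d}[P,P]_{t}=(\nabla_{Y}P_{t})^{2}\sigma_{Z}^{2}(t)\,\mathrm{d}t$, both consequences of \eqref{NRN1}. Setting $N_{t}=W_{t}+I_{t}$, the terms linear in the control $\theta$ cancel identically since $\nabla_{P}I\cdot\nabla_{Y}P_{t}=P_{t}-V$, leaving $\mathrm{d}N_{t}=[\mathcal{D}_{t}I+\tfrac12\nabla_{P}^{2}I\,(\nabla_{Y}P_{t})^{2}\sigma_{Z}^{2}(t)]\,\mathrm{d}t+(P_{t}-V)\,\mathrm{d}Z_{t}$, so the drift of $\Phi$ is $\gamma\Phi_{t}\bigl[\mathcal{D}_{t}I+\tfrac12\nabla_{P}^{2}I(\nabla_{Y}P_{t})^{2}\sigma_{Z}^{2}(t)+\tfrac{\gamma}{2}(P_{t}-V)^{2}\sigma_{Z}^{2}(t)\bigr]$. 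The crux is to evaluate this bracket. Differentiating \eqref{NRN1} vertically, inserting \eqref{NRN2}, and converting to functional-of-$P$ derivatives by the chain rule exactly as in Theorem \ref{T1} yields $\mathcal{D}_{t}(\nabla_{Y}P)+\tfrac12\nabla_{P}^{2}(\nabla_{Y}P)(\nabla_{Y}P)^{2}\sigma_{Z}^{2}(t)=\gamma\sigma_{Z}^{2}(t)(\nabla_{Y}P)^{2}$; feeding this into the integral representation of $\mathcal{D}_{t}I+\tfrac12\nabla_{P}^{2}I(\nabla_{Y}P_{t})^{2}\sigma_{Z}^{2}$ from the proof of Theorem \ref{T1} produces an extra $-\tfrac{\gamma}{2}(P_{t}-V)^{2}\sigma_{Z}^{2}(t)$, which cancels exactly the $+\tfrac{\gamma}{2}(P_{t}-V)^{2}\sigma_{Z}^{2}(t)$ coming from the quadratic variation of $\gamma N$. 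This is the single place where the nonzero right-hand side of \eqref{NRN2} is essential. The bracket then collapses to $\tfrac12\nabla_{Y}P_{t}(P_{\cdot t}^{V-P_{t}})\sigma_{Z}^{2}(t)$, which by $(iii)$ equals $\tfrac12 g(t,V)\sigma_{Z}^{2}(t)$, deterministic because $V$ is constant.

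I would then set $\tilde{\Phi}_{t}:=\Phi_{t}\exp\{-\tfrac{\gamma}{2}\int_{0}^{t}g(s,V)\sigma_{Z}^{2}(s)\,\mathrm{d}s\}$, which solves $\mathrm{d}\tilde{\Phi}_{t}=\gamma\tilde{\Phi}_{t}(P_{t}-V)\,\mathrm{d}Z_{t}$ and is a nonnegative local martingale. The assumed bound $\mathbb{E}\exp\{\tfrac12\gamma^{2}\int_{0}^{T}(P_{t}-V)^{2}\sigma_{Z}^{2}(t)\,\mathrm{d}t\}<\infty$ is Novikov's condition, upgrading $\tilde{\Phi}$ to a true martingale, so that $\mathbb{E}(\Phi_{T}\mid\mathcal{H}_{0})=e^{\gamma I(0,P_{0},V)}\exp\{\tfrac{\gamma}{2}\int_{0}^{T}g(t,V)\sigma_{Z}^{2}(t)\,\mathrm{d}t\}$ does not depend on the strategy. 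Since $I_{T}\ge0$ with equality iff $P_{T}=V$ and $\gamma<0$, the pointwise identity $\gamma e^{\gamma W_{T}}=\gamma\Phi_{T}e^{-\gamma I_{T}}\le\gamma\Phi_{T}$ gives $J(X)\le\gamma\,\mathbb{E}(\Phi_{T}\mid\mathcal{H}_{0})$ with equality iff $P_{T}=V$ a.s., so every strategy forcing $P_{T}=V$ is optimal. Competitiveness then follows as in Theorem \ref{T1}: \eqref{NRN1} and $(ii)$ make $P$ an $\mathbb{F}$-martingale, whence $(i)$ gives $P_{t}=\mathbb{E}(V\mid\mathcal{F}_{t})=\mathbb{E}(V_{t}\mid\mathcal{F}_{t})$.

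For necessity, $(i)$ is immediate from Proposition \ref{Nefi} since $\tau=T$ is predictable; $(ii)$ follows because \eqref{NRN1} and Theorem \ref{FIT} give $\mathrm{d}P_{t}=\nabla_{Y}P_{t}\,\mathrm{d}Y_{t}$, and a strictly positive integrand $\nabla_{Y}P_{t}\ge C>0$ together with $P$ being an $\mathbb{F}$-martingale forces the finite-variation part of $Y$ to vanish; and $(iii)$ I would obtain by the manipulation argument of Theorem \ref{T1}: the residual drift $\tfrac12\nabla_{Y}P_{t}(P_{\cdot t}^{V-P_{t}})\sigma_{Z}^{2}(t)$ enters the objective through $\mathbb{E}(\Phi_{T})$, and if it still depended on the past $P_{\cdot t-}$ the insider could locally steer the path of $P$ to alter this contribution, contradicting optimality of a competitive price; excluding this forces $\nabla_{Y}P_{t}=G(t,P_{\cdot t})=g(t,P_{t})$. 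I expect the main obstacle to be the middle step, namely carrying the Theorem \ref{T1} integral computation through with the nonzero commutator and verifying that the exponential It\^o correction is cancelled exactly, together with justifying the true-martingale property of $\tilde{\Phi}$ uniformly over the admissible class (rather than merely the supermartingale property) from the stated moment bound.
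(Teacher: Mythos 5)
Your proposal is correct and is essentially the paper's own argument: the same auxiliary functional $I(t,P_{\cdot t},V)$, the same key drift identity $\mathcal{D}_{t}(\nabla_{Y}P)+\tfrac12\nabla_{P}^{2}(\nabla_{Y}P)(\nabla_{Y}P)^{2}\sigma_{Z}^{2}=\gamma\sigma_{Z}^{2}(\nabla_{Y}P)^{2}$ derived from \eqref{NRN1}--\eqref{NRN2} via the chain rule, the same Novikov/Dol\'eans-Dade step under the stated integrability, the same use of $(iii)$ to make the residual drift deterministic, and the same necessity arguments. The only difference is presentational: you track the process $\Phi_{t}=e^{\gamma(W_{t}+I_{t})}$ and strip off its deterministic drift, whereas the paper writes the equivalent terminal-time pathwise identity and exponentiates it; the algebra is identical (and your version keeps the factor $\tfrac12$ on the $\nabla_{Y}P_{t}\left(P_{\cdot t}^{V-P_{t}}\right)\sigma_{Z}^{2}$ term, which the paper's displayed formulas drop).
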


\begin{proof}
As in the previous proof, let
\[
I(t,y_{\cdot t},,v):=\int_{v}^{y_{t}}\frac{z-v}{\bigtriangledown_{Y}%
P_{t}\left(  y_{\cdot t}^{z-y_{t}}\right)  }\mathrm{d}z,
\]
then, by the functional Itô formula,%
\begin{align*}
I(T,P_{\cdot T},V)  &  =I(0,P_{0},V)+\int_{0}^{T}\mathcal{D}_{t}I(t,P_{\cdot
t},,V)\mathrm{d}t+\int_{0}^{T}\bigtriangledown_{P}I(t,P_{\cdot t}%
,,V_{t})\mathrm{d}P_{t}\\
&  +\frac{1}{2}\int_{0}^{T}\bigtriangledown_{P}^{2}I(t,P_{\cdot t},,V)\left(
\bigtriangledown_{Y}P_{t}\left(  P_{\cdot t}\right)  \right)  ^{2}\sigma
_{Z}^{2}(t)\mathrm{d}t,
\end{align*}
and by (\ref{NRN1}) and (\ref{NRN2})
\[
\mathcal{D}_{t}\left(  \bigtriangledown_{Y}P\right)  +\frac{1}{2}\nabla
_{P}^{2}\left(  \bigtriangledown_{Y}P_{t}\right)  \left(  \bigtriangledown
_{Y}P_{t}\right)  ^{2}\sigma_{Z}^{2}(t)=\gamma\sigma_{Z}^{2}(t).
\]
Analougously to the previous theorem we have
\begin{align}
&  \mathcal{D}_{t}I(t,P_{\cdot t},,V_{t})+\frac{1}{2}\bigtriangledown_{P}%
^{2}I(t,P_{\cdot t},,V_{t})\left(  \bigtriangledown_{Y}P_{t}\left(  P_{\cdot
t}\right)  \right)  ^{2}\sigma_{Z}^{2}(t)\nonumber\\
&  =-\gamma\int_{V}^{P_{t}}(z-v)\mathrm{d}z+\bigtriangledown_{Y}P_{t}\left(
P_{\cdot t}^{V-P_{t}}\right)  \sigma_{Z}^{2}(t)\\
&  =-\frac{\gamma}{2}\left(  P_{t}-V\right)  ^{2}\sigma_{Z}^{2}%
(t)+\bigtriangledown_{Y}P_{t}\left(  P_{\cdot t}^{V-P_{t}}\right)  \sigma
_{Z}^{2}(t)
\end{align}
and
\begin{align*}
&  \int_{0}^{T}\left(  V-P_{\cdot t}\right)  \mathrm{d}X_{t}-\left(
I(0,P_{0},V)+\int_{0}^{T}\bigtriangledown_{Y}P_{t}\left(  P_{\cdot t}%
^{V-P_{t}}\right)  \sigma_{Z}^{2}(t)\mathrm{d}t\right) \\
&  =-I(T,P_{\cdot T},V)+\int_{0}^{T}\left(  P_{t}-V\right)  \mathrm{d}%
Z_{t}-\frac{1}{2}\gamma\int_{0}^{T}\left(  P_{t}-V\right)  ^{2}\sigma_{Z}%
^{2}(t)\mathrm{d}t.
\end{align*}
Therefore,
\begin{align*}
&  \gamma\exp\left\{  \gamma\int_{0}^{T}\left(  V-P_{\cdot t}\right)
\mathrm{d}X_{t}\right\}  \exp\left\{  -\gamma I(0,P_{0},V)-\gamma\int_{0}%
^{T}\bigtriangledown_{Y}P_{t}\left(  P_{\cdot t}^{V-P_{t}}\right)  \sigma
_{Z}^{2}(t)\mathrm{d}t\right\} \\
&  =\gamma\exp\left\{  -\gamma I(T,P_{\cdot T},V)\right\}  \exp\left\{
\gamma\int_{0}^{T}\left(  P_{t}-V\right)  \mathrm{d}Z_{t}-\frac{1}{2}%
\gamma^{2}\int_{0}^{T}\left(  P_{t}-V\right)  ^{2}\sigma_{Z}^{2}%
(t)\mathrm{d}t\right\}  .
\end{align*}
Now if $(iii)$ it can be seen that
\[
I(T,P_{\cdot T},V)=I(T,P_{T},V)
\]
and consequently
\[
\partial_{2}I(T,P_{T},V_{T})=\frac{P_{T}-V_{T}}{g(T,P_{T})}=0
\]%
\[
\partial_{22}I(T,P_{T},V)=\frac{1}{g(T,P_{T})}-\left(  P_{T}-V\right)
\partial_{2}g(T,P_{T})=\frac{1}{g(T,P_{T})}>0.
\]
So the minimum value of $I(T,P_{T},V)$ is when $P_{T}=V$ and its value is
$I(T,P_{T},V):=\int_{V}^{P_{T}}\frac{z-V}{g(t,z)}\mathrm{d}z=0.$ Then, since
$\gamma<0$,
\begin{align*}
&  \mathbb{E}\left(  \gamma\exp\left\{  \gamma\int_{0}^{T}\left(
V-P_{t}\right)  \mathrm{d}X_{t}\right\}  \exp\left\{  -\gamma I(0,P_{0}%
,V)-\gamma\int_{0}^{T}g(t,V)\mathrm{d}t\right\}  \right) \\
&  \leq\gamma\mathbb{E}\left(  \exp\left\{  \gamma\int_{0}^{T}\left(
P_{t}-V\right)  \mathrm{d}Z_{t}-\frac{1}{2}\gamma^{2}\int_{0}^{T}\left(
P_{t}-V\right)  ^{2}\sigma_{Z}^{2}(t)\mathrm{d}t\right\}  \right)  =\gamma.
\end{align*}
And we get the maximum value of $\mathbb{E}\left(  \gamma\exp\left\{
\gamma\int_{0}^{T}\left(  V-P_{t}\right)  \mathrm{d}X_{t}\right\}  \right)  $
when $P_{T}=V.$ The rest of the proof is analogous to the one of the previous theorem.
\end{proof}

\section{Necessary conditions for the equilibrium pricing rules}

In this section we study general necessary conditions to obtain an equilibrium
and we see that the classes of price functionals of the previous section,
characterised by the relationships (\ref{new}) and (\ref{new2}) for the
risk-neutral insider and (\ref{NRN1}) and (\ref{NRN2}) for the risk-averse
one, are actually justified by the arguments that follow. Note that in this
section, the release time of information $\tau$ is assumed predictable and
bounded. A remark at the end of the session deals with the case of $\tau$
independent of the observable variables.

%Now we are going to see that for a large class of functionals (\ref{new}) and (\ref{new2}) \ are necessary conditions to have an equilibrium in the risk-neutral case and analogously for (\ref{NRN1}) and (\ref{NRN2}) in the risk-averse case with exponential utility. \ In both cases we shall assume\ in this section that the release time, $\tau$, is bounded and predictive.

Here below we study the effect of an $\varepsilon$-perturbation of the insider
strategies:
\[
\mathrm{d}X_{t}^{\left(  \varepsilon\right)  }:=\mathrm{d}X_{t}+\varepsilon
\beta_{t}\mathrm{d}t,
\]
where $\beta$ is a bounded adapted processes, in the prices $P_{t}%
=P_{t}(Z_{\cdot t}+X_{\cdot t}).$

From now on, we are going to assume that there exist a \emph{strictly
positive} $\mathcal{B(}\mathbb{R}_{+})\otimes\mathcal{P}^{{\mathbb{F}}}%
$-measurable\footnote{$\mathcal{P}^{\mathbb{F}}$ denotes the $\mathbb{F}%
$-predictable $\sigma$-field.} function $K(s,t)(\omega)$, $0\leq s\leq
t\leq\tau$, $\omega\in\Omega$, continuous for all $0\leq s\leq t\leq\tau$,
such that, for a.a. $t$,
\[
\mathbf{(R)}\text{ \ }\qquad P_{t}^{(\varepsilon)}-P_{t}=\varepsilon\int%
_{0}^{t}K(s,t)\beta_{s}\mathrm{d}s+o\left(  \varepsilon\right)  R_{t},
\]
when we make an $\varepsilon$-perturbation of the strategies. Here above
$P_{t}^{(\varepsilon)}:=P_{t}(Z_{\cdot t}+X_{\cdot t}^{(\varepsilon)}),$ and
$R$ is a bounded progressively measurable process, independent of $\beta$.
Observe that the random variables $K(s,t)$ are strictly positive because
$P_{t}=P_{t}(Y_{\cdot t})$ is a strictly increasing functional. Note that, as
a consequence of $\mathbf{(R)}$, we have that
\[
\lim_{\varepsilon\rightarrow0}\frac{P_{t}^{(\varepsilon)}-P_{t}}{\varepsilon
}=\int_{0}^{t}K(s,t)\beta_{s}\mathrm{d}s.
\]

\begin{proposition}
\label{KV} Assume that $P$ \ is continuous for fixed times and that, for any
bounded adapted process $\beta$, $\mathbf{(R)}$ holds by means of the kernels
$K$ described above. Then
\[
\nabla_{Y}P_{t}=K(t,t).
\]

\end{proposition}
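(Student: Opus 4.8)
The plan is to test the abstract expansion $\mathbf{(R)}$ against a family of perturbations that concentrate the extra demand in a shrinking window just before $t$, and to read off the diagonal value $K(t,t)$ in the limit. Fix $t\le\tau$ and for small $\delta>0$ take the bounded (deterministic, hence adapted) process $\beta^{\delta}_{s}:=\frac{1}{\delta}\mathbf{1}_{(t-\delta,t]}(s)$. Writing $P^{(\varepsilon,\delta)}$ for the price obtained from $X^{(\varepsilon)}$ with this $\beta^{\delta}$, assumption $\mathbf{(R)}$ gives
\[
\lim_{\varepsilon\downarrow0}\frac{P^{(\varepsilon,\delta)}_{t}-P_{t}}{\varepsilon}=\int_{0}^{t}K(s,t)\beta^{\delta}_{s}\,\mathrm{d}s=\frac{1}{\delta}\int_{t-\delta}^{t}K(s,t)\,\mathrm{d}s .
\]
Since $s\mapsto K(s,t)$ is continuous, letting $\delta\downarrow0$ the right-hand side tends to $K(t,t)$. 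It then remains to identify this same double limit with $\nabla_{Y}P_{t}$.

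For that I would compute the left-hand side from the functional side. The perturbed demand is $Y^{(\varepsilon)}_{u}=Y_{u}+\varepsilon\int_{0}^{u}\beta^{\delta}_{s}\,\mathrm{d}s$, which coincides with $Y$ on $[0,t-\delta]$, so $P^{(\varepsilon,\delta)}_{t-\delta}=P_{t-\delta}$. Applying the functional It\^o formula (Theorem \ref{FIT}) to $P$ along $Y^{(\varepsilon)}$ and along $Y$ over $(t-\delta,t]$ and subtracting, and using that the added drift leaves the quadratic variation unchanged (so $[Y^{(\varepsilon)},Y^{(\varepsilon)}]=[Y,Y]$ and $\mathrm{d}Y^{(\varepsilon)}_{s}=\mathrm{d}Y_{s}+\frac{\varepsilon}{\delta}\mathrm{d}s$ on that interval), the only contribution that survives division by $\varepsilon$ at first order is the one generated by the extra drift, giving $\frac{1}{\delta}\int_{t-\delta}^{t}\nabla_{Y}P^{(\varepsilon)}_{s}\,\mathrm{d}s$. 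Hence, for each fixed $\delta$, the limit above also equals
\[
\frac{1}{\delta}\int_{t-\delta}^{t}\nabla_{Y}P_{s}\,\mathrm{d}s+E^{\delta},
\]
where $E^{\delta}$ collects the contributions of the differences $\mathcal{D}_{s}P^{(\varepsilon)}-\mathcal{D}_{s}P$, $\nabla_{Y}P^{(\varepsilon)}-\nabla_{Y}P$ and $\nabla_{Y}^{2}P^{(\varepsilon)}-\nabla_{Y}^{2}P$ integrated over $(t-\delta,t]$. Because these differences are $O(\varepsilon)$, their quotients by $\varepsilon$ stay bounded while the domain of integration has length $\delta$ and quadratic variation $\int_{t-\delta}^{t}\sigma_{Z}^{2}(s)\mathrm{d}s=O(\delta)$, so $E^{\delta}=O(\delta)+O(\sqrt{\delta})\to0$. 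Left-continuity of $s\mapsto\nabla_{Y}P_{s}$ gives $\frac{1}{\delta}\int_{t-\delta}^{t}\nabla_{Y}P_{s}\,\mathrm{d}s\to\nabla_{Y}P_{t}$. Since for each $\delta$ the two evaluations of $\lim_{\varepsilon\downarrow0}(P^{(\varepsilon,\delta)}_{t}-P_{t})/\varepsilon$ must agree, equating them and letting $\delta\downarrow0$ yields $\nabla_{Y}P_{t}=K(t,t)$.

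The hard part will be the step $E^{\delta}\to0$: this is precisely where the ramp-shaped perturbation $\beta^{\delta}$ differs from an idealised vertical bump at $t$, and where an uncritical appeal to sup-norm continuity of $P$ fails, since the perturbed path stays at distance $\sim\varepsilon$ from the bumped path on $(t-\delta,t)$ no matter how small $\delta$ is. The functional It\^o expansion circumvents this by localising the discrepancy to the shrinking interval $(t-\delta,t]$; what must then be verified is that the first-order sensitivities of the \emph{derivatives} $\mathcal{D}_{s}P$, $\nabla_{Y}P$, $\nabla_{Y}^{2}P$ to the perturbation are controlled, which I would obtain from the smoothness assumed on $P$ (ensuring, e.g., $\nabla_{Y}P^{(\varepsilon)}_{s}-\nabla_{Y}P_{s}=O(\varepsilon)$ uniformly for $s$ near $t$) together with the independence of the remainder $R$ from $\beta$ in $\mathbf{(R)}$. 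Finally, since all quantities are random, I would argue $\omega$-wise (the kernels $K$ being continuous in $(s,t)$ for a.a. $\omega$) and pass to a subsequence to handle the stochastic-integral part of $E^{\delta}$.
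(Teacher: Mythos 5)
Your starting point coincides with the paper's: the paper takes $\beta^{(r)}_{s}=\frac{1}{t-r}\mathbf{1}_{[r,t]}(s)$ (your $\beta^{\delta}$ with $r=t-\delta$) and likewise uses continuity of $K(\cdot,t)$ to produce the diagonal value $K(t,t)$. The difference is the order of limits, and it matters. The paper keeps $\varepsilon$ fixed and lets $r\to t$ \emph{first}: the ramp-perturbed path is then replaced by the vertically bumped path $Y^{\varepsilon}_{\cdot t}$, so $\mathbf{(R)}$ becomes $P_{t}(Y^{\varepsilon}_{\cdot t})-P_{t}(Y_{\cdot t})=\varepsilon K(t,t)+o(\varepsilon)R_{t}$, and dividing by $\varepsilon$ and letting $\varepsilon\to0$ is nothing but the definition of the vertical derivative; no It\^o formula and no error analysis enter at all. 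You instead let $\varepsilon\to0$ first, which destroys this definitional identification: what you obtain is a directional derivative along an absolutely continuous perturbation, and you must then convert it into the vertical derivative. That conversion is what forces you to invoke the functional It\^o formula.

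This is where your proposal has a genuine gap as a proof of the proposition as stated. The hypotheses are only that $P$ is continuous for fixed times and that $\mathbf{(R)}$ holds; $P\in\mathbb{C}_{b}^{1,2}$ is \emph{not} assumed, so Theorem \ref{FIT} is not available, and the left-continuity of $s\mapsto\nabla_{Y}P_{s}$ invoked at the end is likewise not given. More importantly, your decisive estimate $E^{\delta}\to0$ rests on the claim that $\mathcal{D}_{s}P^{(\varepsilon)}-\mathcal{D}_{s}P$, $\nabla_{Y}P^{(\varepsilon)}-\nabla_{Y}P$ and $\nabla_{Y}^{2}P^{(\varepsilon)}-\nabla_{Y}^{2}P$ are $O(\varepsilon)$ uniformly near $t$; this is not ``smoothness assumed on $P$'' but an additional $\mathbf{(R)}$-type hypothesis imposed on the \emph{derivatives} of $P$, which you neither have nor prove --- and it is essentially the very sensitivity statement you are trying to establish, so the argument risks circularity. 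To your credit, the concern motivating your detour is real: the ramp paths remain at sup-norm distance of order $\varepsilon$ from the bumped path for every $r<t$, so the paper's interchange of limit and functional does not follow from sup-norm continuity at fixed times alone (the convergence of ramps to a bump is of M1 type, not uniform), and the paper glosses over this. But flagging that subtlety does not close your own argument: as written, your proof requires hypotheses strictly stronger than the proposition's, and its key step is asserted rather than established, whereas the paper's two-line argument needs neither the It\^o formula nor any error bookkeeping.
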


\begin{proof}
Set, for fixed $t$ and with $r<t$,
\[
\beta_{s}^{(r)}:=\frac{1}{t-r}\mathbf{1}_{[r,t]}(s).
\]
Taking limits in $\mathbf{(R)}$ when $r\rightarrow t$ we have that, a.s.
$\mathbb{P\otimes}Leb,$
\[
P_{t}(Y_{\cdot t}^{(\varepsilon)})-P_{t}(Y_{\cdot t})=\varepsilon
K(t,t)+o\left(  \varepsilon\right)  R_{t}%
\]
By this we can conclude.
\end{proof}

The next result presents a factorisation property of the kernel and a
sufficient condition to obtain it.

\begin{proposition}
\label{factor}Let $G$ and $F$ be $\mathcal{C}^{1,2}$. Assume that
\begin{equation}
\nabla_{Y}P_{t}=G(t,P_{t}), \label{CV}%
\end{equation}
and
\begin{equation}
\mathcal{D}_{t}P_{t}+\frac{1}{2}\bigtriangledown_{Y}^{2}P_{t}\sigma_{Z}%
^{2}(t)=F(t,P_{t}) \label{CO}%
\end{equation}
hold. Then the kernel $K$ admits factorisation
\begin{equation}
K(s,t)=K_{1}(s)K_{2}(t), \label{fac}%
\end{equation}
with
\[
K_{2}(t)=\mathcal{E}\left(  \int_{0}^{t}\partial_{2}G(s,P_{s})\mathrm{d}%
Y_{s}\right)  \exp\left(  \int_{0}^{t}\partial_{2}F(s,P_{s})\mathrm{d}%
s\right)  ,
\]
where $\mathcal{E}$ is the stochastic exponential, and
\[
K_{1}(t)=\frac{G(t,P_{t})}{K_{2}(t)}.
\]
Moreover $\left[  K_{1},K_{1}\right]  \equiv0.$
\end{proposition}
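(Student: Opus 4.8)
The plan is to translate the two structural identities into an ordinary stochastic differential equation for $P$, then to linearise that equation with respect to the $\varepsilon$-perturbation and read off the kernel by comparison with $\mathbf{(R)}$. First I would apply the functional It\^o formula (Theorem \ref{FIT}) to $P_t=P_t(Y_{\cdot t})$. Since $\mathrm{d}[Y,Y]_t=\sigma_Z^2(t)\mathrm{d}t$ under $\mathbf{(\mathbf{A1}^{\prime})}$, the horizontal and the second-order vertical terms group together, and the hypotheses (\ref{CV}) and (\ref{CO}) collapse the expansion to the Markovian equation
\[
\mathrm{d}P_t=F(t,P_t)\,\mathrm{d}t+G(t,P_t)\,\mathrm{d}Y_t .
\]
Crucially, the same identity holds along the perturbed demand $Y^{(\varepsilon)}=Y+\varepsilon\int_0^{\cdot}\beta_s\,\mathrm{d}s$: the relations (\ref{CV}) and (\ref{CO}) are functional identities valid for every continuous path with quadratic variation $\sigma_Z^2$, and the drift perturbation does not alter $[Y,Y]$, so that $\mathrm{d}P_t^{(\varepsilon)}=F(t,P_t^{(\varepsilon)})\,\mathrm{d}t+G(t,P_t^{(\varepsilon)})\,(\mathrm{d}Y_t+\varepsilon\beta_t\,\mathrm{d}t)$.

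Next I would differentiate in $\varepsilon$ at $\varepsilon=0$. Setting $\pi_t:=\lim_{\varepsilon\to0}\varepsilon^{-1}(P_t^{(\varepsilon)}-P_t)=\int_0^t K(s,t)\beta_s\,\mathrm{d}s$, whose existence is guaranteed by $\mathbf{(R)}$, the derivative process satisfies the linear equation
\[
\mathrm{d}\pi_t=\bigl[\partial_2F(t,P_t)\,\pi_t+G(t,P_t)\,\beta_t\bigr]\,\mathrm{d}t+\partial_2G(t,P_t)\,\pi_t\,\mathrm{d}Y_t,\qquad \pi_0=0 .
\]

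The homogeneous part of this equation has fundamental solution $\Phi_t=\exp\bigl(\int_0^t\partial_2F(s,P_s)\,\mathrm{d}s\bigr)\,\mathcal{E}\bigl(\int_0^t\partial_2G(s,P_s)\,\mathrm{d}Y_s\bigr)$, which is precisely $K_2(t)$. Because the forcing $G(t,P_t)\beta_t$ sits in the finite-variation part only, the variation-of-constants formula carries no It\^o correction bracket and gives $\pi_t=K_2(t)\int_0^t K_2(s)^{-1}G(s,P_s)\beta_s\,\mathrm{d}s$; comparing with $\pi_t=\int_0^t K(s,t)\beta_s\,\mathrm{d}s$ and letting $\beta$ range over all bounded adapted processes forces $K(s,t)=K_2(t)\,G(s,P_s)/K_2(s)$, i.e. the factorisation (\ref{fac}) with $K_1(s)=G(s,P_s)/K_2(s)$. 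For $[K_1,K_1]\equiv0$ I would compare martingale parts: by It\^o's formula and the price equation of the first step, the martingale part of $\mathrm{d}\log G(t,P_t)$ is $\partial_2G(t,P_t)\,\mathrm{d}Y_t$, while the explicit stochastic-exponential form of $K_2$ gives the same $\partial_2G(t,P_t)\,\mathrm{d}Y_t$ for $\mathrm{d}\log K_2(t)$. Since $K_1=G(\cdot,P)/K_2$, these contributions cancel, so $\log K_1$ has no martingale part, $K_1$ is of finite variation, and $[K_1,K_1]\equiv0$.

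The step I expect to be the main obstacle is the linearisation. Rigorously justifying that $\pi_t$ exists, that it coincides with the kernel integral supplied by $\mathbf{(R)}$, and that it solves the stated linear SDE requires interchanging the $\varepsilon$-limit with the stochastic integral; this leans essentially on the $\mathcal{C}^{1,2}$ regularity of $F$ and $G$ and on the first-order structure imposed by $\mathbf{(R)}$, with the strict positivity of $K$ ensuring that $K_2$ and $K_1$ stay away from zero so that the logarithmic computation in the final step is legitimate.
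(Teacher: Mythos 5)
Your proposal is correct and follows essentially the same route as the paper: reduce to the Markovian SDE $\mathrm{d}P_t=F(t,P_t)\,\mathrm{d}t+G(t,P_t)\,\mathrm{d}Y_t$ via the functional It\^o formula, linearise in $\varepsilon$ to get the paper's equation for $\left.\frac{\mathrm{d}P_t^{(\varepsilon)}}{\mathrm{d}\varepsilon}\right\vert_{\varepsilon=0}$, solve it by variation of constants with fundamental solution $K_2$, and identify the kernel by letting $\beta$ vary. Your closing argument for $[K_1,K_1]\equiv0$ (cancellation of the martingale parts of $\log G(\cdot,P)$ and $\log K_2$) is just a logarithmic repackaging of the paper's explicit It\^o computation showing $\mathrm{d}K_1$ contains only $\mathrm{d}t$ terms.
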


\begin{proof}
Since
\[
P_{t}=P_{0}+\int_{0}^{t}\nabla_{Y}P_{s}\mathrm{d}Y_{s}+\int_{0}^{t}\left(
\mathcal{D}_{s}P_{s}+\frac{1}{2}\bigtriangledown_{Y}^{2}P_{s}\sigma_{Z}%
^{2}(s)\right)  \mathrm{d}s
\]
we have that$,$%
\begin{equation}
\left.  \frac{\mathrm{d}P_{t}^{(\varepsilon)}}{\mathrm{d}\varepsilon
}\right\vert _{\varepsilon=0}=\int_{0}^{t}\partial_{2}G\left.  \frac
{\mathrm{d}P_{s}^{(\varepsilon)}}{\mathrm{d}\varepsilon}\right\vert
_{\varepsilon=0}\mathrm{d}Y_{s}+\int_{0}^{t}G\beta_{s}\mathrm{d}s+\int_{0}%
^{t}\partial_{2}F\left.  \frac{\mathrm{d}P_{s}^{(\varepsilon)}}{\mathrm{d}%
\varepsilon}\right\vert _{\varepsilon=0}\mathrm{d}s. \label{dp1}%
\end{equation}
Therefore
\begin{align}
\left.  \frac{\mathrm{d}P_{t}^{(\varepsilon)}}{\mathrm{d}\varepsilon
}\right\vert _{\varepsilon=0}  &  =\mathcal{E}\left(  \int_{0}^{t}\partial
_{2}G(s,P_{s})\mathrm{d}Y_{s}\right)  \exp\left(  \int_{0}^{t}\partial
_{2}F(s,P_{s})\mathrm{d}s\right) \nonumber\\
&  \times\int_{0}^{t}\frac{G(s,P_{s})\beta_{s}}{\mathcal{E}\left(  \int%
_{0}^{s}\partial_{2}G(u,P_{u})\mathrm{d}Y_{u}\right)  \exp\left(  \int_{0}%
^{s}\partial_{2}F(u,P_{u})\mathrm{d}u\right)  }\mathrm{d}s. \label{dp2}%
\end{align}
This is easy to be verified by showing that the differentials and the values
at $t=0$ of $\left.  \frac{\mathrm{d}P_{t}^{(\varepsilon)}}{\mathrm{d}%
\varepsilon}\right\vert _{\varepsilon=0}$in (\ref{dp1}) and (\ref{dp2}) are
the same. Finally, by a uniqueness argument, we have that
\[
\left.  \frac{\mathrm{d}P_{t}^{(\varepsilon)}}{\mathrm{d}\varepsilon
}\right\vert _{\varepsilon=0}=\int_{0}^{t}K_{1}(s)K_{2}(t)\beta_{s}%
\mathrm{d}s,
\]
with%
\[
K_{2}(t)=\mathcal{E}\left(  \int_{0}^{t}\partial_{2}G(s,P_{s})\mathrm{d}%
Y_{s}\right)  \exp\left(  \int_{0}^{t}\partial_{2}F(s,P_{s})\mathrm{d}%
s\right)  ,
\]
and
\[
K_{1}(t)=\frac{G(t,P_{t})}{\mathcal{E}\left(  \int_{0}^{t}\partial
_{2}G(u,P_{u})\mathrm{d}Y_{u}\right)  \exp\left(  \int_{0}^{t}\partial
_{2}F(u,P_{u})\right)  }.
\]
Finally it is easy to see that
\begin{equation}
\mathrm{d}K_{1}(t)=\frac{\partial_{1}G+\frac{1}{2}G^{2}\partial_{22}%
G\sigma_{Z}^{2}(t)}{K_{2}(t)}\mathrm{d}t+\partial_{2}G\frac{\mathcal{D}%
_{t}P_{t}+\frac{1}{2}\bigtriangledown_{Y}^{2}P_{t}}{K_{2}(t)}\mathrm{d}%
t-\frac{G\partial_{2}F}{K_{2}(t)}\mathrm{d}t. \label{dk1}%
\end{equation}
\ 
\end{proof}

In particular we obtain the following

\begin{proposition}
\label{Bracket}Let $P$ be a price functional such that \eqref{CV} holds and
\eqref{CO} holds for $F\equiv0$, i.e.
\begin{equation}
\mathcal{D}_{t}P_{t}+\frac{1}{2}\bigtriangledown_{Y}^{2}P_{t}\sigma_{Z}%
^{2}(t)=0. \label{dP}%
\end{equation}
Then
\[
\mathcal{L}P_{t}=K_{2}(t)\frac{\mathrm{d}}{\mathrm{d}t}K_{1}(t).
\]

\end{proposition}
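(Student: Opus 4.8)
The plan is to derive the identity by specialising the factorisation of Proposition \ref{factor} to the case $F\equiv 0$ and then matching the resulting expression against a direct evaluation of the commutator $\mathcal{L}P_t=\mathcal{D}_t\nabla_Y P_t-\nabla_Y\mathcal{D}_t P_t$.

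First I would observe that, by \eqref{dk1}, the process $K_1$ is absolutely continuous, so $\frac{\mathrm{d}}{\mathrm{d}t}K_1(t)$ exists and equals the $\mathrm{d}t$-integrand appearing there. Setting $F\equiv 0$ annihilates the term $-G\partial_2 F/K_2$, while the term carrying the factor $\partial_2 G$ is proportional to the left-hand side of \eqref{CO}, which vanishes by \eqref{dP}. Multiplying the surviving integrand by $K_2(t)$ then leaves
\[
K_2(t)\frac{\mathrm{d}}{\mathrm{d}t}K_1(t)=\partial_1 G+\frac{1}{2}G^2\partial_{22}G\,\sigma_Z^2(t).
\]

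It remains to show that $\mathcal{L}P_t$ coincides with the right-hand side above. Starting from \eqref{CV}, $\nabla_Y P_t=G(t,P_t)$, the vertical chain and product rules give $\nabla_Y^2 P_t=G\,\partial_2 G$ and $\nabla_Y^3 P_t=G^2\partial_{22}G+G(\partial_2 G)^2$, while \eqref{dP} yields $\mathcal{D}_t P_t=-\frac{1}{2}\nabla_Y^2 P_t\,\sigma_Z^2(t)=-\frac{1}{2}G\,\partial_2 G\,\sigma_Z^2(t)$. Applying the horizontal chain rule to $G(t,P_t)$ and the vertical derivative to $\mathcal{D}_t P_t$, I then obtain
\[
\mathcal{D}_t\nabla_Y P_t=\partial_1 G+\partial_2 G\,\mathcal{D}_t P_t=\partial_1 G-\frac{1}{2}G(\partial_2 G)^2\sigma_Z^2(t),
\]
\[
\nabla_Y\mathcal{D}_t P_t=-\frac{1}{2}\sigma_Z^2(t)\,\nabla_Y^3 P_t=-\frac{1}{2}\sigma_Z^2(t)\big(G^2\partial_{22}G+G(\partial_2 G)^2\big);
\]
subtracting, the two $\frac{1}{2}G(\partial_2 G)^2\sigma_Z^2(t)$ contributions cancel and one is left with $\mathcal{L}P_t=\partial_1 G+\frac{1}{2}G^2\partial_{22}G\,\sigma_Z^2(t)$, which is exactly $K_2(t)\frac{\mathrm{d}}{\mathrm{d}t}K_1(t)$.

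The step I expect to be the most delicate is the evaluation of the commutator: one must apply the functional chain rule correctly to the composite $G(t,P_t)$ for both $\mathcal{D}_t$ and $\nabla_Y$ (treating $G$ as a smooth function composed with the functional $P$), respect the non-commutativity of $\mathcal{D}_t$ and $\nabla_Y$ flagged in the earlier Remark, and invoke \eqref{dP} at precisely the points where it forces the $(\partial_2 G)^2$ terms to cancel. The regularity $P\in\mathbb{C}_b^{1,3}$ together with $G\in\mathcal{C}^{1,2}$ guarantees that the derivatives $\nabla_Y^3 P_t$ and $\partial_{22}G$ used above exist and that the chain rules apply, so the remaining manipulations are purely algebraic.
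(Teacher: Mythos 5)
Your proposal is correct and follows essentially the same route as the paper's proof: both evaluate $\mathcal{L}P_{t}$ via the chain rules applied to $\nabla_{Y}P_{t}=G(t,P_{t})$ together with \eqref{dP} to obtain $\partial_{1}G+\frac{1}{2}G^{2}\partial_{22}G\,\sigma_{Z}^{2}(t)$, and both identify this with $K_{2}(t)\frac{\mathrm{d}}{\mathrm{d}t}K_{1}(t)$ by specialising \eqref{dk1} to $F\equiv0$ and killing the $\partial_{2}G$ term with \eqref{dP}. The only difference is cosmetic (you expand $\mathcal{D}_{t}\nabla_{Y}P_{t}$ and $\nabla_{Y}\mathcal{D}_{t}P_{t}$ separately and do the two steps in the opposite order), so there is nothing to add.
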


\begin{proof}
By (\ref{dP}) we have
\begin{align*}
\mathcal{L}P_{t}  &  =\mathcal{D}_{t}\bigtriangledown_{Y}P_{t}+\frac{1}%
{2}\bigtriangledown_{Y}\left(  \bigtriangledown_{Y}^{2}P_{t}\right)
\sigma_{Z}^{2}(t)\\
&  =\partial_{1}G+\frac{1}{2}G^{2}\partial_{22}G\sigma_{Z}^{2}(t)+\frac{1}%
{2}\partial_{2}G\left(  \mathcal{D}_{t}P_{t}+\frac{1}{2}\bigtriangledown
_{Y}^{2}P_{t}\sigma_{Z}^{2}(t)\right) \\
&  =\partial_{1}G+\frac{1}{2}G^{2}\partial_{22}G\sigma_{Z}^{2}(t).
\end{align*}
Now by (\ref{dk1}) and since $F\equiv0$, we have that%
\[
\mathcal{L}P_{t}=K_{2}(t)\frac{\mathrm{d}K_{1}(t)}{\mathrm{d}t}.
\]

\end{proof}

We have obtain a general result with a necessary condition for the an optimal strategy.

\begin{theorem}
\label{T4} Assume that for all $\beta$ bounded $(\mathbf{R})$ holds in terms
of the kernel $K$ as above. If $X$ is optimal, then we have
\begin{equation}
\left.  \mathbf{1}_{[0,\tau)}(t)\mathbb{E}\left(  \left.  U^{\prime}(W_{\tau
})\left(  V_{\tau}-P_{t}\right)  \right\vert \mathcal{H}_{t}\right)
-\mathbb{E}\left(  \left.  \int_{t\wedge\tau}^{\tau}\mathbb{E}\left(  \left.
U^{\prime}(W_{\tau})\right\vert \mathcal{H}_{s}\right)  K(t,s)\mathrm{d}%
X_{s}\right\vert \mathcal{H}_{t}\right)  =0,\text{ a.s.-}\mathbb{P\otimes
}Leb\text{ }\right.  \text{ } \label{NC}%
\end{equation}

\end{theorem}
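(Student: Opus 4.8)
The plan is to obtain \eqref{NC} as the first-order (stationarity) condition of the insider's problem $\max_X J(X)$, using exactly the admissible perturbations $X^{(\varepsilon)}$ introduced above. Since $\mathrm{d}X_t^{(\varepsilon)}=\mathrm{d}X_t+\varepsilon\beta_t\,\mathrm{d}t$ is again of the form $(\mathbf{A1'})$ for every $\varepsilon\in\mathbb{R}$ and every bounded adapted $\beta$, the map $\varepsilon\mapsto J(X^{(\varepsilon)})$ is a one-parameter family of admissible values attaining an interior maximum at $\varepsilon=0$ (both signs of $\varepsilon$ are allowed). Hence I would first argue that $\frac{\mathrm{d}}{\mathrm{d}\varepsilon}J(X^{(\varepsilon)})\big|_{\varepsilon=0}=0$, which is the backbone of the whole argument.

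First I would compute the perturbed terminal wealth. Starting from \eqref{2goal}, writing $V_\tau-P_t^{(\varepsilon)}=(V_\tau-P_t)-\varepsilon\int_0^tK(s,t)\beta_s\,\mathrm{d}s-o(\varepsilon)R_t$ by $(\mathbf{R})$, and expanding $W_\tau^{(\varepsilon)}=\int_0^\tau(V_\tau-P_t^{(\varepsilon)})\,\mathrm{d}X_t^{(\varepsilon)}$ to first order in $\varepsilon$, I obtain
\[
\left.\frac{\mathrm{d}W_\tau^{(\varepsilon)}}{\mathrm{d}\varepsilon}\right|_{\varepsilon=0}=\int_0^\tau(V_\tau-P_t)\beta_t\,\mathrm{d}t-\int_0^\tau\left(\int_0^tK(s,t)\beta_s\,\mathrm{d}s\right)\mathrm{d}X_t.
\]
Differentiating $J(X^{(\varepsilon)})=\mathbb{E}(U(W_\tau^{(\varepsilon)})\mid\mathcal{H}_0)$ under the conditional expectation and applying the chain rule then yields the stationarity condition
\[
\mathbb{E}\left(U'(W_\tau)\left[\int_0^\tau(V_\tau-P_t)\beta_t\,\mathrm{d}t-\int_0^\tau\left(\int_0^tK(s,t)\beta_s\,\mathrm{d}s\right)\mathrm{d}X_t\right]\Big|\mathcal{H}_0\right)=0
\]
for every bounded adapted $\beta$.

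Next I would recast each term into an expression linear in $\beta_t$ weighted by an $\mathcal{H}_t$-conditional expectation. For the first term, conditioning on $\mathcal{H}_t$ and using that $\beta_t$ and $P_t$ are $\mathcal{H}_t$-measurable gives the integrand $\beta_t\,\mathbb{E}(U'(W_\tau)(V_\tau-P_t)\mid\mathcal{H}_t)$ (the relevant times being $t\in[0,\tau)$, which is the origin of the indicator $\mathbf{1}_{[0,\tau)}$). For the second term, since $\mathrm{d}X_t=\theta_t\,\mathrm{d}t$ the iterated integral is an ordinary one over $\{0\le s\le t\le\tau\}$, so Fubini lets me swap the order and relabel variables to get $\int_0^\tau\beta_t\big(\int_t^\tau K(t,s)\,\mathrm{d}X_s\big)\mathrm{d}t$; then, conditioning on $\mathcal{H}_t$ and exploiting the $\mathbb{F}$-predictability of $s\mapsto K(t,s)$ together with the tower property (inserting $\mathbb{E}(U'(W_\tau)\mid\mathcal{H}_s)$ for $s\ge t$, so that $\mathcal{H}_t\subseteq\mathcal{H}_s$), it converts into $\int_0^\tau\beta_t\,\mathbb{E}\big(\int_{t\wedge\tau}^\tau\mathbb{E}(U'(W_\tau)\mid\mathcal{H}_s)K(t,s)\,\mathrm{d}X_s\mid\mathcal{H}_t\big)\mathrm{d}t$.

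Collecting terms, the stationarity condition reads $\mathbb{E}\big(\int_0^\tau\beta_t\,\Phi_t\,\mathrm{d}t\mid\mathcal{H}_0\big)=0$ for all bounded adapted $\beta$, where $\Phi_t$ is exactly the bracket in \eqref{NC}. By the arbitrariness of $\beta$ (a standard localization, testing against adapted choices such as $\beta_t=\mathrm{sgn}(\Phi_t)\mathbf{1}_A$) I conclude $\Phi_t=0$ a.s.-$\mathbb{P}\otimes Leb$ on $[0,\tau)$, which is \eqref{NC}. The main obstacle I anticipate is the rigorous differentiation under the expectation: one must justify passing $\frac{\mathrm{d}}{\mathrm{d}\varepsilon}$ through $\mathbb{E}(\cdot\mid\mathcal{H}_0)$ and through both the Lebesgue and the $\mathrm{d}X$ integrals, and control the remainder $o(\varepsilon)R_t$ uniformly. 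This relies on the boundedness of $R$ and $\beta$, the continuity of $U'$ together with the Inada/concavity assumptions on $U$, and the integrability hypotheses in force, which supply the dominating functions needed for the interchange.
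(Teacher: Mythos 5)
Your proposal is correct and takes essentially the same route as the paper's proof: the same $\varepsilon$-perturbation with first-order expansion via $(\mathbf{R})$, the same Fubini interchange turning $\int_0^\tau\left(\int_0^t K(s,t)\beta_s\,\mathrm{d}s\right)\mathrm{d}X_t$ into $\int_0^\tau\left(\int_t^\tau K(t,s)\,\mathrm{d}X_s\right)\beta_t\,\mathrm{d}t$, and the same conclusion via the tower property and arbitrariness of bounded adapted $\beta$. The only cosmetic difference is your final localization (testing against $\beta_t=\mathrm{sgn}(\Phi_t)\mathbf{1}_A$) where the paper uses $\beta_t=\alpha_u\mathbf{1}_{(u,u+h]}(t)$, which are interchangeable.
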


\begin{proof}
Take $\mathrm{d}X_{t}^{\left(  \varepsilon\right)  }:=\mathrm{d}%
X_{t}+\varepsilon\beta_{t}\mathrm{d}t,$ \ where $\beta$ is a bounded adapted
processes, then,
\begin{align*}
&  \mathbb{E}\left(  U(W_{\tau}^{(\varepsilon)})-U(W_{\tau})\right) \\
&  =\mathbb{E}\left(  U\left(  \int_{0}^{\tau}\left(  V_{\tau}-P_{t}%
^{(\varepsilon)}\right)  \mathrm{d}X_{t}^{(\varepsilon)}\right)  -U\left(
W_{\tau}\right)  )\right) \\
&  =\varepsilon\mathbb{E}\left(  U^{\prime}(W_{\tau})\left(  \int_{0}^{\tau
}\left(  V_{\tau}-P_{t}\right)  \beta_{t}\mathrm{d}t-\int_{0}^{\tau}\left(
\int_{0}^{t}K(s,t)\beta_{s}\mathrm{d}s\right)  \mathrm{d}X_{t}\right)
\right)  +o(\varepsilon)\\
&  =\varepsilon\mathbb{E}\left(  U^{\prime}(W_{\tau})\left(  \int_{0}^{\tau
}\left(  V_{\tau}-P_{t}-\int_{t}^{\tau}K(t,s)\mathrm{d}X_{s}\right)  \beta
_{t}\mathrm{d}t\right)  \right)  +o(\varepsilon).
\end{align*}
Note that, by Fubini's theorem,
\[
\int_{0}^{\tau}\left(  \int_{0}^{t}K(s,t)\beta_{s}\mathrm{d}s\right)
\mathrm{d}X_{t}=\int_{0}^{\tau}\left(  \int_{t}^{\tau}K(t,s)\mathrm{d}%
X_{s}\right)  \beta_{t}\mathrm{d}t.
\]
Then
\[
\left.  \frac{\mathrm{d}\mathbb{E}\left(  U\left(  W_{\tau}^{(\varepsilon
)}\right)  \right)  }{\mathrm{d}\varepsilon}\right\vert _{\varepsilon=0}=0
\]
implies that
\[
\mathbb{E}\left(  \int_{0}^{\tau}U^{\prime}(W_{\tau})\left(  V_{\tau}%
-P_{t}-\int_{t}^{\tau}K(t,s)\mathrm{d}X_{s}\right)  \beta_{t}\mathrm{d}%
t\right)  =0.
\]
Since we can take $\beta_{t}=\alpha_{u}\mathbf{1}_{(u,u+h]}(t),$ with
$\alpha_{u}$ measurable and bounded and $\tau$ is a stopping time, we have
that%
\[
\mathbf{1}_{[0,\tau)}(t)\mathbb{E}\left(  \left.  U^{\prime}(W_{\tau})\left(
V_{\tau}-P_{t}-\int_{t}^{\tau}K(t,s)\mathrm{d}X_{s}\right)  \right\vert
\mathcal{H}_{t}\right)  =0,
\]
a.s.-$\mathbb{P\otimes}Leb$. And, from the Law of Iterated Expectations
\[
\mathbf{1}_{[0,\tau)}(t)\mathbb{E}\left(  \left.  U^{\prime}(W_{\tau})\left(
V_{\tau}-P_{t}\right)  \right\vert \mathcal{H}_{t}\right)  -\mathbb{E}\left(
\left.  \int_{t}^{\tau}\mathbb{E}\left(  \left.  U^{\prime}(W_{\tau
})\right\vert \mathcal{H}_{s}\right)  K(t,s)\mathrm{d}X_{s}\right\vert
\mathcal{H}_{t}\right)  =0
\]

\end{proof}

The result above allows us to give some necessary conditions for an equilibrium.

\begin{proposition}
\label{neccond} In the conditions of the Theorem \ref{T4} and assuming that
(\ref{CV}) and (\ref{CO}) hold, we have that if $(P,X)$ is an equilibrium,
then
\begin{align}
0  &  =\mathbb{E}\left(  \left.  U^{\prime}(W_{\tau})\left(  V_{\tau}%
-P_{t}\right)  \right\vert \mathcal{H}_{t}\right)  \frac{\mathrm{d}%
}{\mathrm{d}t}\left(  \frac{1}{K_{1}(t)}\right)  -\frac{\mathbb{E}\left(
\left.  U^{\prime}(W_{\tau})\right\vert \mathcal{H}_{t}\right)  }{K_{1}%
(t)}\left(  \mathcal{D}_{t}P_{t}+\frac{1}{2}\bigtriangledown^{2}P_{t}%
\sigma_{Z}^{2}(t)\right) \nonumber\\
&  -\frac{1}{K_{1}(\cdot)}\frac{\mathrm{d}}{\mathrm{d}t}\left[  P,\mathbb{E}%
\left(  \left.  U^{\prime}(W_{\tau})\right\vert \mathcal{H}_{\cdot}\right)
\right]  _{0}^{t}. \label{main}%
\end{align}

\end{proposition}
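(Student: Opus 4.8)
The plan is to read the first-order condition (\ref{NC}) of Theorem~\ref{T4} as an identity between two semimartingales, insert the kernel factorisation of Proposition~\ref{factor}, and then equate the bounded-variation (drift) parts of the two sides, computing one side through the product rule and the functional It\^o formula. Write $N_{t}:=\mathbb{E}\left(U^{\prime}(W_{\tau})\mid\mathcal{H}_{t}\right)$ and $\Phi_{t}:=\mathbb{E}\left(U^{\prime}(W_{\tau})(V_{\tau}-P_{t})\mid\mathcal{H}_{t}\right)$. Since $P_{t}$ is $\mathcal{H}_{t}$-measurable, $\Phi_{t}=L_{t}-P_{t}N_{t}$ with $L_{t}:=\mathbb{E}\left(U^{\prime}(W_{\tau})V_{\tau}\mid\mathcal{H}_{t}\right)$, and $L,N$ are $\mathbb{H}$-martingales. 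Inserting $K(t,s)=K_{1}(t)K_{2}(s)$ from (\ref{fac}) into (\ref{NC}) and pulling the $\mathcal{H}_{t}$-measurable factor $K_{1}(t)$ out of the conditional expectation, optimality of $X$ gives, for $t<\tau$,
\[
\Phi_{t}=K_{1}(t)\,\Psi_{t},\qquad \Psi_{t}:=\mathbb{E}\left(\left.\int_{t}^{\tau}N_{s}K_{2}(s)\,\mathrm{d}X_{s}\right\vert\mathcal{H}_{t}\right),
\]
an identity of processes on $[0,\tau)$ which I would differentiate in $t$.

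I would then compute the drift of each side. Because $\mathrm{d}X_{s}=\theta_{s}\,\mathrm{d}s$ under $\mathbf{(A1^{\prime})}$, the process $A_{t}:=\int_{0}^{t}N_{s}K_{2}(s)\,\mathrm{d}X_{s}$ is absolutely continuous, so $\Psi_{t}=G_{t}-A_{t}$ with $G_{t}:=\mathbb{E}\left(\int_{0}^{\tau}N_{s}K_{2}(s)\,\mathrm{d}X_{s}\mid\mathcal{H}_{t}\right)$ an $\mathbb{H}$-martingale. As $\left[K_{1},K_{1}\right]\equiv0$ (Proposition~\ref{factor}) makes $K_{1}$ of finite variation, the bounded-variation part of $K_{1}\Psi$ is $\left(\Psi_{t}\frac{\mathrm{d}K_{1}(t)}{\mathrm{d}t}-K_{1}(t)N_{t}K_{2}(t)\theta_{t}\right)\mathrm{d}t$, and by (\ref{CV}) together with $K_{1}K_{2}=G(\cdot,P)=\nabla_{Y}P$ the last summand equals $-N_{t}\nabla_{Y}P_{t}\,\theta_{t}\,\mathrm{d}t$. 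On the other side, from $\Phi_{t}=L_{t}-P_{t}N_{t}$ one has $\mathrm{d}\Phi_{t}=\mathrm{d}L_{t}-N_{t}\,\mathrm{d}P_{t}-P_{t}\,\mathrm{d}N_{t}-\mathrm{d}[P,N]_{t}$, where Theorem~\ref{FIT} and the Remark giving $\mathrm{d}[Y,Y]_{t}=\sigma_{Z}^{2}(t)\,\mathrm{d}t$ yield $\mathrm{d}P_{t}=\left(\mathcal{D}_{t}P_{t}+\tfrac{1}{2}\nabla_{Y}^{2}P_{t}\sigma_{Z}^{2}(t)\right)\mathrm{d}t+\nabla_{Y}P_{t}\,\mathrm{d}Y_{t}$.

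The crucial point, which I expect to be the main obstacle, is that $Y=X+Z$ is \emph{not} an $\mathbb{H}$-martingale: it carries the insider drift, $\mathrm{d}Y_{t}=\theta_{t}\,\mathrm{d}t+\mathrm{d}Z_{t}$. Hence $\nabla_{Y}P_{t}\,\mathrm{d}Y_{t}$ must itself be split, and the drift of $-N_{t}\,\mathrm{d}P_{t}$ picks up the extra contribution $-N_{t}\nabla_{Y}P_{t}\theta_{t}\,\mathrm{d}t$. This is exactly the summand produced on the other side through $A_{t}$, so the two insider-drift terms cancel. After the cancellation, equating the remaining drift densities gives
\[
\Psi_{t}\,\frac{\mathrm{d}K_{1}(t)}{\mathrm{d}t}=-N_{t}\left(\mathcal{D}_{t}P_{t}+\tfrac{1}{2}\nabla_{Y}^{2}P_{t}\sigma_{Z}^{2}(t)\right)-\frac{\mathrm{d}}{\mathrm{d}t}[P,N]_{0}^{t}.
\]

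Substituting $\Psi_{t}=\Phi_{t}/K_{1}(t)$ and $\frac{\mathrm{d}}{\mathrm{d}t}(1/K_{1})=-K_{1}^{-2}\frac{\mathrm{d}K_{1}}{\mathrm{d}t}$, and then dividing by $K_{1}(t)$, rearranges this precisely into (\ref{main}). Besides spotting the cancellation of the insider-drift terms, the technical care needed is to justify that $G$, $L$ and $N$ are genuine $\mathbb{H}$-martingales — so that their drifts vanish — which is where the integrability hypotheses inherited from Theorem~\ref{T4} and from (\ref{CV})--(\ref{CO}) are used, and to note that the absolutely continuous $\mathrm{d}X$ contributes nothing to the single surviving bracket $[P,N]$.
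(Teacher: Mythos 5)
Your proof is correct, and it rests on the same pillars as the paper's — insert the factorisation (\ref{fac}) into (\ref{NC}), use Proposition \ref{KV} and the functional It\^o formula (Theorem \ref{FIT}), and equate bounded-variation parts by uniqueness of the semimartingale decomposition — but the intermediate route is genuinely different. The paper never exposes the insider drift $\theta$: inside the conditional expectation it replaces $\mathrm{d}X_s$ by $\mathrm{d}Y_s$ (the $\mathrm{d}Z_s$-integral having zero conditional expectation), rewrites $\nabla_{Y}P_s\,\mathrm{d}Y_s=\mathrm{d}P_s-\left(\mathcal{D}_sP_s+\tfrac{1}{2}\nabla_{Y}^{2}P_s\sigma_{Z}^{2}(s)\right)\mathrm{d}s$, and then integrates by parts against $\mathrm{d}P$, after which the residual bracket $\left[\Phi,1/K_{1}\right]$ is killed by Proposition \ref{factor}. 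You instead keep $\mathrm{d}X_t=\theta_t\,\mathrm{d}t$ explicit, decompose $\Phi=L-PN$ and $\Psi=G-A$, and let the two $\theta$-terms — one arising in the $\mathbb{H}$-drift of $P$ through $\mathrm{d}Y_t=\theta_t\,\mathrm{d}t+\mathrm{d}Z_t$, the other from $\mathrm{d}A$ — cancel across the identity $\Phi=K_{1}\Psi$; this cancellation is the visible crux of your argument, whereas in the paper it happens invisibly inside $\mathrm{d}P$. Each route buys something: in yours, the martingale property of $G$, $L$, $N$ is automatic (conditional expectations of fixed integrable random variables), while the paper's substitution $\mathrm{d}X\to\mathrm{d}Y$ requires the stochastic integral against $Z$ to be a true martingale; conversely, the paper's bookkeeping never needs the $\theta$-cancellation to be noticed. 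One small imprecision to fix: $[K_{1},K_{1}]\equiv 0$ does not by itself imply that $K_{1}$ has finite variation (zero quadratic variation is weaker). It does give $[K_{1},\Psi]=0$ via the Kunita--Watanabe inequality, but to identify $\Psi\,\mathrm{d}K_{1}$ as a drift term you should invoke the explicit expression (\ref{dk1}) of Proposition \ref{factor}, which shows $\mathrm{d}K_{1}$ contains only $\mathrm{d}t$-terms, i.e. $K_{1}$ is absolutely continuous — exactly what both your computation and the paper's statement $\frac{\mathrm{d}}{\mathrm{d}t}K_{1}(t)$ presuppose.
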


\begin{proof}
Thanks to $\mathbf{(\mathbf{A1}^{\prime}),(R),}$ the factorisation property
\eqref{fac}, and by means of Theorem \ref{FIT} and Proposition \ref{KV}, we
have that
\begin{align*}
&  \mathbb{E}\left(  \left.  \int_{t}^{\tau}\mathbb{E}\left(  \left.
U^{\prime}(W_{\tau})\right\vert \mathcal{H}_{s}\right)  K(t,s)\mathrm{d}%
X_{s}\right\vert \mathcal{H}_{t}\right) \\
&  =K_{1}(t)\mathbb{E}\left(  \left.  \int_{t}^{\tau}\frac{1}{K_{1}%
(s)}\mathbb{E}\left(  \left.  U^{\prime}(W_{\tau})\right\vert \mathcal{H}%
_{s}\right)  K(s,s)\mathrm{d}Y_{s}\right\vert \mathcal{H}_{t}\right) \\
&  =K_{1}(t)\mathbb{E}\left(  \left.  \int_{t}^{\tau}\frac{1}{K_{1}%
(s)}\mathbb{E}\left(  \left.  U^{\prime}(W_{\tau})\right\vert \mathcal{H}%
_{s}\right)  \left(  \mathrm{d}P_{s}-\left(  \mathcal{D}_{s}P_{s}+\frac{1}%
{2}\bigtriangledown_{Y}^{2}P_{s}\sigma_{Z}^{2}(s)\right)  \mathrm{d}s\right)
\right\vert \mathcal{H}_{t}\right) \\
&  =K_{1}(t)\mathbb{E}\left(  \left.  \int_{t}^{\tau}\frac{1}{K_{1}%
(s)}\mathbb{E}\left(  \left.  U^{\prime}(W_{\tau})\right\vert \mathcal{H}%
_{s}\right)  \mathrm{d}P_{s}\right\vert \mathcal{H}_{t}\right)  -K_{1}%
(t)\mathbb{E}\left(  \left.  \int_{t}^{\tau}\frac{\mathbb{E}\left(  \left.
U^{\prime}(W_{\tau})\right\vert \mathcal{H}_{s}\right)  }{K_{1}(s)}\left(
\mathcal{D}_{s}P_{s}+\frac{1}{2}\bigtriangledown_{Y}^{2}P_{s}\sigma_{Z}%
^{2}(s)\right)  \mathrm{d}s\right\vert \mathcal{H}_{t}\right)  .
\end{align*}
Moreover, observe that
\begin{align*}
\int_{0}^{t}\frac{1}{K_{1}(s)}\mathbb{E}\left(  \left.  U^{\prime}(W_{\tau
})\right\vert \mathcal{H}_{s}\right)  \mathrm{d}P_{s}  &  =\frac
{\mathbb{E}\left(  \left.  U^{\prime}(W_{\tau})\right\vert \mathcal{H}%
_{t}\right)  P_{t}}{K_{1}(t)}-\frac{\mathbb{E}\left(  \left.  U^{\prime
}(W_{\tau})\right\vert \mathcal{H}_{0}\right)  P_{0}}{K_{1}(0)}\\
&  -\int_{0}^{t}P_{s}\mathrm{d}\left(  \frac{\mathbb{E}\left(  \left.
U^{\prime}(W_{\tau})\right\vert \mathcal{H}_{s}\right)  }{K_{1}(s)}\right)
-\left[  P,\frac{\mathbb{E}\left(  \left.  U^{\prime}(W_{\tau})\right\vert
\mathcal{H}_{\cdot}\right)  }{K_{1}(\cdot)}\right]  _{0}^{t}.
\end{align*}
Hence, taking (\ref{NC}) into account, we obtain
\begin{align*}
&  \mathbf{1}_{[0,\tau)}(t)\left(  \frac{\mathbb{E}\left(  \left.  U^{\prime
}(W_{\tau})\left(  V_{\tau}-P_{t}\right)  \right\vert \mathcal{H}_{t}\right)
}{K_{1}(t)}\right)  +\frac{\mathbb{E}\left(  \left.  U^{\prime}(W_{\tau
})\right\vert \mathcal{H}_{t\wedge\tau}\right)  P_{t\wedge\tau}}{K_{1}%
(t\wedge\tau)}-\frac{\mathbb{E}\left(  \left.  U^{\prime}(W_{\tau})\right\vert
\mathcal{H}_{0}\right)  P_{0}}{K_{1}(0)}\\
&  -\int_{0}^{t\wedge\tau}P_{s}\mathrm{d}\left(  \frac{\mathbb{E}\left(
\left.  U^{\prime}(W_{\tau})\right\vert \mathcal{H}_{s}\right)  }{K_{1}%
(s)}\right)  -\left[  P,\frac{\mathbb{E}\left(  \left.  U^{\prime}(W_{\tau
})\right\vert \mathcal{H}_{\cdot}\right)  }{K_{1}(\cdot)}\right]
_{0}^{t\wedge\tau}-\int_{0}^{t\wedge\tau}\frac{\mathbb{E}\left(  \left.
U^{\prime}(W_{\tau})\right\vert \mathcal{H}_{s}\right)  }{K_{1}(s)}\left(
\mathcal{D}_{s}P_{s}+\frac{1}{2}\bigtriangledown_{Y}^{2}P_{s}\sigma_{Z}%
^{2}(s)\right)  \mathrm{d}s\\
&  +\mathbb{E}\left(  \left.  \int_{0}^{\tau}\frac{\mathbb{E}\left(  \left.
U^{\prime}(W_{\tau})\right\vert \mathcal{H}_{s}\right)  }{K_{1}(s)}\left(
\mathcal{D}_{s}P_{s}+\frac{1}{2}\bigtriangledown_{Y}^{2}P_{s}\sigma_{Z}%
^{2}(s)\right)  \mathrm{d}s\right\vert \mathcal{H}_{t}\right)  -\mathbb{E}%
\left(  \left.  \int_{0}^{\tau}\frac{1}{K_{1}(s)}\mathbb{E}\left(  \left.
U^{\prime}(W_{\tau})\right\vert \mathcal{H}_{s}\right)  \mathrm{d}%
P_{s}\right\vert \mathcal{H}_{t}\right) \\
&  =0.
\end{align*}
Then by the uniqueness of the canonical decomposition in the previous equation
(notice that the jump of $\mathbf{1}_{[0,\tau)}(t)$ is killed in the case that
$\tau$ is predictive), we have
\begin{align*}
0  &  =\mathbb{E}\left(  \left.  U^{\prime}(W_{\tau})\left(  V_{\tau}%
-P_{t}\right)  \right\vert \mathcal{H}_{t}\right)  \frac{\mathrm{d}%
}{\mathrm{d}t}\left(  \frac{1}{K_{1}(t)}\right)  -\frac{\mathbb{E}\left(
\left.  U^{\prime}(W_{\tau})\right\vert \mathcal{H}_{t}\right)  }{K_{1}%
(t)}\left(  \mathcal{D}_{t}P_{t}+\frac{1}{2}\bigtriangledown_{Y}^{2}%
P_{t}\sigma_{Z}^{2}(t)\right) \\
&  -\frac{1}{K_{1}(t)}\frac{\mathrm{d}}{\mathrm{d}t}\left[  P,\mathbb{E}%
\left(  \left.  U^{\prime}(W_{\tau})\right\vert \mathcal{H}_{\cdot}\right)
\right]  _{0}^{t}+\frac{\mathrm{d}}{\mathrm{d}t}\left[  \mathbb{E}\left(
\left.  U^{\prime}(W_{\tau})\left(  V_{\tau}-P_{\cdot}\right)  \right\vert
\mathcal{H}_{\cdot}\right)  ,\frac{1}{K_{1}(\cdot)}\right]  _{0}^{t}.
\end{align*}
Finally the last term vanishes by Proposition \ref{factor}.
\end{proof}

Moreover, we have the following specific conditions in the risk-neutral and
risk-averse (exponential) cases.

\begin{proposition}
In the risk-neutral case, under the assumptions of Proposition \ref{neccond},
if $(P,X)$ is an equilibrium, then%
\[
\mathcal{D}_{t}P_{t}+\frac{1}{2}\bigtriangledown_{Y}^{2}P_{t}\sigma_{Z}%
^{2}(t)=0
\]
holds. Also, if $V_{t}\neq P_{t},$ a.s. $\mathbb{P\otimes}Leb,$ we have that
\begin{equation}
\mathcal{L}P_{t}=0. \label{lpt}%
\end{equation}

\end{proposition}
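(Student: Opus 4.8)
The plan is to specialise the master identity \eqref{main} of Proposition \ref{neccond} to the risk-neutral case $U(x)=x$, and then to separate the resulting scalar equation into its ``$V$-part'' and its ``$P$-part'' by conditioning on the market makers' filtration $\mathbb{F}$.

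\emph{Step 1 (reduction of \eqref{main}).} Since $U'\equiv 1$ in the risk-neutral case, all three conditional expectations in \eqref{main} collapse. First, $\mathbb{E}(U'(W_\tau)\mid\mathcal{H}_s)=1$ is constant, so the covariation term $\frac{\mathrm{d}}{\mathrm{d}t}[P,\mathbb{E}(U'(W_\tau)\mid\mathcal{H}_\cdot)]_0^t$ vanishes identically. Second, because $V$ is an $\mathbb{H}$-martingale and $\tau$ is bounded and predictable, optional sampling gives $\mathbb{E}(V_\tau\mid\mathcal{H}_t)=V_t$ on $\{t<\tau\}$, so the first factor becomes $V_t-P_t$. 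Replacing $\mathcal{D}_tP_t+\tfrac12\nabla_Y^2P_t\sigma_Z^2(t)$ by $F(t,P_t)$ via \eqref{CO}, the identity \eqref{main} reduces on $\{t<\tau\}$ to
\[
(V_t-P_t)\,\frac{\mathrm{d}}{\mathrm{d}t}\!\left(\frac{1}{K_1(t)}\right)-\frac{1}{K_1(t)}\,F(t,P_t)=0,\qquad\text{a.s.-}\mathbb{P}\otimes Leb.
\]

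\emph{Step 2 (first assertion, via conditioning on $\mathbb{F}$).} By Proposition \ref{factor}, $K_1=G(\cdot,P_\cdot)/K_2$ with $K_2$ an $\mathbb{F}$-adapted functional of the price path, and $[K_1,K_1]\equiv0$; hence $K_1$ is of finite variation and both $1/K_1$ and its pathwise drift $\frac{\mathrm{d}}{\mathrm{d}t}(1/K_1)$ are $\mathbb{F}$-adapted, as is $F(t,P_t)/K_1(t)$. Taking $\mathbb{E}(\,\cdot\mid\mathcal{F}_t)$ in the displayed identity and pulling out the $\mathcal{F}_t$-measurable factors (note $P_t$ is $\mathcal{F}_t$-measurable) leaves
\[
\bigl(\mathbb{E}(V_t\mid\mathcal{F}_t)-P_t\bigr)\,\frac{\mathrm{d}}{\mathrm{d}t}\!\left(\frac{1}{K_1(t)}\right)-\frac{F(t,P_t)}{K_1(t)}=0.
\]
In equilibrium the prices are competitive, \eqref{competitive}, so $\mathbb{E}(V_t\mid\mathcal{F}_t)=P_t$ and the first summand is zero. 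Since $P$ is strictly increasing we have $\nabla_Y P_t>0$, which together with $K_2>0$ and the factorisation $\nabla_Y P_t=K_1(t)K_2(t)$ (Propositions \ref{KV} and \ref{factor}) forces $K_1>0$. We therefore conclude $F(t,P_t)=0$, i.e. $\mathcal{D}_tP_t+\tfrac12\nabla_Y^2P_t\sigma_Z^2(t)=0$, which is the first assertion.

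\emph{Step 3 (second assertion).} Substituting $F\equiv0$ back into the identity of Step 1 gives $(V_t-P_t)\frac{\mathrm{d}}{\mathrm{d}t}(1/K_1(t))=0$ a.s.-$\mathbb{P}\otimes Leb$. On the set where $V_t\neq P_t$, which by hypothesis is of full $\mathbb{P}\otimes Leb$-measure, this yields $\frac{\mathrm{d}}{\mathrm{d}t}(1/K_1(t))=0$, and hence $\frac{\mathrm{d}}{\mathrm{d}t}K_1(t)=0$ because $K_1>0$. Now that $F\equiv0$ has been established, Proposition \ref{Bracket} applies and gives $\mathcal{L}P_t=K_2(t)\frac{\mathrm{d}}{\mathrm{d}t}K_1(t)=0$, which is \eqref{lpt}. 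The delicate point of the whole argument is the measurability bookkeeping in Step 2: one must be sure that $1/K_1$ and its time-derivative are genuinely $\mathbb{F}$-adapted, so that the only randomness beyond $\mathcal{F}_t$ resides in the factor $V_t$; this is exactly what the finite-variation property $[K_1,K_1]\equiv0$ and the factorisation of Proposition \ref{factor} deliver. Some care is likewise needed for the optional-sampling identity $\mathbb{E}(V_\tau\mid\mathcal{H}_t)=V_t$ and for the fact that the indicator $\mathbf{1}_{[0,\tau)}$ contributes no jump, both of which rely on $\tau$ being predictable.
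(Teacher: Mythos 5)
Your proof is correct and follows essentially the same route as the paper: specialise the identity \eqref{main} of Proposition \ref{neccond} to $U'\equiv 1$ (so the covariation term drops and the first factor becomes $V_t-P_t$), condition on $\mathcal{F}_t$ and invoke competitiveness $\mathbb{E}(V_t|\mathcal{F}_t)=P_t$ to obtain $\mathcal{D}_tP_t+\tfrac12\nabla_Y^2P_t\sigma_Z^2(t)=0$, then use $V_t\neq P_t$ to get $\frac{\mathrm{d}}{\mathrm{d}t}(1/K_1)=0$ and conclude via Proposition \ref{Bracket}. Your write-up is in fact more careful than the paper's on the points it leaves implicit (the $\mathbb{F}$-adaptedness of $K_1$ and its drift, and the optional-sampling step), but the argument is the same.
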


\begin{proof}
As a consequence of Proposition \ref{neccond} we have that, in the risk
neutral case, for the functionals above,
\[
0=\left(  V_{t}-P_{t}\right)  \frac{\mathrm{d}}{\mathrm{d}t}\left(  \frac
{1}{K_{1}(t)}\right)  -\frac{1}{K_{1}(t)}\left(  \mathcal{D}_{t}P_{t}+\frac
{1}{2}\bigtriangledown_{Y}^{2}P_{t}\frac{\mathrm{d}[Z]_{t}}{\mathrm{d}%
t}\right)
\]
By the competitiveness of prices $\mathbb{E}(V_{t}|\mathcal{F}_{t})$ $=P_{t}$,
so by taking conditional expectations w.r.t $\mathcal{F}_{t}$ we obtain that
\[
\mathcal{D}_{t}P_{t}+\frac{1}{2}\bigtriangledown_{Y}^{2}P_{t}\frac
{\mathrm{d}[Z]_{t}}{\mathrm{d}t}=0
\]
and if $V_{t}\neq P_{t},$ a.s. $\mathbb{P\otimes}Leb,$
\begin{equation}
\frac{\mathrm{d}}{\mathrm{d}t}\left(  \frac{1}{K_{1}(t)}\right)  =0.
\label{rnc}%
\end{equation}
Now by Proposition \ref{Bracket} we obtain (\ref{lpt}).
\end{proof}

Consider the risk-averse case when $U(x)=\gamma e^{\gamma x}$ with $\gamma<0$.
If the noise traders total demand $Z$ is Gaussian we can apply the following
relationship between vertical and Fréchet or Malliavin derivatives (see
Theorem 6.1 in \citeA{confou13}):
\[
\mathbb{E}\left(  \left.  D_{t}^{Z}U(W_{\tau})\right\vert \mathcal{H}%
_{t}\right)  =\nabla_{Z}\mathbb{E}\left(  \left.  U(W_{\tau})\right\vert
\mathcal{H}_{t}\right)  .
\]
Then by (\ref{NC})
\[
\mathbb{E}\left(  \left.  U^{^{\prime}}(W_{\tau})\left(  V_{\tau}%
-P_{t}\right)  +D_{t}^{Z}U(W_{\tau})\right\vert \mathcal{H}_{t}\right)  =0,
\]
we have that
\[
\nabla_{Z}\mathbb{E}\left(  \left.  U(W_{\tau})\right\vert \mathcal{H}%
_{t}\right)  =-\mathbb{E}\left(  \left.  U^{^{\prime}}(W_{\tau})\left(
V_{\tau}-P_{t}\right)  \right\vert \mathcal{H}_{t}\right)  .
\]
Since $U^{\prime}(x)=\gamma U(x)$, we obtain
\[
\nabla_{Z}\mathbb{E}\left(  \left.  U^{\prime}(W_{\tau})\right\vert
\mathcal{H}_{t}\right)  =-\gamma\mathbb{E}\left(  \left.  U^{^{\prime}%
}(W_{\tau})\left(  V_{\tau}-P_{t}\right)  \right\vert \mathcal{H}_{t}\right)
\]%
\[
\frac{\mathrm{d}\left[  P,\mathbb{E}\left(  \left.  U^{\prime}(W_{\tau
})\right\vert \mathcal{H}_{\cdot}\right)  \right]  }{\mathrm{d}t}=\nabla
_{Y}P_{t}\nabla_{Z}\mathbb{E}\left(  \left.  U^{^{\prime}}(W_{\tau
})\right\vert \mathcal{H}_{t}\right)  \sigma_{Z}^{2}(t)=K(t,t)\mathbb{E}%
\left(  \left.  U^{^{\prime}}(W_{\tau})\left(  V_{\tau}-P_{t}\right)
\right\vert \mathcal{H}_{t}\right)  \sigma_{Z}^{2}(t).
\]
Then (\ref{main}) becomes
\[
\mathbb{E}\left(  \left.  U^{\prime}(W_{\tau})\left(  V_{\tau}-P_{t}\right)
\right\vert \mathcal{H}_{t}\right)  \left(  \frac{\mathrm{d}}{\mathrm{d}%
t}\left(  \frac{1}{K_{1}}\right)  +\gamma K_{2}(t)\sigma_{Z}^{2}(t)\right)
-\frac{\mathbb{E}\left(  \left.  U^{\prime}(W_{\tau})\right\vert
\mathcal{H}_{t}\right)  }{K_{1}(t)}\left(  \mathcal{D}_{t}P_{t}+\frac{1}%
{2}\bigtriangledown^{2}P_{t}\sigma_{Z}^{2}(t)\right)  =0.
\]
Furthermore, if $V_{t}\equiv V$ we have that%
\[
\left(  V-P_{t}\right)  \left(  \frac{\mathrm{d}}{\mathrm{d}t}\left(  \frac
{1}{K_{1}}\right)  +\gamma K_{2}(t)\sigma_{Z}^{2}(t)\right)  -\frac{1}%
{K_{1}(t)}\left(  \mathcal{D}_{t}P_{t}+\frac{1}{2}\bigtriangledown_{Y}%
^{2}P_{t}\sigma_{Z}^{2}(t)\right)  =0.
\]
Taking the conditional expectations w.r.t $\mathcal{F}_{t}$, by the
competitiveness of prices $\mathbb{E}\left(  V|\mathcal{F}_{t}\right)
=P$\bigskip$_{t}$, we obtain that
\[
\mathcal{D}_{t}P_{t}+\frac{1}{2}\bigtriangledown_{Y}^{2}P_{t}\sigma_{Z}%
^{2}(t)=0.
\]
Provided that $V\neq P_{t},$ a.s. $\mathbb{P\otimes}Leb$, we have that
\begin{equation}
\frac{\mathrm{d}}{\mathrm{d}t}\left(  \frac{1}{K_{1}(t)}\right)  +\gamma
K_{2}(t)\sigma_{Z}^{2}(t)=0. \label{NRNcon}%
\end{equation}
Then we have the following proposition.

\begin{proposition}
Consider the risk-averse case with utility function is $U(x)=\gamma e^{\gamma
x},$ $\gamma<0$. Let $V_{t}\equiv V$ and assume that (\ref{CV})\ and
(\ref{CO}) hold . Also assume that $Z$ is Gaussian. If $(P,X)$ is an
equilibrium, we have \eqref{dP}:
\[
\mathcal{D}_{t}P_{t}+\frac{1}{2}\bigtriangledown_{Y}^{2}P_{t}\sigma_{Z}%
^{2}(t)=0
\]
and, if $V\neq P_{t},$ a.s. $\mathbb{P\otimes}Leb$, we have that
\[
\mathcal{L}P_{t}=\gamma\sigma_{Z}^{2}\left(  \nabla_{Y}P_{t}\right)  ^{2}.
\]

\end{proposition}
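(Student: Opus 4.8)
The plan is to organise the in-line computation preceding the statement into a clean argument built on \eqref{main} of Proposition \ref{neccond}. First I would specialise the necessary condition \eqref{NC} of Theorem \ref{T4} to $U(x)=\gamma e^{\gamma x}$, exploiting the two features that make this case tractable: the pointwise identity $U'=\gamma U$, and, since $Z$ is Gaussian, the bridge between vertical and Malliavin derivatives $\mathbb{E}(D^Z_tU(W_\tau)\mid\mathcal{H}_t)=\nabla_Z\mathbb{E}(U(W_\tau)\mid\mathcal{H}_t)$ (Theorem 6.1 of \citeA{confou13}). Together these let me rewrite \eqref{NC} as $\nabla_Z\mathbb{E}(U(W_\tau)\mid\mathcal{H}_t)=-\mathbb{E}(U'(W_\tau)(V_\tau-P_t)\mid\mathcal{H}_t)$ and then express the cross-bracket $\frac{\mathrm d}{\mathrm dt}[P,\mathbb{E}(U'(W_\tau)\mid\mathcal{H}_\cdot)]$ through $\nabla_YP_t=K(t,t)$ (Proposition \ref{KV}) and $\mathbb{E}(U'(W_\tau)(V_\tau-P_t)\mid\mathcal{H}_t)$. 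Substituting into \eqref{main} and using the factorisation \eqref{fac} collapses the three terms of \eqref{main} into one scalar relation.

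Then I would set $V_t\equiv V$ and reduce this to
\[
(V-P_t)\Big(\tfrac{\mathrm d}{\mathrm dt}\big(\tfrac{1}{K_1}\big)+\gamma K_2(t)\sigma_Z^2(t)\Big)-\tfrac{1}{K_1(t)}\Big(\mathcal{D}_tP_t+\tfrac12\nabla_Y^2P_t\,\sigma_Z^2(t)\Big)=0 .
\]
The decisive step is to condition on $\mathcal{F}_t$. Each coefficient above --- $K_1$, $K_2$, $\nabla_YP_t$, and $\mathcal{D}_tP_t+\tfrac12\nabla_Y^2P_t\sigma_Z^2(t)$ --- is a functional of the observable path of $Y$, hence $\mathcal{F}_t$-measurable and thus factors out of $\mathbb{E}(\cdot\mid\mathcal{F}_t)$; competitiveness \eqref{competitive}, i.e. $\mathbb{E}(V\mid\mathcal{F}_t)=P_t$, then kills the $(V-P_t)$ factor. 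What remains is $\mathcal{D}_tP_t+\tfrac12\nabla_Y^2P_t\sigma_Z^2(t)=0$, which is exactly \eqref{dP}.

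Feeding \eqref{dP} back into the scalar relation leaves $(V-P_t)\big(\tfrac{\mathrm d}{\mathrm dt}(1/K_1)+\gamma K_2\sigma_Z^2\big)=0$, so the standing hypothesis $V\neq P_t$ a.s.-$\mathbb{P}\otimes Leb$ forces \eqref{NRNcon}, i.e. $\tfrac{\mathrm d}{\mathrm dt}(1/K_1)+\gamma K_2(t)\sigma_Z^2(t)=0$, equivalently $\frac{\mathrm d}{\mathrm dt}K_1=\gamma K_1^2K_2\,\sigma_Z^2(t)$. Since \eqref{dP} is precisely \eqref{CO} with $F\equiv0$, Proposition \ref{Bracket} gives $\mathcal{L}P_t=K_2(t)\frac{\mathrm d}{\mathrm dt}K_1(t)$; inserting the last identity and using $K_1K_2=K(t,t)=\nabla_YP_t$ (Proposition \ref{KV} and the factorisation of Proposition \ref{factor}) yields $\mathcal{L}P_t=\gamma\sigma_Z^2(\nabla_YP_t)^2$, as claimed.

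I expect the hard part to be not the final algebra --- a direct substitution once \eqref{dP} is available --- but the analytic step producing the bracket term: justifying differentiation under the conditional expectation in \eqref{NC}, invoking the vertical--Malliavin identity for $W_\tau$ (which needs the integrability and $\mathbb{C}^{1,2}$-type regularity supplied by the hypotheses of the corresponding existence theorem), and confirming that Gaussianity of $Z$ enters only through that identity. A secondary point to check carefully is the $\mathcal{F}_t$-measurability of every coefficient, since this is exactly what lets the conditioning argument isolate \eqref{dP}.
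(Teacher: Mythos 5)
Your proposal is correct and takes essentially the same route as the paper: the paper obtains the scalar relation, then \eqref{dP} by conditioning on $\mathcal{F}_{t}$ and using competitiveness, and then \eqref{NRNcon} from $V\neq P_{t}$, all in the in-line computation immediately preceding the proposition (using the Gaussian vertical--Malliavin identity and $U'=\gamma U$ to evaluate the bracket term in \eqref{main}), and its formal proof then concludes via Propositions \ref{Bracket} and \ref{KV} exactly as you do.
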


\begin{proof}
By (\ref{NRNcon})%
\[
\frac{1}{K_{1}^{2}(t)}\frac{\mathrm{d}}{\mathrm{d}t}K_{1}(t)=\gamma
K_{2}(t)\sigma_{Z}^{2}(t),
\]
now by Proposition \ref{Bracket} and Proposition \ref{KV}
\begin{align*}
\mathcal{L}P_{t}  &  =\gamma K_{1}^{2}(t)K_{2}^{2}(t)\sigma_{Z}^{2}(t)
=\gamma\left(  \nabla_{Y}P_{t}\right)  ^{2}\sigma_{Z}^{2}(t).
\end{align*}

\end{proof}

\begin{remark}
Finally, according to \citeA{codifa19} and for the above functionals if the
horizon $\tau$ is random and independent of the rest of processes involved, in
an equilibrium situation we have
\[
\frac{\mathrm{d}}{\mathrm{d}t}\frac{\mathbb{P(}\tau>t)}{K_{1}(t)}=0,
\]
then
\[
\frac{\mathrm{d}}{\mathrm{d}t}K_{1}(t)=K_{1}(t)\frac{\mathrm{d}}{\mathrm{d}%
t}\mathbb{P(}\tau>t),
\]
and by Proposition \ref{Bracket}
\begin{align*}
\mathcal{L}P_{t}  &  =K_{2}(t)\frac{\mathrm{d}K_{1}(t)}{\mathrm{d}t}%
=K_{1}(t)K_{2}(t)\frac{\mathrm{d}}{\mathrm{d}t}\mathbb{P(}\tau>t)\\
&  =\bigtriangledown_{Y}P_{t}\frac{\mathrm{d}}{\mathrm{d}t}\mathbb{P(}\tau>t).
\end{align*}

\end{remark}

\section{Examples of equilibrium pricing rules}

Consider the following class of functionals%
\[
P_{t}=H(t,\xi_{t}),\text{ }t\geq0,\text{ \ \ \ }\xi_{t}:=\int_{0}^{t}%
\lambda(s,P_{s})\mathrm{d}Y_{s},
\]
where $\lambda\in C^{1,2}$ is a strictly positive function and $H\in C^{1,3}$
with $H(t,\cdot)$ strictly increasing for every $t\geq0.$

Then, by using the It\^o's formula and omitting the arguments in the
functions, we have
\[
\mathrm{d}P_{t}=\partial_{2}H\lambda\mathrm{d}Y_{t}+\left(  \partial
_{1}H+\frac{1}{2}\partial_{22}H\lambda^{2}\sigma_{Z}^{2}\right)  \mathrm{d}t.
\]
Furthermore, we have that
\begin{align*}
\mathcal{D}_{t}P_{t}  &  =\partial_{1}H+\partial_{2}H\mathcal{D}_{t}\xi
_{t}=\partial_{1}H-\frac{1}{2}\partial_{2}H\bigtriangledown_{Y}^{2}\xi
_{t}\sigma_{Z}^{2}\\
&  =\partial_{1}H-\frac{1}{2}\partial_{2}H\bigtriangledown_{Y}\lambda
\sigma_{Z}^{2}\\
&  =\partial_{1}H-\frac{1}{2}\partial_{2}H\partial_{2}\lambda\bigtriangledown
_{Y}P_{t}\sigma_{Z}^{2}%
\end{align*}
and
\begin{align*}
\bigtriangledown_{Y}^{2}P_{t}  &  =\partial_{2}\lambda\bigtriangledown
_{Y}P_{t}\partial_{2}H+\lambda\partial_{22}H\bigtriangledown_{Y}\xi_{t}\\
&  =\partial_{2}\lambda\bigtriangledown_{Y}P_{t}\partial_{2}H+\lambda
^{2}\partial_{22}H.
\end{align*}
Consequently,
\[
\mathcal{D}_{t}P_{t}+\frac{1}{2}\bigtriangledown_{Y}^{2}P_{t}\sigma_{Z}%
^{2}=\partial_{1}H+\frac{1}{2}\partial_{22}H\lambda^{2}\sigma_{Z}^{2}.
\]
Then, under the condition
\[
\mathcal{D}_{t}P_{t}+\frac{1}{2}\bigtriangledown_{Y}^{2}P_{t}\sigma_{Z}%
^{2}=0,
\]
we have that%
\[
\partial_{1}H+\frac{1}{2}\partial_{22}H\lambda^{2}\sigma_{Z}^{2}=0,
\]
and by Proposition \ref{factor},
\[
K(s,t)=\frac{\lambda(s,P_{s})}{\eta_{s}}\partial_{2}H(t,\xi_{t})\eta_{t},
\]
where
\[
\eta_{t}:=\mathcal{E}\left(  \int_{0}^{t}\partial_{2}H\partial_{2}%
\lambda\mathrm{d}Y_{s}\right)  .
\]
Therefore $K(s,t)=K_{1}(s)K_{2}(t)$, with
\[
K_{1}(s)=\frac{\lambda(s,P_{s})}{\eta_{s}},K_{2}(t)=\partial_{2}H(t,\xi
_{t})\eta_{t}.
\]
By using the It\^o formula we obtain that
\begin{align*}
\mathrm{d}\left(  \frac{1}{K_{1}(s)}\right)   &  =\eta_{s}\mathrm{d}\left(
\frac{1}{\lambda}\right)  +\frac{1}{\lambda}\mathrm{d}\eta_{s}+\mathrm{d}%
\left[  \eta,\frac{1}{\lambda}\right]  _{s}\\
&  =-\eta_{s}\frac{\partial_{1}\lambda}{\lambda^{2}}\mathrm{d}s-\eta_{s}%
\frac{\partial_{1}\lambda}{\lambda^{2}}\partial_{2}H\lambda\mathrm{d}%
Y_{s}-\frac{1}{2}\eta_{s}\frac{\lambda^{2}\partial_{22}\lambda-2\left(
\partial_{1}\lambda\right)  ^{2}\lambda}{\lambda^{4}}\left(  \partial
_{2}H\right)  ^{2}\lambda^{2}\sigma_{Z}^{2}\mathrm{d}s\\
&  +\frac{1}{\lambda}\eta_{s}\partial_{1}\lambda\partial_{2}H\mathrm{d}%
Y_{s}-\frac{\left(  \partial_{1}\lambda\partial_{2}H\right)  ^{2}}{\lambda
}\sigma_{Z}^{2}\eta_{s}\mathrm{d}s\\
&  =-\eta_{s}\frac{\partial_{1}\lambda}{\lambda^{2}}\mathrm{d}s-\frac{1}%
{2}\partial_{22}\lambda\left(  \partial_{2}H\right)  ^{2}\sigma_{Z}^{2}%
\eta_{s}\mathrm{d}s\\
&  =-\eta_{s}\left(  \frac{1}{2}\partial_{22}\lambda\left(  \partial
_{2}H\right)  ^{2}\sigma_{Z}^{2}+\frac{\partial_{1}\lambda}{\lambda^{2}%
}\right)  \mathrm{d}s,
\end{align*}
Then, we have that
\begin{align*}
\mathcal{L}P_{t}  &  =K_{2}(t)\frac{\mathrm{d}}{\mathrm{d}t}K_{1}%
(t)=K_{2}(t)K_{1}^{2}(t)\eta_{t}\left(  \frac{1}{2}\partial_{22}\lambda\left(
\partial_{2}H\right)  ^{2}\sigma_{Z}^{2}+\frac{\partial_{1}\lambda}%
{\lambda^{2}}\right) \\
&  =\partial_{2}H\lambda^{2}\left(  \frac{1}{2}\partial_{22}\lambda\left(
\partial_{2}H\right)  ^{2}\sigma_{Z}^{2}+\frac{\partial_{1}\lambda}%
{\lambda^{2}}\right) \\
&  =\partial_{2}H\left(  \partial_{1}\lambda+\frac{1}{2}\sigma_{Z}^{2}\left(
\lambda\partial_{2}H\right)  ^{2}\partial_{22}\lambda\right)
\end{align*}
Hence, we will have an equilibrium price rule, in the risk-neutral case, if
\begin{align*}
\partial_{1}H+\frac{1}{2}\partial_{22}H\lambda^{2}\sigma_{Z}^{2}  &  =0,\\
\partial_{1}\lambda+\frac{1}{2}\sigma_{Z}^{2}\left(  \lambda\partial
_{2}H\right)  ^{2}\partial_{22}\lambda &  =0.
\end{align*}
and in the non risk-neutral case, for the exponential risk aversion, if
\[
\partial_{1}H+\frac{1}{2}\partial_{22}H\lambda^{2}\sigma_{Z}^{2}=0,
\]
and
\begin{align*}
\mathcal{L}P_{t}  &  =\partial_{2}H\left(  \partial_{1}\lambda+\frac{1}%
{2}\sigma_{Z}^{2}\left(  \lambda\partial_{2}H\right)  ^{2}\partial_{22}%
\lambda\right)  =\gamma\left(  \nabla_{Y}P_{t}\right)  ^{2}\sigma_{Z}^{2}\\
&  =\gamma K_{1}^{2}(t)K_{2}^{2}(t)\sigma_{Z}^{2}=\gamma\left(  \partial
_{2}H\lambda\right)  ^{2}\sigma_{Z}^{2}%
\end{align*}
that is
\begin{equation}
\frac{\partial_{1}\lambda}{\lambda^{2}}+\frac{1}{2}\sigma_{Z}^{2}\left(
\partial_{2}H\right)  ^{2}\partial_{22}\lambda=\gamma\partial_{2}H\sigma
_{Z}^{2}. \label{2condnon}%
\end{equation}

We can identify some particular cases.

\noindent\textbf{For the risk-neutral case}

\begin{itemize}
\item $\ \lambda(t,x)=\lambda>0,$ and $H(t,x)$ harmonic with $H(,x)$ strictly
increasing$.$ Notice that in this case it is sufficient to require that
$H(t,x)$ is $\mathcal{C}^{1,2}$.

\item If we take $H(t,x)=x$ and $\lambda(t,x)=\lambda>0$, we have
\[
P_{t}=P_{0}+\lambda Y_{t},
\]
that corresponds to the Bachelier model for $Z$ Gaussian.

\item If $H(t,x)=x$ and $\lambda(t,x)=\lambda x,$ we have
\[
P_{t}=P_{0}e^{\lambda Y_{t}-\frac{1}{2}\lambda^{2}t}%
\]
that is the Black-Scholes model.
\end{itemize}

\noindent\textbf{For the non risk-neutral model}

\begin{itemize}
\item Note that $H(t,x)$ harmonic and $\lambda$ constant cannot be an
equilibrium. Therefore equilibrium prices cannot be a function of the spot
aggregate demand.

\item If we take $H(t,x)=x$ and $\lambda(t,x)=Cx(1-x)$, $\ $with $C>0,$ we
have that (\ref{2condnon}) becomes
\[
\frac{1}{2}\partial_{xx}\lambda=\gamma
\]
that is $\gamma=-C.$ This model will give prices in $\left(  0,1\right)  $ and
if $Y$ is a Brownian motion $B$ we have that%
\[
\mathrm{d}P_{t}=CP_{t}(1-P_{t})\mathrm{d}B_{t}%
\]
and this is the well-known Kimura model in population genetics, see \citeA{kimura64}.
\end{itemize}

\section{Examples of equilibrium models}

It is apparent that depending on the behaviour of the fundamental value and
the aggregate demand of the noise traders we can have an equilibrium with one
or another equilibrium pricing rule. If the aggregate demand $Z$ of the noise
traders is a Brownian motion with variance \ $\sigma_{Z}^{2}$, $Y$ $=X+Z$ will
be also an $\mathbb{F}$-Brownian motion with variance $\sigma_{Z}^{2}$,
because of Theorem \ref{T1} and the L\'evy Theorem. Consequently, we will have
an equilibrium if the pricing rule $P_{t}(Y_{\cdot t})$ is such that
$P_{T}(Y_{\cdot T})=V_{T}.$ Note also that the strategy $X$ will be just
obtained as the canonical decomposition of the $\mathbb{F}$-Brownian motion
$Y$ under the filtration $\mathbb{H}$.

Consider the case where $Z$ is a Brownian motion with variance $\sigma^{2}$
and $V_{t}\equiv V$. In such a situation we have a necessary and sufficient
condition for and equilibrium for both, the risk-neutral case and the
risk-adverse case under the exponential utility. Also in both cases the
equilibrium pricing rules give prices that are continuous diffusions:%
\[
\mathrm{d}P_{t}=\lambda(t,P_{t})\mathrm{d}Y_{t}%
\]
where $\mathrm{d}Y_{t}=\sigma\mathrm{d}W_{t}$ and $W$ is a standard Brownian
motion. In the risk-neutral case $\lambda(t,x)$ satisfies%
\[
\partial_{t}\lambda+\frac{1}{2}\lambda^{2}\sigma^{2}\partial_{xx}\lambda=0,
\]
and in the risk-adverse case
\[
\partial_{t}\lambda+\frac{1}{2}\lambda^{2}\sigma^{2}\partial_{xx}%
\lambda=\gamma\lambda^{2}\sigma^{2}.
\]
In any case the additional necessary and sufficient condition to have an
equilibrium model is to find a strategy such that $P_{T}=V$ and at the same
time $Y$ is certainly a Brownian motion with variance $\sigma^{2}.$ We have to
find $\alpha_{t}(V),$with $\alpha_{t}(x)$ $\mathcal{F}_{t}$-measurable,\ such
that the equation
\[
\mathrm{d}Y_{t}=\alpha_{t}(V)\mathrm{d}t+\mathrm{d}Z_{t},\text{ }0\leq t\leq
T,
\]
with $V=P_{T}$ and independent of $Z,$ has a strong solution. In order to do
so, we can look for certain $\alpha(t,x,Y_{\cdot t}),$ where $x$ is a fixed
value of $P_{T}$ and try to find a strong solution of
\[
\mathrm{d}Y_{t}=\alpha_{t}(x)\mathrm{d}t+\mathrm{d}Z_{t},\text{ }0\leq t\leq
T,
\]
later we can insert $V$ instead of $y$, but we need $Y$ to be a Brownian
motion with variance $\sigma^{2}$. Sufficient conditions to have a strong
solution are given, e.g., in Theorem 4.6, \citeA{lipshir01}. Then $\alpha
_{t}(x)$ has to be the drift in the canonical decomposition of $Y$ when
$Y_{T}$ and $Z_{\cdot t}$ are known at time $t.$ The following propositions
are useful to find $\alpha,$ here we assume that $\mathcal{F}_{t}=\bar{\sigma
}\left(  W_{\cdot t},t\geq0\right)  ,$ $\bar{\sigma}$ denotes the $\sigma
$-field corresponding to the usual augmentation of the natural filtration.

\begin{proposition}
Assume that for any bounded and measurable function $f$ there exists a
$\mathcal{B}[0,T])\otimes\mathcal{F}_{T}$-measurable process $\xi,$
independent of $f$, such that
\[
f(P_{T})=\mathbb{E(}f(P_{T})\mathbb{)+}\int_{0}^{T}\mathbb{E}\left(
f(P_{T})\xi_{t}|\mathcal{F}_{t}\right)  \mathrm{d}W_{t},
\]
with $\int_{0}^{T}\left\vert \xi_{t}\right\vert \mathrm{d}t<\infty$. Then
$W_{\cdot}-\int_{0}^{\cdot}\alpha_{t}(P_{T})\mathrm{d}t$ is an $\left(
\mathcal{F}_{t}\vee\sigma(P_{T})\right)  $-Brownian motion with
\[
\alpha_{t}(P_{T})=\mathbb{E}\left(  \xi_{t}|\mathcal{F}_{t}\vee\sigma
(P_{T})\right)
\]

\end{proposition}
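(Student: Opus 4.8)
The plan is to recognise this as an initial enlargement of filtration problem and to prove directly, via L\'evy's characterisation, that
\[
\tilde W_t:=W_t-\int_0^t\alpha_u(P_T)\mathrm{d}u
\]
is a Brownian motion for the enlarged filtration $\mathcal{G}_t:=\mathcal{F}_t\vee\sigma(P_T)$. Since $W$ is $\mathcal{F}$-adapted and $\alpha_u(P_T)=\mathbb{E}(\xi_u|\mathcal{G}_u)$ is $\mathcal{G}_u$-measurable, with $\int_0^T|\alpha_u(P_T)|\mathrm{d}u<\infty$ a.s.\ following from the integrability hypothesis on $\xi$ (a point I treat as routine bookkeeping), the process $\tilde W$ is continuous, $\mathcal{G}$-adapted and integrable, and $[\tilde W,\tilde W]_t=[W,W]_t=t$ because the added drift has bounded variation. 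Hence the whole statement reduces to proving that $\tilde W$ is a $\mathcal{G}$-martingale; L\'evy's theorem then upgrades this to the desired $\mathcal{G}$-Brownian motion.

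First I would reduce the martingale property to a testable identity. The sets $A_s\cap\{P_T\in B\}$ with $A_s\in\mathcal{F}_s$ form a $\pi$-system generating $\mathcal{G}_s$, so by a monotone class argument it suffices to show, for all $0\le s\le t\le T$, every bounded measurable $f$ and every bounded $\mathcal{F}_s$-measurable $G_s$, that
\[
\mathbb{E}\big[f(P_T)G_s(W_t-W_s)\big]=\mathbb{E}\Big[f(P_T)G_s\int_s^t\alpha_u(P_T)\mathrm{d}u\Big].
\]
For the right-hand side, note that for $u\ge s$ the variable $f(P_T)G_s$ is $\mathcal{G}_u$-measurable (since $\mathcal{F}_s\subseteq\mathcal{F}_u\subseteq\mathcal{G}_u$ and $f(P_T)$ is $\sigma(P_T)$-measurable). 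Pulling it inside the conditional expectation defining $\alpha_u$ and using the tower property gives the value $\int_s^t\mathbb{E}[f(P_T)G_s\xi_u]\mathrm{d}u=\mathbb{E}[f(P_T)G_s\int_s^t\xi_u\mathrm{d}u]$.

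For the left-hand side I would exploit the assumed representation. Writing $H_u:=\mathbb{E}(f(P_T)\xi_u|\mathcal{F}_u)$, the hypothesis states $f(P_T)=\mathbb{E}(f(P_T))+\int_0^T H_u\mathrm{d}W_u$; since $f$ is bounded, uniqueness of the predictable integrand identifies this with the $L^2$ martingale representation of the bounded martingale $\mathbb{E}(f(P_T)|\mathcal{F}_\cdot)$, so $H\in L^2$. Conditioning on $\mathcal{F}_s$ and writing $W_t-W_s=\int_0^T\mathbf{1}_{(s,t]}\mathrm{d}W_u$, the constant term and the $\mathcal{F}_s$-measurable part drop out by the martingale property, while It\^o's isometry turns the cross term into $\mathbb{E}(\int_s^t H_u\mathrm{d}u\,|\,\mathcal{F}_s)$. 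Another application of the tower property then yields $\mathbb{E}[f(P_T)G_s(W_t-W_s)]=\mathbb{E}[f(P_T)G_s\int_s^t\xi_u\mathrm{d}u]$, which matches the right-hand side computed above. This establishes the martingale property and hence the claim.

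The main obstacle I expect is precisely the left-hand computation: justifying the It\^o-isometry step (whose validity rests on $H\in L^2$, which holds exactly because $f$ is bounded) and then commuting the conditional expectations so that both sides meet at the common expression $\mathbb{E}[f(P_T)G_s\int_s^t\xi_u\mathrm{d}u]$. The structural point to keep in mind is that $\xi$ is the \emph{same} process for every $f$; this is what makes the drift $\alpha_u(P_T)=\mathbb{E}(\xi_u|\mathcal{G}_u)$ a single well-defined process and lets the identity hold simultaneously for all test functions $f$.
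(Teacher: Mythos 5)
Your proof is correct and takes essentially the same route as the paper's: both reduce the claim to verifying $\mathbb{E}\left[ f(P_T)G_s(W_t-W_s)\right] =\mathbb{E}\left[ f(P_T)G_s\int_s^t\alpha_u(P_T)\,\mathrm{d}u\right]$ for $\mathcal{F}_s$-measurable test variables, computing the left side from the assumed representation via the It\^o isometry and the right side via the tower property applied to $\alpha_u(P_T)=\mathbb{E}(\xi_u|\mathcal{F}_u\vee\sigma(P_T))$. You merely spell out what the paper leaves implicit: the $\pi$-system/monotone-class reduction, the $L^2$ integrability of the integrand, and the final upgrade from $\mathcal{G}$-martingale to $\mathcal{G}$-Brownian motion via L\'evy's characterisation.
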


\begin{proof}
Let $f$ \ be a measurable and bounded function and $A\in\mathcal{F}_{s},$ with
$s\leq t.$ Then
\begin{align*}
\mathbb{E}\left(  \left(  W_{t}-W_{s}\right)  \mathbf{1}_{A}f(P_{T})\right)
&  =\mathbb{E}\left(  \mathbf{1}_{A}\int_{s}^{t}\mathbb{E}\left(  f(P_{T}%
)\xi_{u}|\mathcal{F}_{u}\right)  \mathrm{d}u\right) \\
&  =\mathbb{E}\left(  \mathbf{1}_{A}f(P_{T})\int_{s}^{t}\mathbb{E}\left(
\xi_{u}|\mathcal{F}_{u}\vee\sigma(P_{T})\right)  \mathrm{d}u\right)  .
\end{align*}

\end{proof}

\begin{proposition}
Suppose that
\[
\mathrm{d}\mathbb{P}_{P_{T}|\mathcal{F}_{t}}(x|\mathcal{F}_{t})=L_{T}%
(x;W_{\cdot t})\mu\left(  \mathrm{d}x\right)
\]
is a regular version of the conditional probability of $P_{T}$ given
$\mathcal{F}_{t}$, $\mu$ being a reference measure and such that
\begin{align*}
&  \left.  i)\text{ }L_{T}(x;W_{\cdot t})>0\text{ for all }\left(
x,\omega\right)  \text{ }\mu\otimes\mathbb{P}\text{-a.s.},\right. \\
&  \left.  ii)\text{ }\nabla_{W}\int_{\mathbb{R}}f(x)L_{T}(x;W_{\cdot t}%
)\mu\left(  \mathrm{d}x\right)  =\int_{\mathbb{R}}f(x)\nabla_{W}%
L_{T}(x;W_{\cdot t})\mu\left(  \mathrm{d}x\right)  \right.  .
\end{align*}
Then $W_{\cdot}-\int_{0}^{\cdot}\alpha_{t}(P_{T})\mathrm{d}t$ is an $\left(
\mathcal{F}_{t}\vee\sigma(P_{T})\right)  $-Brownian motion with $\alpha
_{t}(x)=\nabla_{W}\log L_{T}(x;W_{\cdot t})$, provided that $\log
L_{T}(x;W_{\cdot t})\in\mathbb{C}^{1}.$
\end{proposition}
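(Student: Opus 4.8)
The plan is to verify directly, via L\'evy's characterisation, that $\tilde W_t := W_t - \int_0^t \alpha_s(P_T)\,\mathrm{d}s$ is a Brownian motion for the initially enlarged filtration $\mathcal{G}_t := \mathcal{F}_t \vee \sigma(P_T)$. Since $\int_0^\cdot \alpha_s(P_T)\,\mathrm{d}s$ is continuous and of finite variation, $[\tilde W,\tilde W]_t = [W,W]_t = t$, so the whole content is to show that $\tilde W$ is a $\mathcal{G}$-martingale. Because $\mathcal{G}_s$ is generated by the $\pi$-system of sets $A\cap\{P_T\in C\}$ with $A\in\mathcal{F}_s$ and $C$ Borel, a monotone-class argument reduces matters to proving, for every bounded measurable $f$, every $A\in\mathcal{F}_s$ and $s\le t$, the cancellation
\[
\mathbb{E}\left[\Big(W_t - W_s - \int_s^t \alpha_u(P_T)\,\mathrm{d}u\Big)\mathbf{1}_A\, f(P_T)\right]=0.
\]

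First I would produce a Brownian martingale representation of $f(P_T)$. Writing $M_t^f := \mathbb{E}(f(P_T)\mid\mathcal{F}_t) = \int_{\mathbb{R}} f(x)\,L_T(x;W_{\cdot t})\,\mu(\mathrm{d}x)$ by means of the regular conditional law, $M^f$ is a bounded $\mathcal{F}$-martingale on the Brownian filtration $\mathcal{F}_t=\bar\sigma(W_{\cdot t})$. Applying the functional It\^o formula (Theorem \ref{FIT}) and using that the horizontal and second-order terms of a martingale must cancel yields $\mathrm{d}M_t^f = \nabla_W M_t^f\,\mathrm{d}W_t$. Here condition $(ii)$ lets me differentiate under the $\mu$-integral, and since $L_T>0$ by $(i)$ and $\log L_T\in\mathbb{C}^1$ I may write $\nabla_W L_T = L_T\,\nabla_W\log L_T = L_T\,\alpha$; hence
\[
\nabla_W M_t^f = \int_{\mathbb{R}} f(x)\,\alpha_t(x)\,L_T(x;W_{\cdot t})\,\mu(\mathrm{d}x) = \mathbb{E}\big(f(P_T)\,\alpha_t(P_T)\mid\mathcal{F}_t\big),
\]
so that $f(P_T)=\mathbb{E}(f(P_T)) + \int_0^T \mathbb{E}\big(f(P_T)\,\alpha_t(P_T)\mid\mathcal{F}_t\big)\,\mathrm{d}W_t$. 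This is exactly the representation hypothesised in the previous proposition, with $\xi_t=\alpha_t(P_T)$; since $\alpha_t(P_T)$ is already $\mathcal{G}_t$-measurable one has $\mathbb{E}(\xi_t\mid\mathcal{G}_t)=\alpha_t(P_T)$, and the conclusion is immediate from that proposition.

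For completeness I would also give the short direct verification. Conditioning on $\mathcal{F}_t$ and using that $W_t-W_s$ and $\mathbf{1}_A$ are $\mathcal{F}_t$-measurable gives $\mathbb{E}[(W_t-W_s)\mathbf{1}_A f(P_T)] = \mathbb{E}[(W_t-W_s)\mathbf{1}_A M_t^f]$. Expanding $M_t^f = M_s^f + \int_s^t \nabla_W M_u^f\,\mathrm{d}W_u$, the $M_s^f$ part vanishes because $\mathbb{E}(W_t-W_s\mid\mathcal{F}_s)=0$, while the It\^o isometry against $W_t-W_s=\int_s^t\mathrm{d}W_u$ turns the remaining part into $\mathbb{E}[\mathbf{1}_A\int_s^t \nabla_W M_u^f\,\mathrm{d}u]$. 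Inserting $\nabla_W M_u^f = \mathbb{E}(f(P_T)\alpha_u(P_T)\mid\mathcal{F}_u)$, then using the tower property with $A\in\mathcal{F}_s\subseteq\mathcal{F}_u$ and Fubini, produces $\mathbb{E}[\mathbf{1}_A f(P_T)\int_s^t\alpha_u(P_T)\,\mathrm{d}u]$, which is precisely the desired identity.

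The delicate step is the representation $\mathrm{d}M_t^f=\nabla_W M_t^f\,\mathrm{d}W_t$ together with the identification $\nabla_W M_t^f=\int_{\mathbb{R}} f(x)\,\nabla_W L_T(x;W_{\cdot t})\,\mu(\mathrm{d}x)$: this is where the standing regularity of the density enters, namely the interchange in $(ii)$, the strict positivity $(i)$ that makes $\nabla_W\log L_T$ well defined, and enough smoothness and integrability of $L_T(\cdot\,;W_{\cdot t})$ to place $M^f$ in the domain of the functional It\^o formula and to justify the isometry computation (in particular $\int_0^T|\alpha_t(P_T)|\,\mathrm{d}t<\infty$ so that $\tilde W$ is integrable). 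Everything else is tower/Fubini bookkeeping, and L\'evy's theorem then upgrades the $\mathcal{G}$-martingale $\tilde W$ with $[\tilde W,\tilde W]_t=t$ to a $\mathcal{G}$-Brownian motion.
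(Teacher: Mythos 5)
Your proof is correct and follows essentially the same route as the paper: the heart of both arguments is the identity $\nabla_W\mathbb{E}\left(f(P_T)\mid\mathcal{F}_t\right)=\mathbb{E}\left(f(P_T)\,\nabla_W\log L_T(P_T;W_{\cdot t})\mid\mathcal{F}_t\right)$, obtained from the interchange in $(ii)$ and the positivity in $(i)$, which exhibits the representation required by the preceding proposition with $\xi_t=\alpha_t(P_T)$. The paper's proof consists of exactly this computation (leaving the martingale representation step and the appeal to the previous proposition implicit), so your additional direct verification via L\'evy's characterisation is simply a spelled-out version of what the paper delegates to that proposition.
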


\begin{proof}
Let $f$ \ be a measurable and bounded function
\begin{align*}
\nabla_{W}\mathbb{E}\left(  f(P_{T})|\mathcal{F}_{t}\right)   &  =\nabla
_{W}\int_{\mathbb{R}}f(x)L_{T}(x;W_{\cdot t})\mu\left(  \mathrm{d}x\right) \\
&  =\int_{\mathbb{R}}f(x)\nabla_{W}L_{T}(x;W_{\cdot t})\mu\left(
\mathrm{d}x\right) \\
&  =\int_{\mathbb{R}}f(x)\nabla_{W}\log L_{T}(x;W_{\cdot t})L_{T}(x;W_{\cdot
t})\mu\left(  \mathrm{d}x\right) \\
&  =\mathbb{E}\left(  f(P_{T})\nabla_{W}\log L_{T}(P_{T};W_{\cdot
t})|\mathcal{F}_{t}\right)  .
\end{align*}

\end{proof}

\begin{example}
Assume that $P_{t}=P_{0}+\sigma W_{t}.$ Then $P_{T}|\mathcal{F}_{t}\sim
$N$\left(  P_{0}+\sigma W_{t},\sigma^{2}(T-t)\right)  $, that is
\[
\mathrm{d}\mathbb{P}_{P_{T}|\mathcal{F}_{t}}(x|\mathcal{F}_{t})=\frac{1}%
{\sqrt{2\pi\sigma^{2}(T-t)}}\exp\left\{  -\frac{1}{2\sigma^{2}(T-t)}\left(
x-P_{0}-\sigma W_{t}\right)  ^{2}\right\}  \mathrm{d}x,
\]
then
\[
\alpha_{t}(x)=\frac{\sigma\left(  x-P_{0}-\sigma W_{t}\right)  }{\sigma
^{2}(T-t)},
\]
that is
\[
\alpha_{t}(P_{T})=\frac{W_{T}-W_{t}}{T-t}.
\]

\end{example}

\begin{example}
Assume that $P_{t}=P_{0}+\int_{0}^{t}G(u,P_{u})\mathrm{d}W_{u}.$
$G\in\mathcal{C}^{1,2},\mathbb{E}\left(  \int_{0}^{T}G^{2}(t,P_{t}%
)\mathrm{d}t\right)  <\infty.$ Let $p_{s,t}(x,y)$ the transition density
corresponding to the Markov process $P$. Then according to the previous
proposition
\begin{align*}
\alpha_{t}(y)  &  =\nabla_{W}\log p_{t,T}(P_{t},y)\\
&  =\partial_{1}\log p_{t,T}(P_{t},y)\nabla_{W}P_{t}\\
&  =\partial_{1}\log p_{t,T}(P_{t},y)G(t,P_{t}).
\end{align*}
For instance, we can consider the simple case where $P_{t}=P_{0}+\int_{0}%
^{t}\sigma P_{u}\mathrm{d}W_{u}$, then, for $s\leq t$, $P_{t}=P_{s}%
\exp\left\{  \sigma\left(  W_{t}-W_{s}\right)  -\frac{1}{2}\sigma
^{2}(t-s)\right\}  ,$ and
\[
p_{s,t}(x;y)=\frac{1}{\sqrt{2\pi\sigma^{2}(T-t)}}\exp\left\{  -\frac
{1}{2\sigma^{2}(T-t)}\left(  \log y-\log x-\frac{1}{2}\sigma^{2}(t-s)\right)
^{2}\right\}  \frac{1}{y},
\]
consequently%
\[
\alpha_{t}(y)=\partial_{1}\log p_{t,T}(P_{t},y)G(t,P_{t})=\sigma\frac{\log
y-\log P_{t}-\frac{1}{2}\sigma^{2}(t-s)}{\sigma^{2}(T-t)},
\]
that is
\[
\alpha_{t}(P_{T})=\frac{W_{T}-W_{t}}{T-t}.
\]
Notice that in this case we obtain the same result as in the previous example.
This is not surprising since in both cases to know $P_{T}$ is the same as to
know $W_{T}$ since $P_{T}$ is an increasing function of $W_{T}.$ Obviously
this will not be the case for a general diffusion. The simplest case where
this\ does not happen is the equilibrium price model
\[
P_{t}=P_{0}+\int_{0}^{t}\lambda(s)\mathrm{d}W_{s},
\]
with
\[
\partial_{t}\lambda=\gamma\lambda^{2}\sigma^{2}.
\]
arising in the non-risk-neutral model. Now $P_{T}|P_{t}\sim N\left(
P_{0},\int_{0}^{t}\lambda^{2}(s)\mathrm{d}s\right)  $, then%
\begin{align*}
\alpha_{t}(x)  &  =\nabla_{W}\log L_{T}(x;W_{\cdot t})\\
&  =\frac{\lambda(t)\left(  x-P_{t}\right)  }{\int_{t}^{T}\lambda
^{2}(s)\mathrm{d}s}.
\end{align*}
If we consider the Kimura model, with risk-aversion parameter $\gamma<0$,
\[
P_{t}=P_{0}-\gamma\int_{0}^{t}P_{t}(1-P_{t})\mathrm{d}W_{t},\text{ }%
\]
then the transition density is given by \cite{kimura64}
\[
p_{t,T}(P_{t},x)=\frac{1}{\sqrt{2\pi\gamma^{2}(T-t)}}\frac{\sqrt{P_{t}%
(1-P_{t})}}{\left(  \sqrt{x(1-x)}\right)  ^{3}}\exp\left\{  -\frac{\gamma^{2}%
}{8}(T-t)-\frac{\left(  \log\frac{x(1-P_{t})}{(1-x)P_{t}}\right)  ^{2}%
}{2\gamma^{2}(T-t)}\right\}  ,
\]
and we have that
\begin{align*}
\alpha_{t}(y)  &  =\nabla_{W}\log L_{T}(x;W_{\cdot t})\\
&  =\frac{1}{2}(1-2P_{t})+\frac{\log\frac{x(1-P_{t})}{(1-x)P_{t}}}{\gamma
^{2}(T-t)}.
\end{align*}

\end{example}

\begin{example}
We can consider the case where the privilege information is the time, say
$\tau$, where a Brownian motion reaches for the first time a level $a.$ Then,
assume that
\[
P_{T}=h(T\wedge\tau)
\]
for a measurable and bounded function $h$. Now we have that
\[
P_{T}=P_{0}+\int_{0}^{T}\nabla_{W}\mathbb{E}\left(  h(T\wedge\tau
)|\mathcal{F}_{t}\right)  \mathrm{d}W_{t},
\]
then since
\[
f_{\tau}(u|\mathcal{F}_{t})=\frac{W_{t}-a}{\sqrt{2\pi(u-t)^{3}}}\exp\left\{
-\frac{\left(  W_{t}-a\right)  ^{2}}{2(u-t)}\right\}  \mathbf{1}_{\{\tau
>t\}},
\]
we obtain that
\[
\alpha_{t}(u)=\nabla_{W}\log f_{\tau}(u|\mathcal{F}_{t})=\left(  \frac
{1}{W_{t}-a}-\frac{W_{t}-a}{u-t}\right)  \mathbf{1}_{\{\tau>t\}}.
\]

\end{example}

\bigskip

\textbf{Acknowledgement}: This paper is devoted to the memory of our beloved
colleague Jos\'e Fajardo Barbach\'an, who passed away before we could complete
together this work.

\bibliographystyle{apacite}
\bibliography{referapa}

\end{document}